\documentclass[aps,prfluids,preprint,superscriptaddress,amsmath,amssymb]{revtex4-2}

\usepackage{graphicx}
\usepackage{amsthm}
\usepackage[hidelinks]{hyperref}
\usepackage[capitalise,nameinlink]{cleveref}
\usepackage{booktabs}
\usepackage{subcaption}
\usepackage{multirow}
\usepackage{adjustbox}
\usepackage{url}
\usepackage{placeins}
\usepackage{xcolor}

\graphicspath{{./}{figs/}}
\DeclareMathOperator{\diag}{diag}
\theoremstyle{plain}
\newtheorem{lemma}{Lemma}

\newtheorem{corollary}{Corollary}
\theoremstyle{remark}
\newtheorem{remark}{Remark}
\RequirePackage[normalem]{ulem} 
\RequirePackage{xcolor} 
\providecommand{\DIFaddbegin}{} 
\providecommand{\DIFaddend}{} 
\providecommand{\DIFdelbegin}{} 
\providecommand{\DIFdelend}{} 
\providecommand{\DIFaddbeginFL}{} 
\providecommand{\DIFaddendFL}{} 
\providecommand{\DIFdelbeginFL}{} 
\providecommand{\DIFdelendFL}{} 
\newcommand{\blue}[1]{\texorpdfstring{{\protect\color{blue}#1}}{#1}}

\newcommand{\DIFscaledelfig}{0.5}
\RequirePackage{settobox} 
\RequirePackage{letltxmacro} 
\newsavebox{\DIFdelgraphicsbox} 
\newlength{\DIFdelgraphicswidth} 
\newlength{\DIFdelgraphicsheight} 
\LetLtxMacro{\DIFOincludegraphics}{\includegraphics} 
\newcommand{\DIFaddincludegraphics}[2][]{{\color{blue}\fbox{\DIFOincludegraphics[#1]{#2}}}} 
\newcommand{\DIFdelincludegraphics}[2][]{
\sbox{\DIFdelgraphicsbox}{\DIFOincludegraphics[#1]{#2}}
\settoboxwidth{\DIFdelgraphicswidth}{\DIFdelgraphicsbox} 
\settoboxtotalheight{\DIFdelgraphicsheight}{\DIFdelgraphicsbox} 
\scalebox{\DIFscaledelfig}{
\parbox[b]{\DIFdelgraphicswidth}{\usebox{\DIFdelgraphicsbox}\\[-\baselineskip] \rule{\DIFdelgraphicswidth}{0em}}\llap{\resizebox{\DIFdelgraphicswidth}{\DIFdelgraphicsheight}{
\setlength{\unitlength}{\DIFdelgraphicswidth}
\begin{picture}(1,1)
\thicklines\linethickness{2pt} 
{\color[rgb]{1,0,0}\put(0,0){\framebox(1,1){}}}
{\color[rgb]{1,0,0}\put(0,0){\line( 1,1){1}}}
{\color[rgb]{1,0,0}\put(0,1){\line(1,-1){1}}}
\end{picture}
}\hspace*{3pt}}} 
} 
\LetLtxMacro{\DIFOaddbegin}{\DIFaddbegin} 
\LetLtxMacro{\DIFOaddend}{\DIFaddend} 
\LetLtxMacro{\DIFOdelbegin}{\DIFdelbegin} 
\LetLtxMacro{\DIFOdelend}{\DIFdelend} 
\DeclareRobustCommand{\DIFaddbegin}{\DIFOaddbegin \let\includegraphics\DIFaddincludegraphics} 
\DeclareRobustCommand{\DIFaddend}{\DIFOaddend \let\includegraphics\DIFOincludegraphics} 
\DeclareRobustCommand{\DIFdelbegin}{\DIFOdelbegin \let\includegraphics\DIFdelincludegraphics} 
\DeclareRobustCommand{\DIFdelend}{\DIFOaddend \let\includegraphics\DIFOincludegraphics} 
\LetLtxMacro{\DIFOaddbeginFL}{\DIFaddbeginFL} 
\LetLtxMacro{\DIFOaddendFL}{\DIFaddendFL} 
\LetLtxMacro{\DIFOdelbeginFL}{\DIFdelbeginFL} 
\LetLtxMacro{\DIFOdelendFL}{\DIFdelendFL} 
\DeclareRobustCommand{\DIFaddbeginFL}{\DIFOaddbeginFL \let\includegraphics\DIFaddincludegraphics} 
\DeclareRobustCommand{\DIFaddendFL}{\DIFOaddendFL \let\includegraphics\DIFOincludegraphics} 
\DeclareRobustCommand{\DIFdelbeginFL}{\DIFOdelbeginFL \let\includegraphics\DIFdelincludegraphics} 
\DeclareRobustCommand{\DIFdelendFL}{\DIFOaddendFL \let\includegraphics\DIFOincludegraphics} 
\makeatletter 
\let\sout@orig\sout 
\renewcommand{\sout}[1]{\ifmmode\text{\sout@orig{\ensuremath{#1}}}\else\sout@orig{#1}\fi} 
\makeatother 
\RequirePackage{listings} 
\lstdefinelanguage{DIFcode}{ 
  moredelim=[il][\color{white}\tiny]{\%DIF\ <\ }, 
  moredelim=[il][\sffamily\bfseries]{\%DIF\ >\ } 
} 
\lstdefinestyle{DIFverbatimstyle}{ 
	language=DIFcode, 
	basicstyle=\ttfamily, 
	columns=fullflexible, 
	keepspaces=true 
} 
\lstnewenvironment{DIFverbatim}{\lstset{style=DIFverbatimstyle}}{} 
\lstnewenvironment{DIFverbatim*}{\lstset{style=DIFverbatimstyle,showspaces=true}}{} 
\begin{document}

\title{Deep Koopman Sensing}

\author{Nithin Somasekharan}
\email{somasn@rpi.edu}
\affiliation{Department of Mechanical, Aerospace, and Nuclear Engineering, Rensselaer Polytechnic Institute, 110 8th St, Troy, New York 12180, USA}

\author{Yadi Cao}
\email{yadi.cao@ucf.edu}
\affiliation{Department of Computer Science, Institute of Artificial Intelligence, University of Central Florida, 4356 Scorpius Street, Orlando, Florida 32816, USA}

\author{Shaowu Pan}
\email{pans2@rpi.edu}
\thanks{Corresponding author.}
\affiliation{Department of Mechanical, Aerospace, and Nuclear Engineering, Rensselaer Polytechnic Institute, 110 8th St, Troy, New York 12180, USA}
\affiliation{Scientific Computation Research Center, Rensselaer Polytechnic Institute, 110 8th St, Troy, New York 12180, USA}

\begin{abstract}
Real-time reconstruction of fluid flows from sparse sensor measurements is important for both physical understanding and flow control. When first-principles models are too expensive for online data assimilation (DA), learned reduced-order models provide an efficient alternative, but are commonly optimized for forward prediction rather than state estimation. We propose Deep Koopman Sensing, a data-driven reduced-order DA framework that combines a nonlinear autoencoder with parameter-conditioned linear latent dynamics approximating the Koopman operator. We compare the proposed model with parametric dynamic mode decomposition (pDMD), a multilayer perceptron (MLP), and xLSTM across four benchmarks: 1D viscous Burgers, 2D flow past a cylinder, 2D dambreak, and 3D flow past a sphere. Our results reveal a marked distinction between forecasting and sensing: open-loop accuracy does not reliably predict assimilation performance, while Deep Koopman Sensing achieves the lowest assimilation error across all four benchmarks. More importantly, with an extended Kalman filter, incorporating sensor measurements improves the estimates of both linear latent models across all four benchmarks, whereas it degrades the nonlinear models, despite their strong open-loop performance. With ensemble filtering, the nonlinear models are no longer degraded by assimilation, while the Koopman model still attains the lowest assimilation error. These results show that latent dynamics should be designed for the downstream estimation task rather than selected solely for forecast accuracy, and demonstrate Koopman-based reduced-order modeling as an effective approach for real-time flow reconstruction from sparse, streaming measurements.

\end{abstract}

\maketitle
\section{Introduction}\label{sec:intro}

Inferring unsteady flow fields from a small number of noisy measurements is a central problem in flow sensing and control. The state typically consists of a spatially discretized solution of a nonlinear partial differential equation, making both its evolution and its estimation high-dimensional problems. These high-dimensional nonlinear dynamics make real-time flow field inference difficult~\cite{brunton2020machine,lino2023current,brunton2015closed}. The challenge is not only to maintain accurate predictions but also to update them rapidly as new observations arrive.
Data assimilation (DA) addresses this by combining a known dynamical model with noisy observations to estimate the system state~\cite{carrassi2018data}. 
For linear-Gaussian systems, the Kalman filter is optimal, but direct application to fluid flows is usually infeasible because the state dimension is too large and the full dynamics are strongly nonlinear and computationally expensive~\cite{kalman1960new,evensen2009data,law2015data}. Reduced-order models therefore play a dual role: they lower the cost of forecasting and define the state space in which filtering is carried out \cite{benner2015survey}. In practice, the quality of the reduced dynamics matters as much for the DA update as it does for open-loop rollout.
Classical reduced-order modeling in fluids is based on linear subspaces, most notably proper orthogonal decomposition (POD) \cite{lumley1967structure,sirovich1987turbulence,holmes2012turbulence}. These methods are effective when the dominant structures are well represented by a low-dimensional linear basis, but transport-dominated and strongly transient flows are harder to compress in this way \cite{peherstorfer2022breaking}. This has motivated nonlinear latent representations learned by autoencoders and related neural architectures \cite{lee2020model,vlachas2018data,yeung2019learning}. Coupled with nonlinear predictors such as MLPs and recurrent networks, these models can achieve strong open-loop forecasting performance on complex dynamics \cite{hochreiter1997long,vlachas2018data}.

The challenge is that good rollout accuracy does not automatically imply good data-assimilation performance. Once a model is placed inside a data-assimilation framework (e.g., a Kalman filter), the filter interacts not only with the forecast itself but also with the local linearization of the forecast map and the resulting uncertainty propagation.
Koopman operator theory offers an elegant solution. It seeks a representation in which nonlinear dynamics evolve linearly in a lifted coordinate system \cite{koopman1931hamiltonian,mezic2005spectral,rowley2009spectral}. Dynamic Mode Decomposition (DMD) and its extensions provide data-driven approximations of this viewpoint \cite{schmid2010dynamic,tu2014dynamic,jovanovic2014sparsity,williams2015data}. More recent deep Koopman models combine nonlinear encoders and decoders with approximately linear latent evolution, thereby retaining expressive state representations while imposing structure on the dynamics \cite{lusch2018deep,takeishi2017learning,yeung2019learning}.
Koopman and DMD ideas have already shown promise for flow-field estimation from sparse measurements. DMD-Kalman approaches have been used for unsteady flow estimation around actuated airfoils, and Koopman-linear estimators have been proposed for wind-turbine wake reconstruction \cite{gomez2019data,chen2022dynamic}. Information-based sensor placement has likewise been coupled with data-driven linear estimators for unsteady flows \cite{graff2023information}. These studies suggest that linear latent evolution may be especially attractive when sparse sensing and filtering are part of the task. Beyond discrete-time Koopman and DMD formulations, continuous-time latent dynamics have been explored through neural ordinary differential equations \cite{chen2018neural} and latent ODE models for irregularly sampled time series \cite{rubanova2019latent}. More directly relevant to the sensing problem, Peyron et al.\ \cite{peyron2021latent} demonstrated that data assimilation can be performed entirely in the latent space of a convolutional autoencoder, achieving competitive reconstruction at reduced cost. However, none of these latent-space DA approaches have systematically compared how the choice between linear and nonlinear latent dynamics affects assimilation quality under sparse, noisy observations---precisely the question this work addresses. At the same time, classical linear stochastic estimation highlights an important baseline: static linear maps can reconstruct flow fields from sparse measurements, but they do not propagate state uncertainty or temporal memory in the way a dynamical filter does \cite{adrian1994stochastic}.

This work addresses a task-driven question motivated by sparse sensing: when the downstream objective is data assimilation, which choice of latent dynamics model yields the most effective state reconstruction? We study this empirically using an encoder-decoder backbone across four benchmarks: 1D viscous Burgers, 2D flow past a cylinder, 2D dambreak, and 3D flow past a sphere, and we compare a parameter-conditioned Koopman latent model against three alternatives: parametric DMD, an MLP latent model, and an xLSTM latent model. Our contributions are threefold. (i) We introduce Deep Koopman Sensing, a unified latent-space formulation that combines nonlinear representation learning, learned latent dynamics, and sparse sensing. (ii) We demonstrate that open-loop rollout accuracy is not a reliable surrogate for sensing performance: models that are competitive in forecasting can exhibit markedly worse behavior once filtering closes the loop. (iii) Across the primary experiments and additional ablation studies, Koopman latent dynamics provide the most robust and consistently low assimilation error under sparse and noisy observations. An overview of the training and sensing workflow is shown in \Cref{fig:main_fig}.

\begin{figure}[!htbp]
    \centering
    \includegraphics[width=0.97\linewidth]{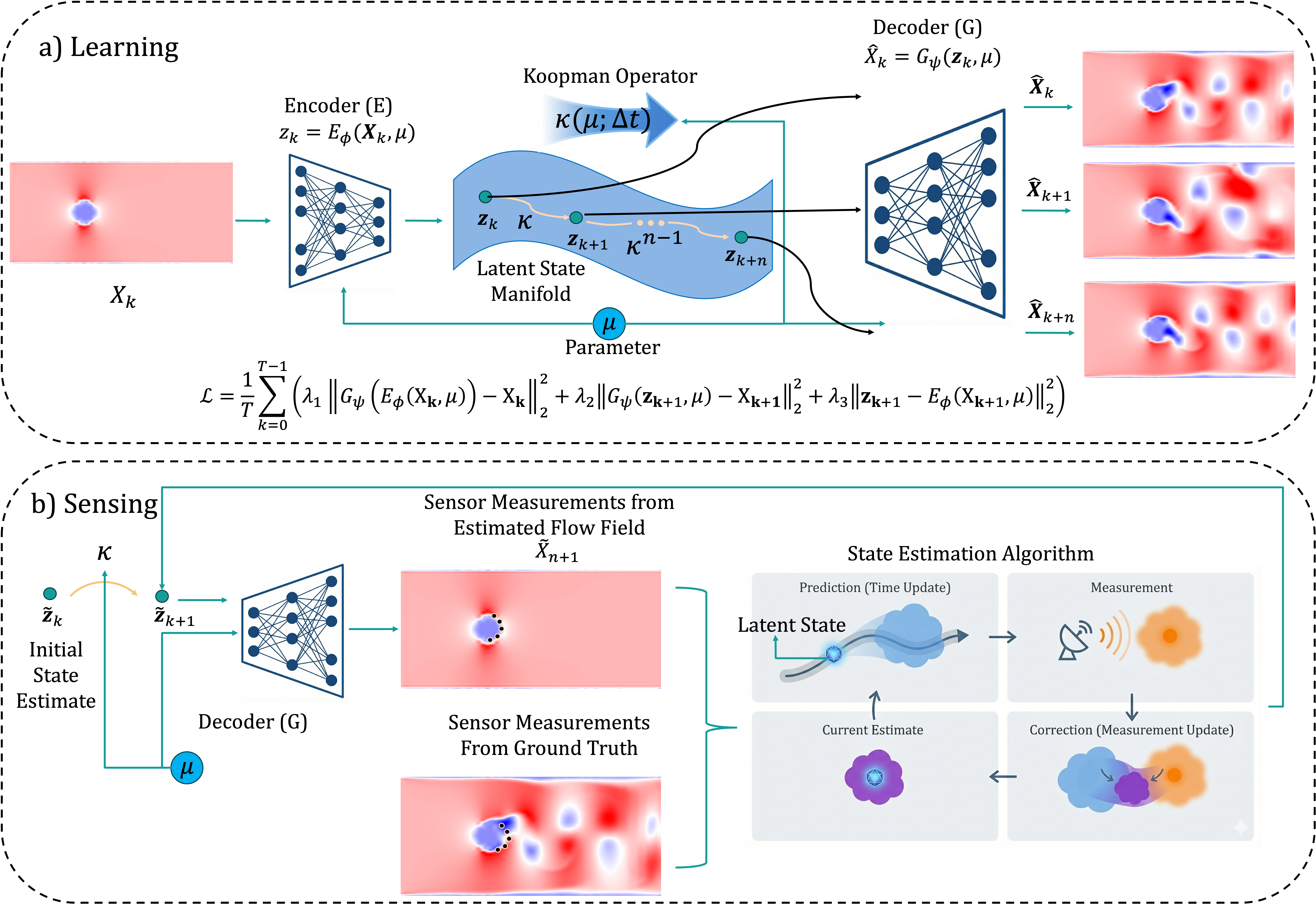}
    \caption{Overview of Deep Koopman Sensing. (a) Learning phase: joint training of the encoder, parameter-conditioned latent dynamics, and decoder. (b) Sensing phase: latent-space sensing and state estimation, where sparse measurements are compared against decoded sensor values and used to update the latent state.}
    \label{fig:main_fig}
\end{figure}
\section{Deep Koopman Sensing}\label{sec:method}

We seek a reduced-order representation that supports both forecasting and online correction from sparse observations. Let $\mathbf{x}_k \in \mathbb{R}^{N}$ denote the full state at time $t_k$, $\mathbf{z}_k \in \mathbb{R}^{r}$ the latent state with $r \ll N$, $\mu\in\mathbb{R}$ the physical parameter, and $\mathbf{y}_k \in \mathbb{R}^{m}$ the vector of sensor measurements. The learned neural baselines (Koopman, MLP, and xLSTM) share the same encoder-decoder backbone, whereas pDMD is included as a separate POD/DMD reduced-order baseline.

\subsection{Problem formulation}\label{subsec:problem_formulation}

The full state is generated by an autonomous, parameterized dynamical system arising from the spatial discretization of a partial differential equation. Let $\mathbf{u}(\cdot,t)$ denote the continuous flow state, governed by
\begin{equation}
    \partial_t \mathbf{u}=\mathcal N(\mathbf{u};\mu)
    \label{eq:governing_pde}
\end{equation}
on a spatial domain with prescribed boundary and initial conditions, where $\mathcal N$ is a nonlinear spatial operator and $\mu$ is a physical parameter such as a Reynolds number or a kinematic viscosity. Spatial discretization on $N$ degrees of freedom yields the autonomous system
\begin{equation}
    \frac{d\mathbf{x}}{dt}=\mathbf{F}(\mathbf{x};\mu),
    \qquad \mathbf{x}(t)\in\mathbb{R}^{N},
    \label{eq:semi_discrete}
\end{equation}
and uniform sampling with time step $\Delta t$ defines the discrete flow map
\begin{equation}
    \mathbf{x}_{k+1}=\Phi_{\Delta t}(\mathbf{x}_k;\mu),
    \qquad \mathbf{x}_k:=\mathbf{x}(k\Delta t),
    \label{eq:flow_map}
\end{equation}
which is deterministic for fixed $\mu$. The four benchmark problems of \cref{sec:data} instantiate \cref{eq:governing_pde} with the 1D viscous Burgers equation, the incompressible Navier--Stokes equations for the flows past a cylinder and past a sphere, and the incompressible two-phase Navier--Stokes equations for the dambreak.

Measurements are sparse, noisy point samples of the state,
\begin{equation}
    \mathbf{y}_k=\mathcal S\,\mathbf{x}_k+\boldsymbol{\eta}_k,
    \label{eq:sensing}
\end{equation}
where $\mathcal S\in\{0,1\}^{m\times N}$ with $m\ll N$ is the sampling matrix that extracts the measured quantity at the sensor locations, and $\boldsymbol{\eta}_k$ is observation noise. The task is sequential state estimation: given trajectories of \cref{eq:flow_map} at training parameter values, and at test time the measurement stream \cref{eq:sensing} together with the value of the physical parameter $\mu$, estimate the full state $\mathbf{x}_k$. All four latent models require $\mu$: the Koopman operator is evaluated as $\mathbf{K}(\mu;\Delta t)$, the MLP and xLSTM take the learned parameter embedding $\mathbf{s}=e_{\omega}(\mu)$ of \cref{eq:re_embed} as an additional input, concatenated with the latent state, and pDMD depends on $\mu$ through its own parameter-dependent construction (\cref{app:pdmd}). The remainder of this section constructs the reduced-order components, and \cref{subsec:sensing} assembles them into a filter.

\subsection{Nonlinear Dimensionality Reduction}\label{subsec:representation}

The encoder maps the high-dimensional field to a low-dimensional latent state,
\begin{equation}
    \mathbf{z}_k = E_{\phi}(\mathbf{x}_k, \mu),
\end{equation}
where $E_{\phi}$ is the nonlinear encoder with learnable parameters $\phi$. The decoder $G_{\psi}$ with learnable parameters $\psi$ reconstructs the physical state $\hat{\mathbf{x}}_k$,
\begin{equation}
    \hat{\mathbf{x}}_k = G_{\psi}(\mathbf{z}_k,\mu).
\end{equation}
The decoder is conditioned on the operating parameter in the same way as the encoder, through
the residual parameter embedding of \cref{eq:re_embed}, as shown in \cref{fig:main_fig}. To keep the
expressions that follow readable we write $G_{\psi}(\mathbf{z}_k)$ wherever the parameter argument is
held fixed. The implemented decoder always receives $\mu$.

The encoder-decoder pair is trained so that $G_{\psi}(E_{\phi}(\mathbf{x}_k,\mu))$ approximates $\mathbf{x}_k$ on the trajectory data. During filtering, the analysis state $\hat{\mathbf{z}}_k^{a}$ is mapped back to the physical domain through the same decoder,
\begin{equation}
    \hat{\mathbf{x}}_k^{a}=G_{\psi}(\hat{\mathbf{z}}_k^{a}).
\end{equation}

For parametric problems, the physical parameter is embedded into the latent space through a small neural network,
\begin{equation}
    \mathbf{s}=e_{\omega}(\mu) \in \mathbb{R}^{r},
    \label{eq:re_embed}
\end{equation}
where $e_{\omega}$ is an MLP. In the Koopman model, $\mathbf{s}$ conditions the latent generator. In the dataset-specific autoencoders, the same embedding is also added as a residual to the latent code so that the representation can vary smoothly with the operating parameter. Each model learns its own embedding network jointly with its encoder and decoder, so the embedding has the same form but not the same weights across models.
\subsection{Latent dynamics models}\label{subsec:latent_models}

For the Koopman and MLP baselines, the latent dynamics are written as one-step Markov transitions of the form
\begin{equation}
    \hat{\mathbf{z}}_{k+1} = \mathcal{F}_{\theta}(\hat{\mathbf{z}}_k,\mu).
    \label{eq:latent_markov}
\end{equation}
The xLSTM baseline shares this one-step form, differing from the MLP only in the gating used internally (\cref{app:xlstm}). The key experimental question is how the choice of $\mathcal{F}_{\theta}$ affects data assimilation.
\subsubsection{Koopman latent model}\label{subsec:koopman_model}

In the Koopman formulation, the latent dynamics are linear,
\begin{equation}
    \hat{\mathbf{z}}_{k+1}=\mathbf{K}(\mu;\Delta t)\,\hat{\mathbf{z}}_k,
    \label{eq:koopman}
\end{equation}
where $\mathbf{K}(\mu;\Delta t)\in\mathbb{R}^{r\times r}$ is a parameter-dependent Koopman operator. Rather than learning $K$ directly, we parameterize a continuous-time generator $\mathbf{A}(\mu)$ and obtain the discrete-time operator through the matrix exponential,
\begin{equation}
    \mathbf{K}(\mu;\Delta t)=e^{\mathbf{A}(\mu)\,\Delta t}.
\end{equation}

The generator is built from conservative and dissipative components. Using the parameter embedding $\mathbf{s}=e_{\omega}(\mu)$, two MLPs produce
\begin{align}
    \mathbf{q}(\mathbf{s}) &= Q_{\theta}(\mathbf{s})\in\mathbb{R}^{r^2},\\
    \mathbf{d}(\mathbf{s}) &= d_{\theta}(\mathbf{s})\in\mathbb{R}^{r},
\end{align}
Reshaping $\mathbf{q}(\mathbf{s})$ into a matrix $\mathbf{Q}(\mathbf{s})\in\mathbb{R}^{r\times r}$, we define
\begin{equation}
    \mathbf{A}(\mu) = \mathbf{Q}(\mathbf{s})-\mathbf{Q}(\mathbf{s})^{\top} - \diag\!\big(\mathbf{d}(\mathbf{s}) \circ \mathbf{d}(\mathbf{s})\big).
    \label{eq:generator}
\end{equation}
The skew-symmetric part captures conservative latent rotations, while the negative diagonal term
imposes dissipation. This produces a parameter-aware linear latent model with a stable structure that can be used directly in the prediction step of the Kalman filter. The skew-symmetric-plus-dissipative form makes $\mathbf{K}(\mu;\Delta t)$ non-expansive by construction; we state this property, and test it against the baselines, in \cref{subsec:idealized_filtering_perspective}. The MLP hyperparameters used in the Koopman generator are summarized in \cref{tab:koopman_mlp_hparams}.

\begin{table}[!htbp]
    \centering
    \small
    \caption{Hyperparameters of the MLP components used in the Koopman latent dynamics.}
    \label{tab:koopman_mlp_hparams}
    \renewcommand{\arraystretch}{1.2}
    \setlength{\tabcolsep}{4pt}
    \begin{tabular}{p{3.2cm} p{3.1cm} p{5.0cm}}
        \toprule
        \textbf{Component} & \textbf{Architecture / output} & \textbf{Hyperparameters} \\
        \midrule
        Conservative map $Q_{\theta}(\mathbf{s})$
        & MLP: $\mathbb{R}^{r}\to\mathbb{R}^{r^2}$
        & Hidden layers: $L_Q=4$; hidden width: $r$; activation: GELU; output normalization: LayerNorm (no affine). \\

        Dissipation map $d_{\theta}(\mathbf{s})$
        & MLP: $\mathbb{R}^{r}\to\mathbb{R}^{r}$
        & Hidden layers: $L_d=4$; hidden width: $r$; activation: GELU; output normalization: LayerNorm (no affine). \\
        \bottomrule
    \end{tabular}
\end{table}
\subsubsection{Baseline latent models}\label{subsubsec:baselines}

We compare the Koopman model against three baselines: a multilayer perceptron (MLP) and an xLSTM~\cite{beck2024xlstm}, which share the same encoder–decoder backbone and 
differ only in their latent dynamics, and a parametric DMD (pDMD) model, which instead uses a global POD basis and therefore differs in the representation
as well. The pDMD model retains linear reduced dynamics; the MLP is a fully connected nonlinear Markov transition; and the xLSTM is an xLSTM-gated nonlinear one-step map, which in this work is evaluated with a context length of one, so its memory, normalization and stabilization variables are re-initialized at every step and it too acts as a Markov transition on the latent state. The set therefore comprises two linear transitions and two nonlinear Markov transitions that differ in their internal parameterization. They enter the filter (\cref{subsec:sensing}) through different latent Jacobians $\mathbf{F}_k$: a fixed global linear map for Koopman, a constant parameter-specific linear map $\tilde{\mathbf{A}}(\mu)$ for pDMD, refitted online to that parameter's interpolated reduced coefficients (\cref{eq:pdmd_online_operator}), and state-dependent local linearizations for the MLP and xLSTM. The full formulations and hyperparameters are deferred to \cref{app:baselines} (pDMD in \cref{app:pdmd}, MLP in \cref{app:mlp}, and xLSTM in \cref{app:xlstm}).
\subsection{Joint training objective}\label{subsec:training_objective}

The Koopman, MLP and xLSTM models are trained jointly with the encoder and decoder using a combination of reconstruction and prediction losses; pDMD is fitted separately and uses a POD basis rather than the learned autoencoder. Let $\mathbf{z}_k=E_{\phi}(\mathbf{x}_k,\mu)$. We use
{\small
\begin{equation}
\mathcal{L}
=
\frac{1}{T}
\sum_{k=0}^{T-1}
\left(
\lambda_1 \| G_{\psi}(E_{\phi}(\mathbf{x}_k,\mu)) - \mathbf{x}_k \|_2^2
+
\lambda_2 \| G_{\psi}(\mathcal{F}_{\theta}(\mathbf{z}_k,\mu)) - \mathbf{x}_{k+1} \|_2^2
+
\lambda_3 \| \mathcal{F}_{\theta}(\mathbf{z}_k,\mu) - \mathbf{z}_{k+1} \|_2^2
\right).
\label{eq:training_loss}
\end{equation}
}
Here $\lambda_1$, $\lambda_2$, and $\lambda_3$ are fixed weights assigned to the reconstruction, decoded prediction, and latent prediction terms, respectively. We use $\lambda_1=\lambda_2=\lambda_3=1$ in all experiments. The principal experiments use one-step training. In the multi-step study the horizon is extended to $T\in\{2,4,8\}$ steps by unrolling the latent map recursively from a single encoding, without reinjecting ground-truth encodings at intermediate steps,
\begin{equation}
    \hat{\mathbf{z}}_{k,0}=E_{\phi}(\mathbf{x}_k,\mu),
    \qquad
    \hat{\mathbf{z}}_{k,j+1}=\mathcal{F}_{\theta}(\hat{\mathbf{z}}_{k,j},\mu),
    \qquad j=0,\dots,T-1,
    \label{eq:multistep_rollout}
\end{equation}
and accumulating the decoded-prediction and latent-prediction terms over all $T$ steps against
$\mathbf{x}_{k+j+1}$ and $E_{\phi}(\mathbf{x}_{k+j+1},\mu)$ respectively, while the reconstruction
term is retained at the initial step only. The accumulated terms are averaged over the horizon, so
the weights $\lambda_1,\lambda_2,\lambda_3$ carry the same values as in the one-step case. The Koopman, MLP, and xLSTM models are trained with this objective, whereas pDMD is fit separately through its POD/DMD pipeline. 
\subsection{Latent-space sensing and state estimation}\label{subsec:sensing}

Sparse measurements are assimilated in the reduced space using the trained forecast model together with the corresponding reduced-to-full reconstruction map. The physical measurements are the point samples of \cref{eq:sensing}; since the full state is not available online, the filter models the map from the latent state to the measurement by decoding and sampling. The observation model is written as
\begin{equation}
    \mathbf{y}_k = h(\mathbf{z}_k)+\tilde{\boldsymbol{\eta}}_k,
\end{equation}
where the observation disturbance $\tilde{\boldsymbol{\eta}}_k$ is modeled by the filter as zero-mean Gaussian with covariance $\boldsymbol{\Sigma}_{\eta}$; its exact composition, measurement noise plus sensor-sampled reconstruction error, is given in \cref{rem:dks_instantiation}. For the neural baselines,
\begin{equation}
    h(\mathbf{z}_k)=\mathcal{S}\big(G_{\psi}(\mathbf{z}_k)\big),
\end{equation}
whereas for pDMD the reconstruction is given by the first $r$ POD basis $\mathbf{U}_r$,
\begin{equation}
    h(\mathbf{z}_k)=\mathcal{S}(\mathbf{U}_r \mathbf{z}_k).
\end{equation}

We use an Extended Kalman Filter (EKF) in the main experiments. The latent prediction step is
\begin{align}
    \hat{\mathbf{z}}_{k+1}^{f} &= \mathcal{F}_{\theta}(\hat{\mathbf{z}}_k^{a},\mu),\\
    \mathbf{P}_{k+1}^{f} &= \mathbf{F}_k \mathbf{P}_k^{a} \mathbf{F}_k^{\top}+\boldsymbol{\Sigma}_{\xi},
\end{align}
where $\mathbf{F}_k=\partial \mathcal{F}_{\theta}/\partial \mathbf{z}\vert_{\mathbf{z}=\hat{\mathbf{z}}_k^{a}}$ is the Jacobian of the latent transition and $\boldsymbol{\Sigma}_{\xi}$ is the process-noise covariance. After receiving $\mathbf{y}_{k+1}$ from measurement, the update step reads
\begin{align}
    \mathcal{K}_{k+1}
    &= \mathbf{P}_{k+1}^{f}\mathbf{J}_{k+1}^{\top}\left(\mathbf{J}_{k+1}\mathbf{P}_{k+1}^{f}\mathbf{J}_{k+1}^{\top}+\boldsymbol{\Sigma}_{\eta}\right)^{-1},\\
    \hat{\mathbf{z}}_{k+1}^{a}
    &= \hat{\mathbf{z}}_{k+1}^{f} + \mathcal{K}_{k+1}\big(\mathbf{y}_{k+1}-h(\hat{\mathbf{z}}_{k+1}^{f})\big),\\
    \mathbf{P}_{k+1}^{a}
    &= \left(\mathbf{I}-\mathcal{K}_{k+1}\mathbf{J}_{k+1}\right)\mathbf{P}_{k+1}^{f},
\end{align}
where $\mathbf{J}_{k+1}=\partial h/\partial \mathbf{z}\vert_{\mathbf{z}=\hat{\mathbf{z}}_{k+1}^{f}}$ is the Jacobian of the sensing map. For Koopman, MLP, and pDMD, the filter state has dimension $r$, so $\mathbf{F}_k\in\mathbb{R}^{r\times r}$, $\mathbf{P}_k^{a},\mathbf{P}_k^{f}\in\mathbb{R}^{r\times r}$, $\mathbf{J}_{k+1}\in\mathbb{R}^{m\times r}$, and $\mathcal{K}_{k+1}\in\mathbb{R}^{r\times m}$. The xLSTM baseline is filtered on the same $r$-dimensional latent state: its cell is evaluated over a single timestep with the recurrent variables initialized afresh at each call, so the realized transition is a map on $\mathbb{R}^{r}$ and the shapes above apply unchanged to all four models (\cref{app:xlstm}).

The distinction between latent models enters directly through the choice of $\mathbf{F}_k$ and observation $h$. For the Koopman model, $\mathbf{F}_k=\mathbf{K}(\mu;\Delta t)$ is a global linear map, while $h$ can be generally nonlinear. For the pDMD baseline, the  underlying DMD forecast is linear in the stacked POD coefficient space, but evaluation at a new parameter additionally involves interpolation across parameter space. For the MLP and xLSTM baselines the forecast map is nonlinear and its Jacobian $\mathbf{F}_k$ depends on the current latent state, so the EKF relies on a local linearization of a nonlinear latent transition.

\section{Datasets and Experimental Setup}\label{sec:data}

We evaluate the framework on four benchmark problems spanning one, two, and three spatial dimensions: 1D Viscous Burgers, 2D flow past a circular cylinder, 2D dambreak, 3D flow past a sphere. The 2D and 3D sensor layouts are shown in \cref{fig:sensor_layouts}. For the 2D and 3D benchmarks, the data is generated using CFD, and the fields are subsequently interpolated onto the uniform Cartesian grid on which the models are trained and evaluated. The resolutions quoted below ($64\times64$ in 2D and $64^{3}$ in 3D) are those of this resampling grid rather than of the simulation mesh, and all errors reported in \cref{sec:results} are computed on the resampled fields. Unless noted otherwise, the main comparison uses latent dimension $r=32$ and measurement noise equal to $20\%$ of the maximum measured quantity in the corresponding test trajectory.

\begin{figure}[!htbp]
    \centering
    \begin{subfigure}[!htbp]{0.32\linewidth}
        \centering
        \includegraphics[width=\linewidth]{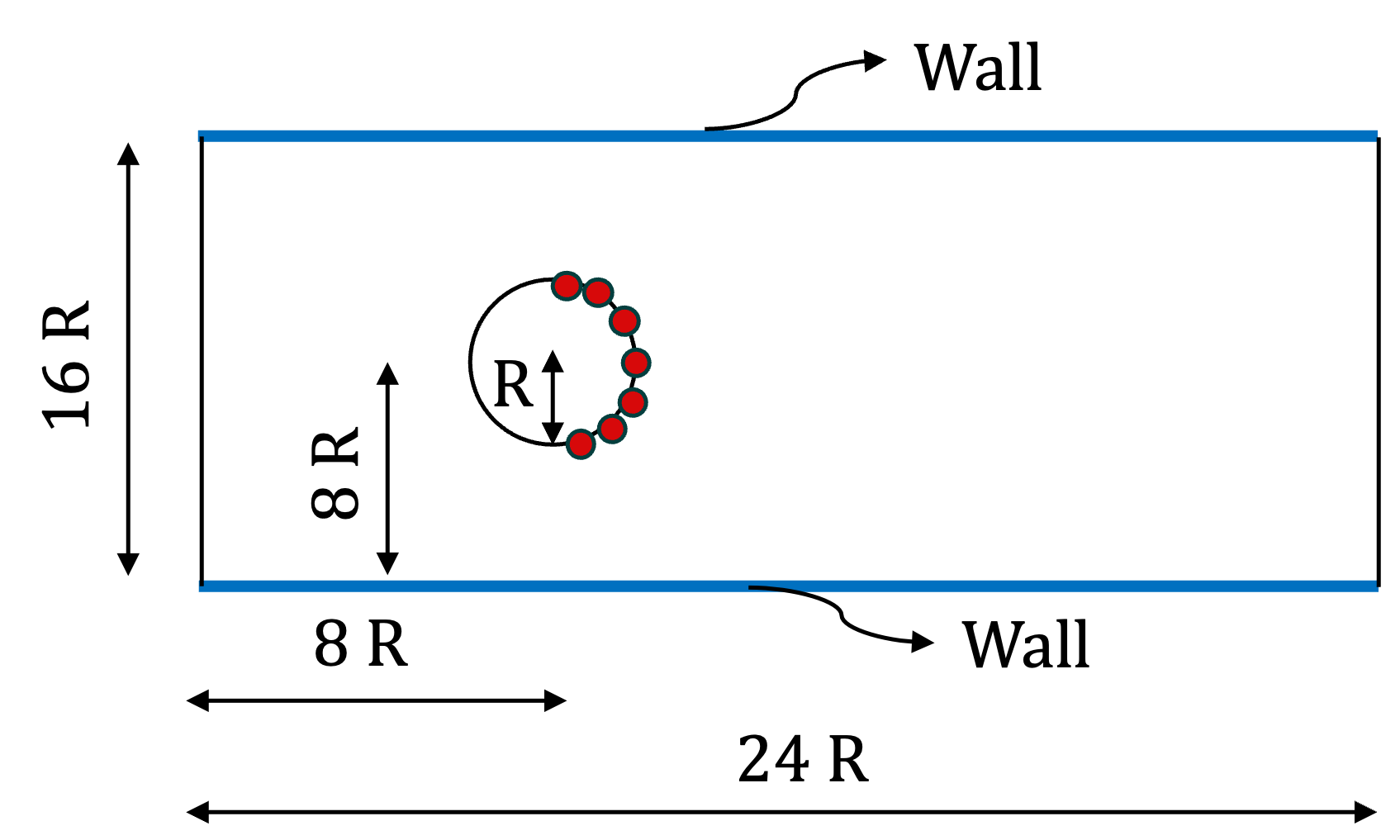}
        \caption{2D cylinder.}
        \label{fig:sensor_layout_cyl}
    \end{subfigure}\hfill
    \begin{subfigure}[!htbp]{0.32\linewidth}
        \centering
        \includegraphics[width=\linewidth]{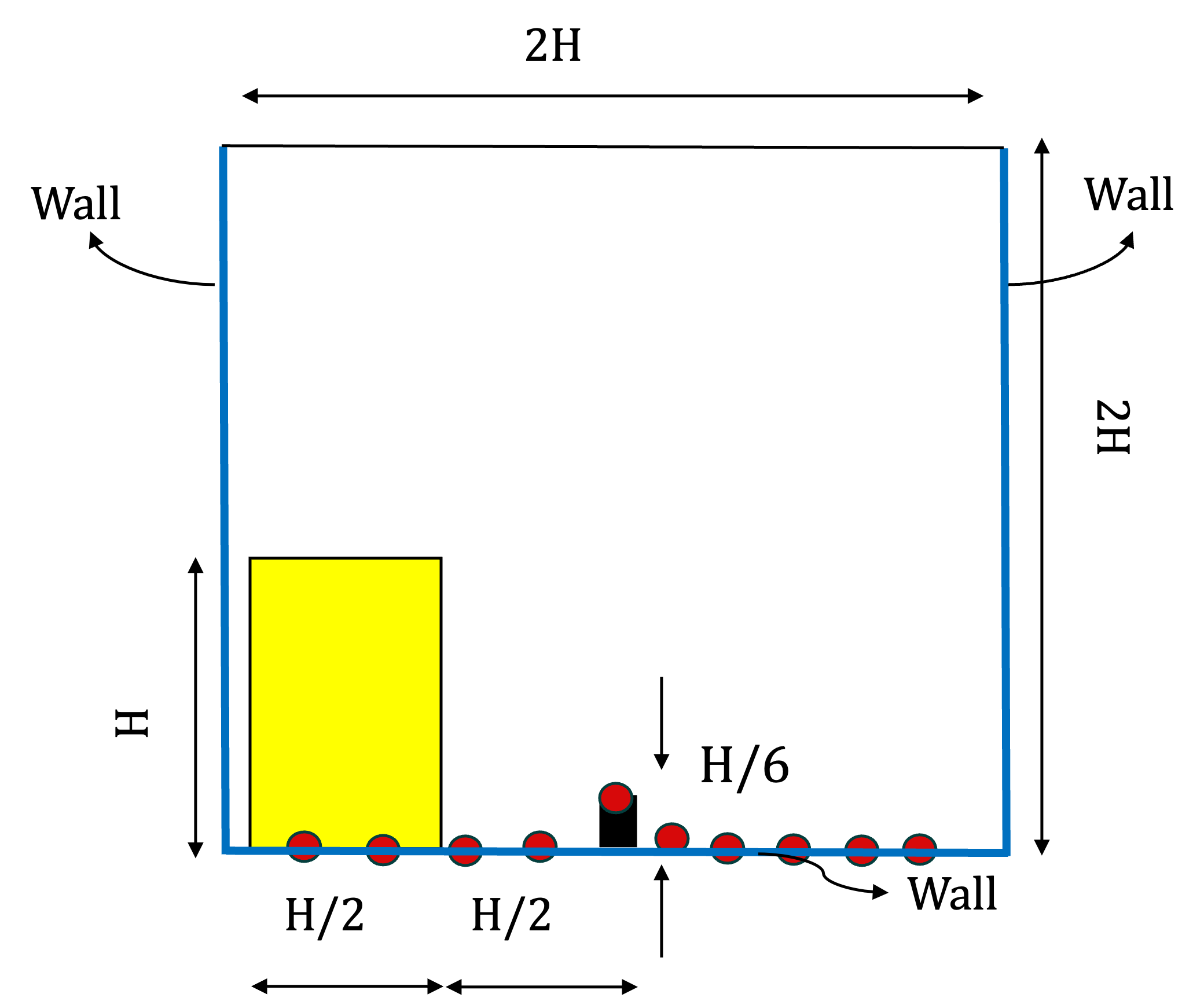}
        \caption{2D dambreak.}
        \label{fig:sensor_layout_dambreak}
    \end{subfigure}\hfill
    \begin{subfigure}[!htbp]{0.32\linewidth}
        \centering
        \includegraphics[width=\linewidth]{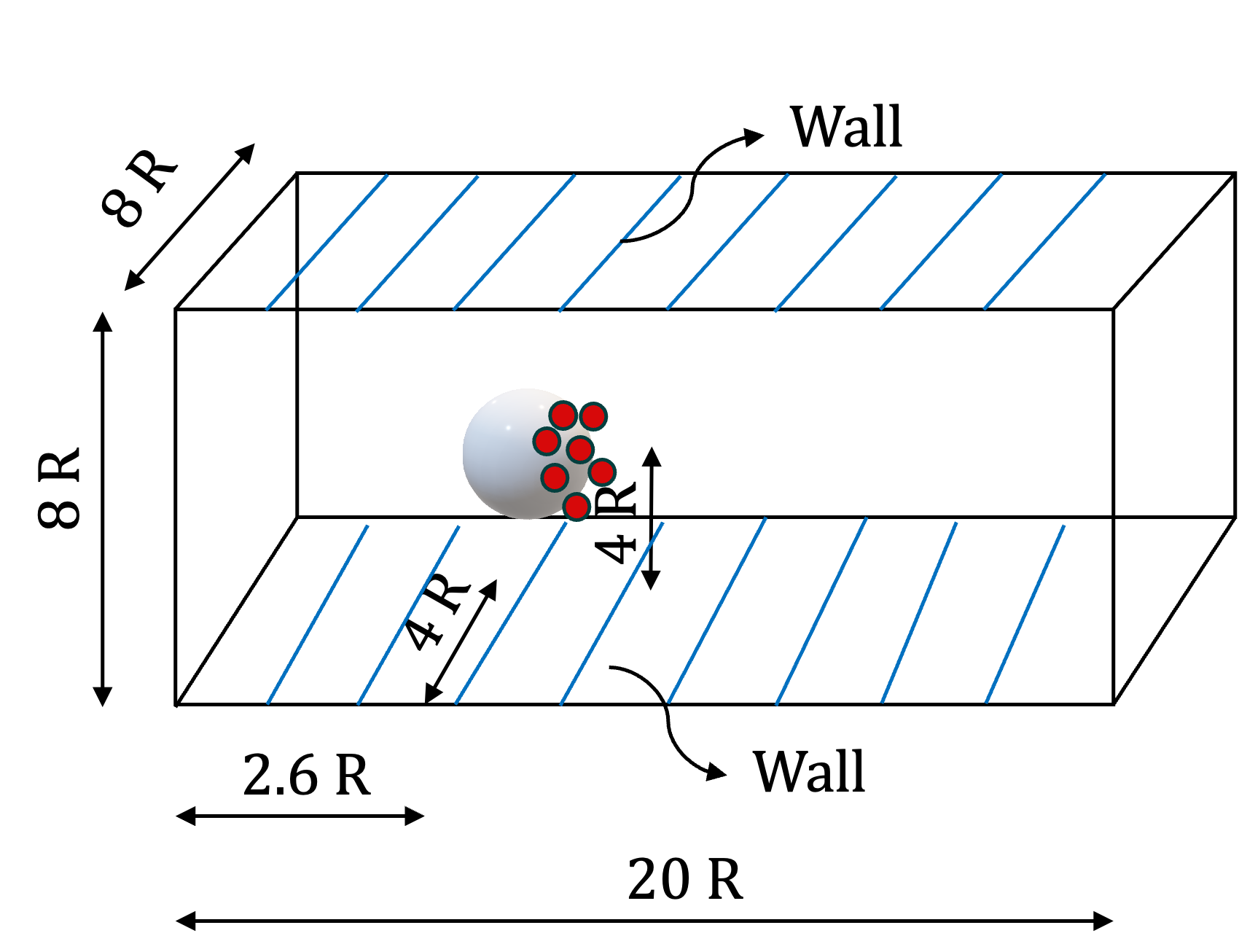}
        \caption{3D sphere.}
        \label{fig:sensor_layout_sphere}
    \end{subfigure}
    \caption{The computational domains used to generate the cylinder, dambreak, and sphere datasets. For the cylinder and sphere cases, the top and bottom surfaces are walls. For the dambreak case, the sides and bottom surfaces are walls. The DA sensor locations are also shown. In the cylinder and sphere cases the sensors lie in the wake close to the body surface, and in the dambreak case along the bottom surface and around the obstacle. The sensors are symmetrically and equidistantly arranged.}
    \label{fig:sensor_layouts}
\end{figure}
\subsection{1D viscous Burgers equation}

The 1D viscous Burgers system is defined by
\begin{equation}
\frac{\partial u}{\partial t} + u \frac{\partial u}{\partial x} = \nu \frac{\partial^2 u}{\partial x^2},
\end{equation}
on the periodic domain $x\in[0,2\pi)$, discretized with $N_x=1024$ uniformly spaced points. Every trajectory starts from the same two-mode profile $u_0(x)=\sin x+\tfrac{1}{2}\cos 2x$, so the trajectories differ only through the viscosity. The reference solutions are computed with an explicit finite-difference scheme, first-order upwind for advection and second-order central differences for diffusion, with the step limited by an advective Courant number of $0.1$ and a diffusive stability factor of $0.45$. Each run is integrated to $t=0.5$ and stored as 100 frames spaced about $5\times10^{-3}$ apart. The time step $\Delta t=0.005$ in \cref{tab:vb_koopman_architecture} is the increment at which the learned latent operator $\mathbf{K}(\mu;\Delta t)$ is evaluated, same as the sampling $\Delta t$. We generate 50 trajectories of length 100 time steps for training, with viscosity sampled uniformly from $\nu\in[0.01,1.0]$, and reserve 10 unseen trajectories for testing. Sensors are placed equidistantly over the spatial domain. The Koopman encoder-decoder architecture and training hyperparameters used for this dataset are summarized in \Cref{tab:vb_koopman_architecture}.

\begin{table}[!htbp]
    \centering
    \scriptsize
    \caption{Neural-network architecture and training hyperparameters for the 1D viscous Burgers Koopman model. Here $r$ denotes the latent dimension.}
    \label{tab:vb_koopman_architecture}
    \renewcommand{\arraystretch}{1.2}
    \begin{adjustbox}{max width=\textwidth}
    \begin{tabular}{p{3.2cm} p{3.2cm} p{3.2cm} p{3.2cm}}
        \toprule
        \multicolumn{4}{c}{\textbf{Encoder / Decoder Architecture}} \\
        \midrule
        \multicolumn{2}{c}{\textbf{Encoder}} &
        \multicolumn{2}{c}{\textbf{Decoder}} \\
        \midrule
        \textbf{Component} & \textbf{Details} &
        \textbf{Component} & \textbf{Details} \\
        \midrule
        Input &
        1D field, size $1024$ (single channel) &
        Input &
        Latent state ($r$-dimensional) \\

        Conv layers &
        3$\times$Conv1D: $1\!\rightarrow\!8\!\rightarrow\!16\!\rightarrow\!32$, kernel $=3$, stride $=2$, padding $=1$; GELU after first two convs &
        FC layers &
        $r \rightarrow 256 \rightarrow 512 \rightarrow (32\times128)$ with GELU between FC layers \\

        Flatten \& projection &
        Flatten $\rightarrow$ FC: $(32\times128)\rightarrow512\rightarrow256\rightarrow r$ (GELU between FC layers) &
        Reshape &
        $(32\times128)\rightarrow (32,128)$ \\

        Residual term &
        Adds parameter embedding: $\mathbf{z}=E_\phi(\mathbf{x})+\mathbf{s}$ &
        Upsampling &
        3$\times$ linear interpolation, scale factor $2$ (mode = linear) \\

        Output &
        Latent code $\mathbf{z}\in\mathbb{R}^{r}$ &
        Deconv layers &
        ConvTranspose1D (stride $=1$, padding $=1$, kernel $=3$): $32\!\rightarrow\!16\!\rightarrow\!8\!\rightarrow\!1$; GELU after first two deconvs \\

        \midrule
        \multicolumn{4}{c}{\textbf{Training Hyperparameters}} \\
        \midrule
        \multicolumn{2}{c}{\textbf{Parameter}} &
        \multicolumn{2}{c}{\textbf{Value}} \\
        \midrule
        \multicolumn{2}{c}{Training steps} &
        \multicolumn{2}{c}{1000} \\

        \multicolumn{2}{c}{Batch size} &
        \multicolumn{2}{c}{256} \\

        \multicolumn{2}{c}{Optimizer} &
        \multicolumn{2}{c}{Adam} \\

        \multicolumn{2}{c}{Learning rate} &
        \multicolumn{2}{c}{$1\times 10^{-4}$} \\

        \multicolumn{2}{c}{Latent time step} &
        \multicolumn{2}{c}{$\Delta t = 0.005$} \\
        \bottomrule
    \end{tabular}
    \end{adjustbox}
\end{table}
\subsection{2D flow past a circular cylinder}

The 2D cylinder dataset consists of incompressible Navier-Stokes solutions resampled onto a uniform $64\times64$ grid. The domain spans $x\in[-8R,16R]$ and $y\in[-4R,4R]$, with a cylinder of radius $R=1$ centred at the origin. Each trajectory contains 100 time steps. The training set contains 90 trajectories, the test set 11 trajectories, and the Reynolds number is sampled over $\mathrm{Re}\in[59.5,990.5]$. Test trajectories are drawn from the same parameter range but are not seen during training. The data are generated using OpenFOAM. Sensors are placed on the cylinder surface and in its wake region, as shown in \cref{fig:sensor_layout_cyl}. The Koopman encoder-decoder architecture and training hyperparameters used for this dataset are summarized in \Cref{tab:cyl_koopman_architecture}.

\begin{table}[!htbp]
    \centering
    \scriptsize
    \caption{Neural-network architecture and training hyperparameters for the 2D cylinder Koopman model. Here $r$ denotes the latent dimension. The four channels of inputs refers to velocities ($u,v$), pressure ($p$) and the cell type (boundary, cylinder, inlet, outlet, interior).}
    \label{tab:cyl_koopman_architecture}
    \renewcommand{\arraystretch}{1.2}
    \begin{adjustbox}{max width=\textwidth}
    \begin{tabular}{p{3.2cm} p{3.2cm} p{3.2cm} p{3.2cm}}
        \toprule
        \multicolumn{4}{c}{\textbf{Encoder / Decoder Architecture}} \\
        \midrule
        \multicolumn{2}{c}{\textbf{Encoder}} &
        \multicolumn{2}{c}{\textbf{Decoder}} \\
        \midrule
        \textbf{Component} & \textbf{Details} &
        \textbf{Component} & \textbf{Details} \\
        \midrule
        Input & 2D field, $64\times64$, 4 channels   &
        Input & Koopman latent state ($r$-dimensional) \\

        Conv layers &
        3$\times$Conv2D: $4\!\rightarrow\!128\!\rightarrow\!256\!\rightarrow\!512$; kernel $=5$, stride $=2$, padding $=2$; GELU after first two convs &
        FC layers &
        $r \rightarrow 256 \rightarrow 512 \rightarrow (512\times8\times8)$ with GELU between FC layers \\

        Flatten / bottleneck &
        Flatten $\rightarrow$ FC: $(512\times8\times8)\rightarrow512\rightarrow256\rightarrow r$ (GELU between FC layers) &
        Reshape &
        $(512\times8\times8) \rightarrow (512,8,8)$ \\

        Residual term &
        Adds parameter embedding: $\mathbf{z}=E_\phi(\mathbf{x})+\mathbf{s}$ &
        Upsampling &
        3$\times$ bilinear interpolation, scale factor $2$ \\

        Output &
        Latent code $\mathbf{z}\in\mathbb{R}^{r}$ &
        Deconv layers &
        ConvTranspose2D (stride $=1$, padding $=2$, kernel $=5$): $512\!\rightarrow\!256\!\rightarrow\!128\!\rightarrow\!\blue{3}$; GELU after first two deconvs \\

        \midrule
        \multicolumn{4}{c}{\textbf{Training Hyperparameters}} \\
        \midrule
        \multicolumn{2}{c}{\textbf{Parameter}} &
        \multicolumn{2}{c}{\textbf{Value}} \\
        \midrule
        \multicolumn{2}{c}{Training steps} &
        \multicolumn{2}{c}{1000} \\

        \multicolumn{2}{c}{Batch size} &
        \multicolumn{2}{c}{256} \\

        \multicolumn{2}{c}{Optimizer} &
        \multicolumn{2}{c}{Adam} \\

        \multicolumn{2}{c}{Learning rate} &
        \multicolumn{2}{c}{$1\times 10^{-4}$} \\

        \multicolumn{2}{c}{Time step} &
        \multicolumn{2}{c}{$\Delta t = 0.5$} \\
        \bottomrule
    \end{tabular}
    \end{adjustbox}
\end{table}
\subsection{2D dambreak}

The 2D dambreak dataset represents a two-phase flow resampled onto a uniform $64\times64$ grid. As in the cylinder case, each trajectory contains 100 time steps, with 90 trajectories used for training and 11 reserved for testing. The simulations are parameterized by the kinematic viscosity of the liquid phase,
$\nu_{w}\in[1.0\times10^{-3},\,9.9\times10^{-2}]\,\mathrm{m}^{2}\mathrm{s}^{-1}$ over the 90 training
cases. Test cases are sampled from the same uniform parameter sweep but are not seen during training. The data are generated using OpenFOAM and follow the canonical dambreak geometry shown in \cref{fig:sensor_layout_dambreak}. Sensors are placed along the bottom boundary and around the obstacle. The encoder-decoder backbone follows the same 2D convolutional structure as the cylinder case
(\cref{tab:cyl_koopman_architecture}), with six input and five output channels. 
\subsection{3D flow past a sphere}

The 3D sphere dataset consists of incompressible Navier-Stokes simulations of unsteady flow past a fixed sphere, resampled onto a uniform $64\times64\times64$ grid. Each trajectory contains 100 time steps. The training set comprises 90 trajectories and the test set 11 trajectories. The Reynolds number is sampled over $\mathrm{Re}\in[59.5,990.5]$, and test trajectories are again held out from training. Sensors are distributed on the sphere surface and in the wake region, as shown in \cref{fig:sensor_layout_sphere}. The Koopman encoder-decoder architecture and training hyperparameters used for this dataset are summarized in \Cref{tab:sphere_koopman_architecture}.

\begin{table}[!htbp]
    \centering
    \scriptsize
    \caption{Neural-network architecture and training hyperparameters for the 3D sphere Koopman model. Here $r$ denotes the latent dimension. The five channels of inputs refers to velocities ($u, v, w$), pressure ($p$) and the cell type (boundary, sphere, inlet, outlet, interior).}
    \label{tab:sphere_koopman_architecture}
    \renewcommand{\arraystretch}{1.2}
    \begin{adjustbox}{max width=\textwidth}
    \begin{tabular}{p{3.2cm} p{3.2cm} p{3.2cm} p{3.2cm}}
        \toprule
        \multicolumn{4}{c}{\textbf{Encoder / Decoder Architecture}} \\
        \midrule
        \multicolumn{2}{c}{\textbf{Encoder}} &
        \multicolumn{2}{c}{\textbf{Decoder}} \\
        \midrule
        \textbf{Component} & \textbf{Details} &
        \textbf{Component} & \textbf{Details} \\
        \midrule
        Input &
        3D field, $64\times64\times64$, 5 channels &
        Input &
        Koopman latent state ($r$-dimensional) \\

        Conv layers &
        4$\times$Conv3D: $5\!\to\!64\!\to\!128\!\to\!256\!\to\!512$; kernel $3$, stride $2$, padding $1$, GELU &
        FC layers &
        $r \to 1024 \to (512\!\times\!4\!\times\!4\!\times\!4)$ \\

        Flatten / bottleneck &
        Flatten $\to$ FC: $(512\!\times\!4\!\times\!4\!\times\!4)\to1024\to r$ &
        Reshape &
        $(512\!\times\!4\!\times\!4\!\times\!4)\to [512,4,4,4]$ \\

        Activation &
        GELU (all encoder Convolutional/Fully Connected layers) &
        Deconvolution &
        4$\times$ConvTranspose3D: $512\!\to\!256\!\to\!128\!\to\!64\!\to\!5$; kernel $3$, stride $2$, padding $1$, output\_padding $1$; GELU except last \\

        Residual term &
        Adds parameter embedding to latent code &
        Output &
        3D field, $64\times64\times64$, 5 channels \\
        \midrule
        \multicolumn{4}{c}{\textbf{Training Hyperparameters}} \\
        \midrule
        \multicolumn{2}{c}{\textbf{Parameter}} &
        \multicolumn{2}{c}{\textbf{Value}} \\
        \midrule
        \multicolumn{2}{c}{Training steps} &
        \multicolumn{2}{c}{40000} \\

        \multicolumn{2}{c}{Batch size} &
        \multicolumn{2}{c}{32} \\

        \multicolumn{2}{c}{Optimizer} &
        \multicolumn{2}{c}{Adam} \\

        \multicolumn{2}{c}{Learning rate} &
        \multicolumn{2}{c}{$1\times 10^{-4}$} \\

        \multicolumn{2}{c}{Time step} &
        \multicolumn{2}{c}{$\Delta t = 0.5$} \\
        \bottomrule
    \end{tabular}
    \end{adjustbox}
\end{table}

\subsection{Evaluation protocol}\label{subsec:eval_protocol}

The principal comparison is carried out with latent dimension $r=32$. The models share the encoder--decoder backbone and have approximately matched parameter counts (\cref{tab:ld32_param_counts}). We do not claim that they are matched in forward-prediction accuracy and the argument of this paper does not require them to be equal. What the comparison holds fixed is the representation family, the parameter budget, the sensors, the noise and the filter settings, leaving the latent dynamics model as the variable of interest. The resulting trained models are then used in the data-assimilation setting. We evaluate the models in two tasks. In the \emph{forward run}, the latent state is propagated using only the learned dynamics. In \emph{data assimilation} (DA), the same learned dynamics are embedded inside the EKF described in \cref{subsec:sensing}. In the main setting, both the forward run and DA start from an approximate latent initial condition built from training data. Measurement noise is Gaussian, with standard deviation equal to $20\%$ of the maximum measured variable in the test trajectory unless stated otherwise. We use 16 sensors for all experiments across all datasets. The observed variable is chosen to be the velocity $u$ for 1D Burgers, the pressure $p$ for the 2D cylinder, 2D dambreak and the 3D sphere. The remaining channels are reconstructed but not measured. In the EKF, the process covariance is a small multiple of the identity, $\boldsymbol{\Sigma}_{\xi}=q\mathbf{I}$ with $q=10^{-4}$; the initial covariance is the identity. The measurement covariance is set as the square of the noise level. These choices are kept fixed across all models and experiments. Each latent model is trained once for each benchmark and latent dimension. Within each filter, every model receives the same measurement-noise realization for a given test trajectory. Performance is reported as the mean squared error (MSE) at the final time, averaged over the test trajectories, taken over all reconstructed field channels,

\begin{equation}
    \mathrm{MSE}=\frac{1}{N_{c}C}\sum_{j=1}^{C}\sum_{i=1}^{N_{c}}
    \big(\hat{x}^{(j)}_{i}-x^{(j)}_{i}\big)^{2},
    \label{eq:mse_metric}
\end{equation}
where $N_{c}$ is the number of grid cells, $C$ the number of scored channels, and the sum runs over
the final-time field. The channels entering this average are those scored against the ground truth: one for 1D
Burgers, three for the 2D cylinder (two velocity components and pressure), five for the 2D dambreak
(two velocity components, the two pressure fields $p_{\mathrm{rgh}}$ and $p$, and the water
volume fraction $\alpha$),
and four for the 3D sphere (three velocity components and pressure). Error bars in all bar charts are the sample standard deviation of that quantity across the test trajectories, normalized by the same denominator as the bar itself. In the bar plots, each dataset is normalized by the corresponding forward-run MLP error for the same scenario, so the MLP rollout baseline is 1.0. This normalization makes the gap between open-loop and assimilation performance directly comparable across datasets. We also report the average wall-clock time per DA step.

\begin{table}[t]
\centering
\scriptsize
\setlength{\tabcolsep}{4pt}
\renewcommand{\arraystretch}{1.1}
\begin{adjustbox}{max width=\linewidth}
\begin{tabular}{llccc}
\toprule
Dataset & Model & \# Encoder params & \# Decoder params & \# Latent/Forward params \\
\midrule
\multirow{3}{*}{1D viscous Burgers}
& MLP     & 2{,}243{,}504  & 2{,}243{,}249  & 43{,}607 \\
& xLSTM   & 2{,}243{,}504  & 2{,}243{,}249  & 43{,}104 \\
& Koopman & 2{,}243{,}504  & 2{,}243{,}249  & 43{,}296 \\
\midrule
\multirow{3}{*}{2D cylinder}
& MLP     & 21{,}031{,}264 & 21{,}056{,}003 & 43{,}607 \\
& xLSTM   & 21{,}031{,}264 & 21{,}056{,}003 & 43{,}104 \\
& Koopman & 21{,}031{,}264 & 21{,}056{,}003 & 43{,}296 \\
\midrule
\multirow{3}{*}{2D dambreak}
& MLP     & 21{,}037{,}664 & 21{,}062{,}405 & 43{,}607 \\
& xLSTM   & 21{,}037{,}664 & 21{,}062{,}405 & 43{,}104 \\
& Koopman & 21{,}037{,}664 & 21{,}062{,}405 & 43{,}296 \\
\midrule
\multirow{3}{*}{3D sphere}
& MLP     & 38{,}247{,}008 & 38{,}274{,}949 & 43{,}607 \\
& xLSTM   & 38{,}247{,}008 & 38{,}274{,}949 & 45{,}696 \\
& Koopman & 38{,}247{,}008 & 38{,}274{,}949 & 43{,}296 \\
\bottomrule
\end{tabular}
\end{adjustbox}
\caption{Parameter counts for each learned model and dataset. The Koopman, MLP, and xLSTM models share the same encoder--decoder backbone and use closely matched latent/forward parameter counts.}
\label{tab:ld32_param_counts}
\end{table}
\section{Results and discussion}\label{sec:results}
\subsection{Rollout accuracy does not predict DA quality}

\Cref{fig:main_inexact} summarizes the principal experiments, in which both the forward run and DA start from an approximate initial condition. The two panels rank the models differently. In the forward run the MLP is the better open-loop predictor on three of the four benchmarks, so nonlinear latent dynamics can be effective when no measurements are assimilated, and a model chosen on rollout alone would often be the MLP, a conclusion that would be wrong for sparse sensing. Under DA the ranking changes sharply: the Koopman model yields the lowest assimilation error on all four datasets, often by a wide margin. Assimilation lowers the error of both linear latent models on the benchmarks used here, whereas it degrades the nonlinear models in most cases; the xLSTM on the dambreak is the only case in which a nonlinear model clearly benefits from assimilation relative to its own forecast, and even there it does not outperform the Koopman model. pDMD is a substantially worse open-loop predictor, yet it consistently extracts useful information from the measurements, while the MLP generally does not. This distinction is the most important message of the paper: a latent model should be evaluated in the closed-loop setting for which it is intended, not only in open-loop forecasting.

The same qualitative trend persists when exact initial conditions are used to initialize the DA loop, as shown in \cref{fig:main_exact}. The average wall-clock time per DA step is shown in \cref{fig:main_time} for approximate initialization and in \cref{fig:main_exact_time} for exact initialization. For the Koopman model the forecast Jacobian is free, since $\mathbf{F}_k\equiv\mathbf{K}(\mu;\Delta t)$ is constant once the parameter is known and need never be recomputed. The pDMD cost on the 3D sphere is roughly an order of magnitude above every other model and above its own cost on the other three benchmarks; this is not the stacked operator, which \cref{app:pdmd} confines to the offline fit, but the reconstruction $\mathbf{U}_r\hat{\mathbf{z}}$, whose cost scales with the ambient dimension $N$ and is therefore far larger in 3D than in 2D.


\begin{figure}[!htbp]
    \centering

    \begin{subfigure}[!htbp]{\linewidth}
        \centering
        \begin{minipage}[!htbp]{0.025\linewidth}
            \vspace{0pt}
            \raggedright \hspace*{-0.4em}\textbf{(a)}
        \end{minipage}%
        \begin{minipage}[!htbp]{0.895\linewidth}
            \vspace{0pt}
            \centering
            \includegraphics[width=\linewidth]{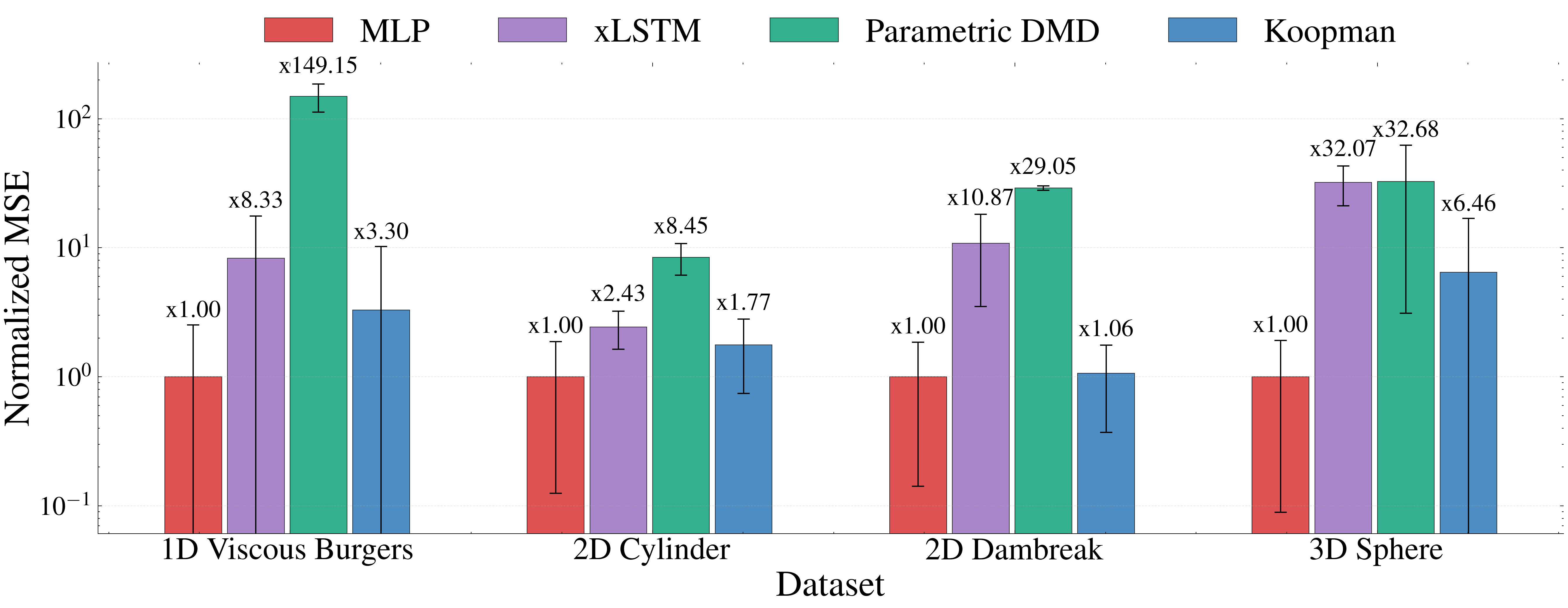}
        \end{minipage}
    \end{subfigure}

    \vspace{0.5em}

    \begin{subfigure}[!htbp]{\linewidth}
        \centering
        \begin{minipage}[!htbp]{0.025\linewidth}
            \vspace{0pt}
            \raggedright \hspace*{-0.4em}\textbf{(b)}
        \end{minipage}%
        \begin{minipage}[!htbp]{0.895\linewidth}
            \vspace{0pt}
            \centering
            \includegraphics[width=\linewidth]{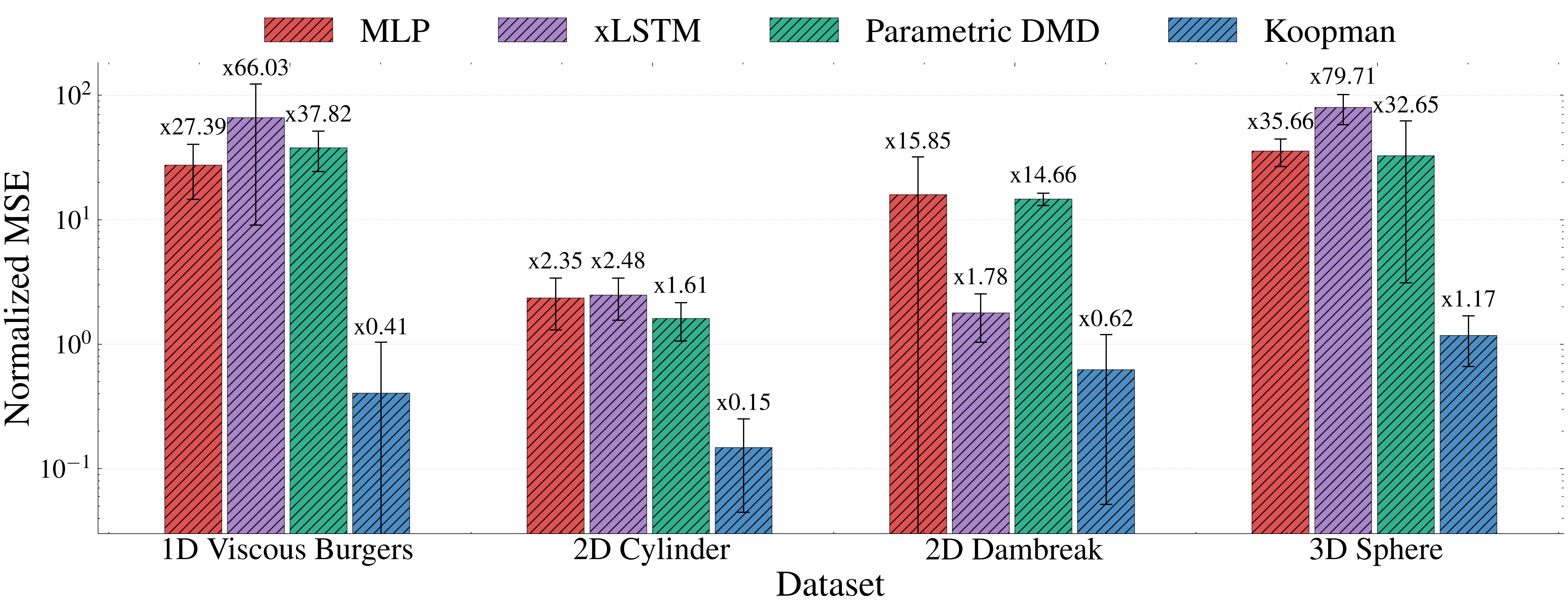}
        \end{minipage}
    \end{subfigure}

    \caption{Comparison of the different latent models in forward-run and DA settings using an approximate initial condition, latent dimension $r=32$, and measurement noise level $20\%$. (a) Forward run. (b) Data assimilation. Errors are normalized by the corresponding forward-run MLP error for each dataset, so the MLP rollout baseline is 1.0. The reported errors are the test-set averages of the final-time MSE. Koopman is not uniformly best in open-loop rollout, but it gives the lowest DA error on all four datasets.}
    \label{fig:main_inexact}
\end{figure}

\begin{figure}[!htbp]
    \centering
    \includegraphics[width=0.92\linewidth]{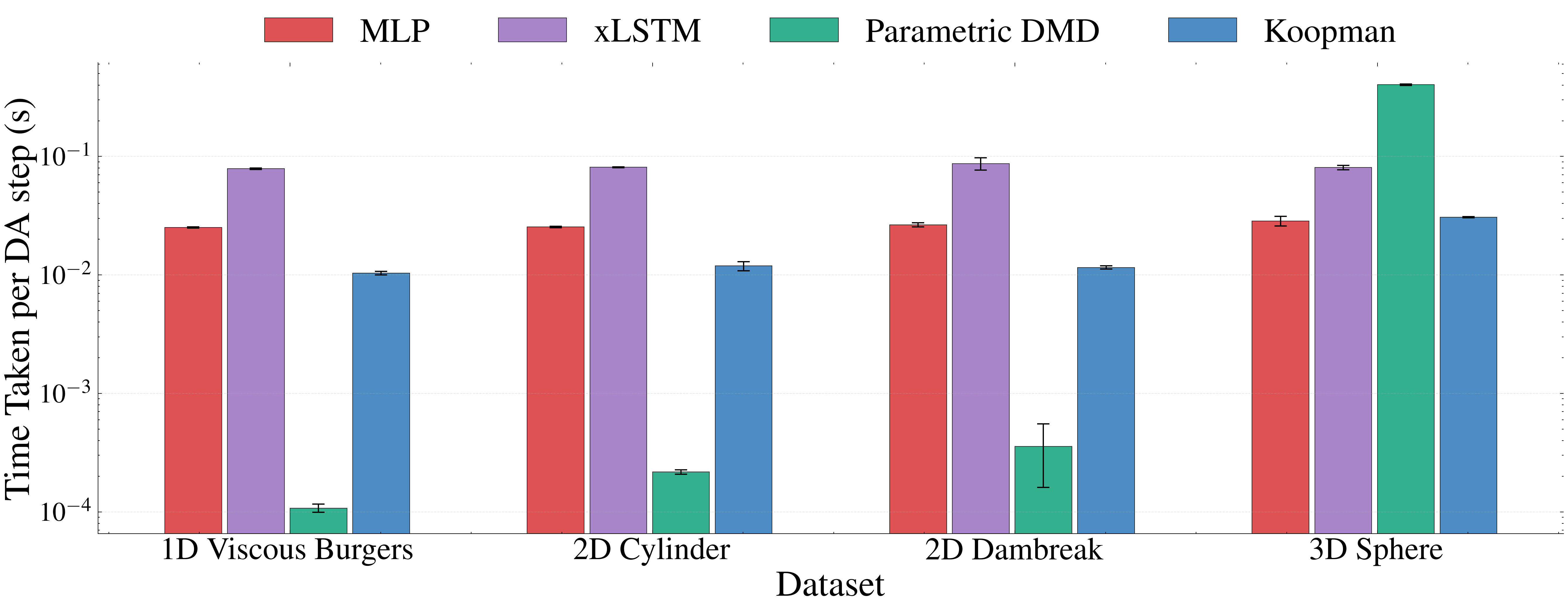}
    \caption{Comparison of the average wall-clock time per DA step for latent dimension $r=32$ and measurement noise level $20\%$. }
    \label{fig:main_time}
\end{figure}

\begin{figure}[!htbp]
    \centering

    \begin{subfigure}[!htbp]{\linewidth}
        \centering
        \begin{minipage}[!htbp]{0.025\linewidth}
            \vspace{0pt}
            \raggedright \hspace*{-0.4em}\textbf{(a)}
        \end{minipage}%
        \begin{minipage}[!htbp]{0.895\linewidth}
            \vspace{0pt}
            \centering
            \includegraphics[width=\linewidth]{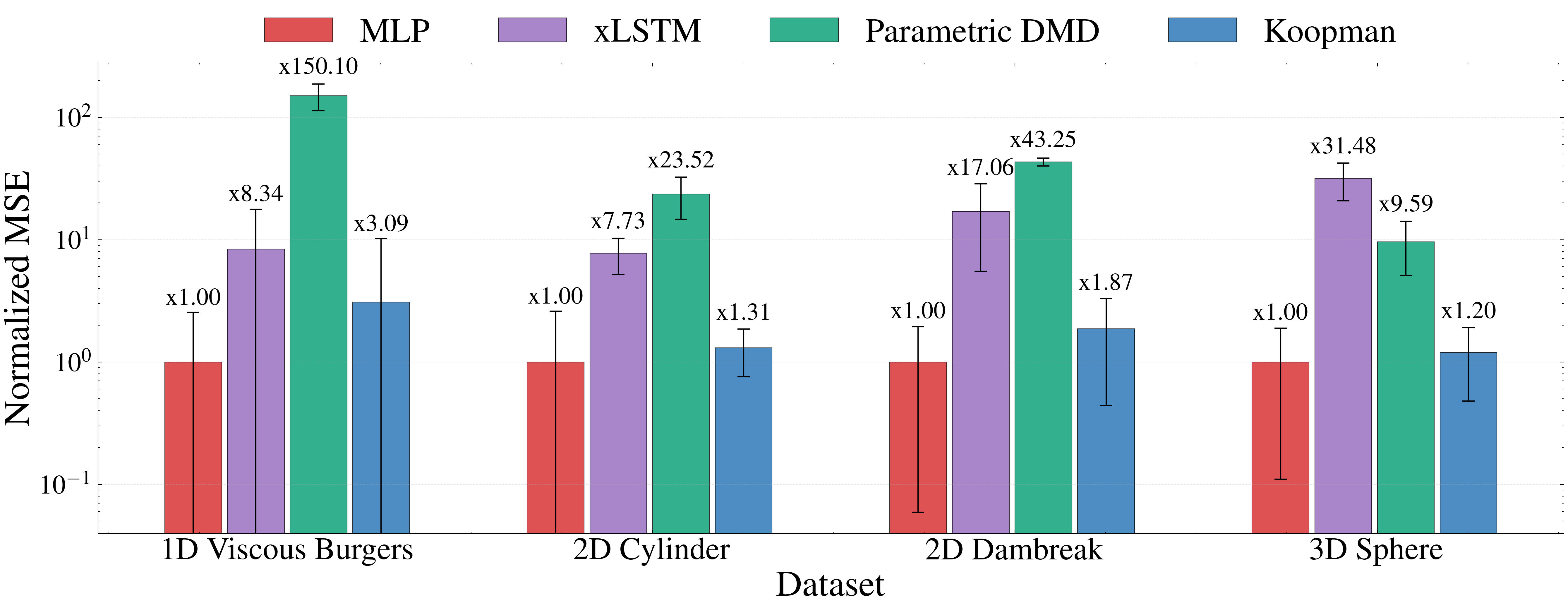}
        \end{minipage}
    \end{subfigure}

    \vspace{0.5em}

    \begin{subfigure}[!htbp]{\linewidth}
        \centering
        \begin{minipage}[!htbp]{0.025\linewidth}
            \vspace{0pt}
            \raggedright \hspace*{-0.4em}\textbf{(b)}
        \end{minipage}%
        \begin{minipage}[!htbp]{0.895\linewidth}
            \vspace{0pt}
            \centering
            \includegraphics[width=\linewidth]{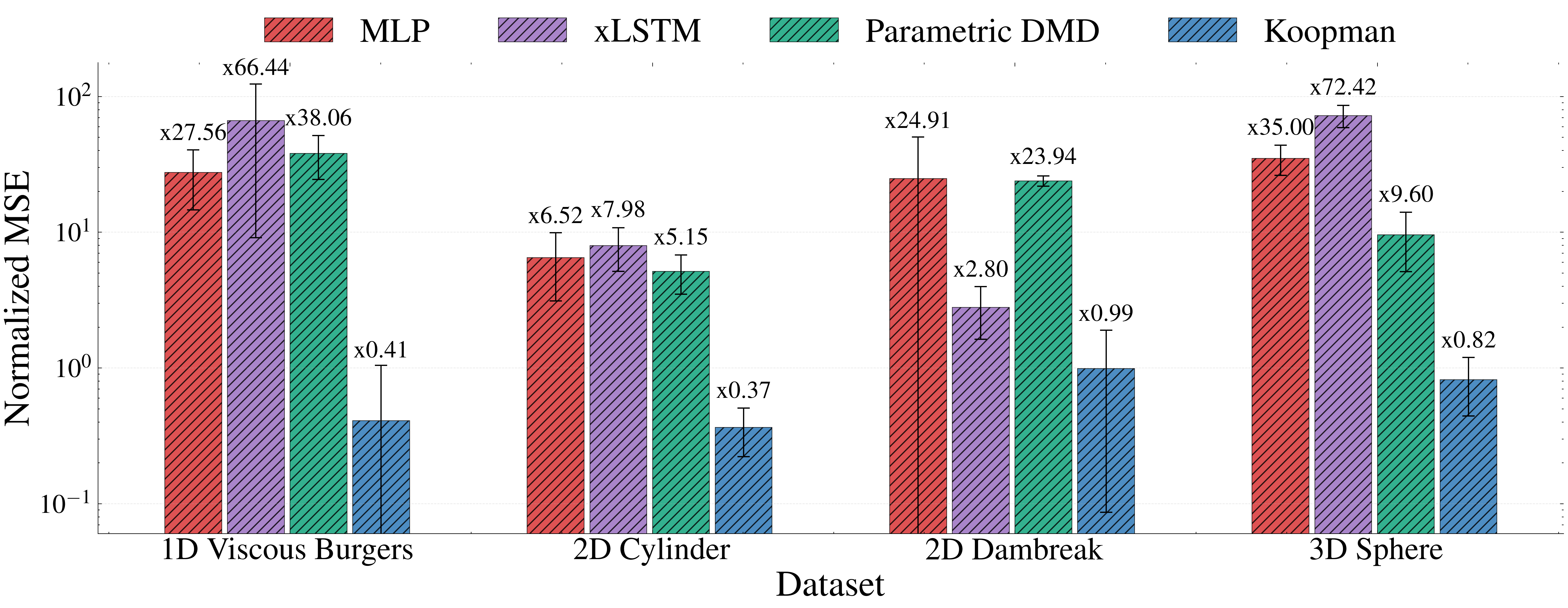}
        \end{minipage}
    \end{subfigure}

    \caption{Comparison of the different latent models in forward-run and DA settings using the exact initial condition, latent dimension $r=32$, and measurement noise level $20\%$. (a) Forward run. (b) Data assimilation. The same task-specific conclusion holds: Koopman dominates the assimilation task despite not having the best rollout performance. }
    \label{fig:main_exact}
\end{figure}

\begin{figure}[!htbp]
    \centering
    \includegraphics[width=0.92\linewidth]{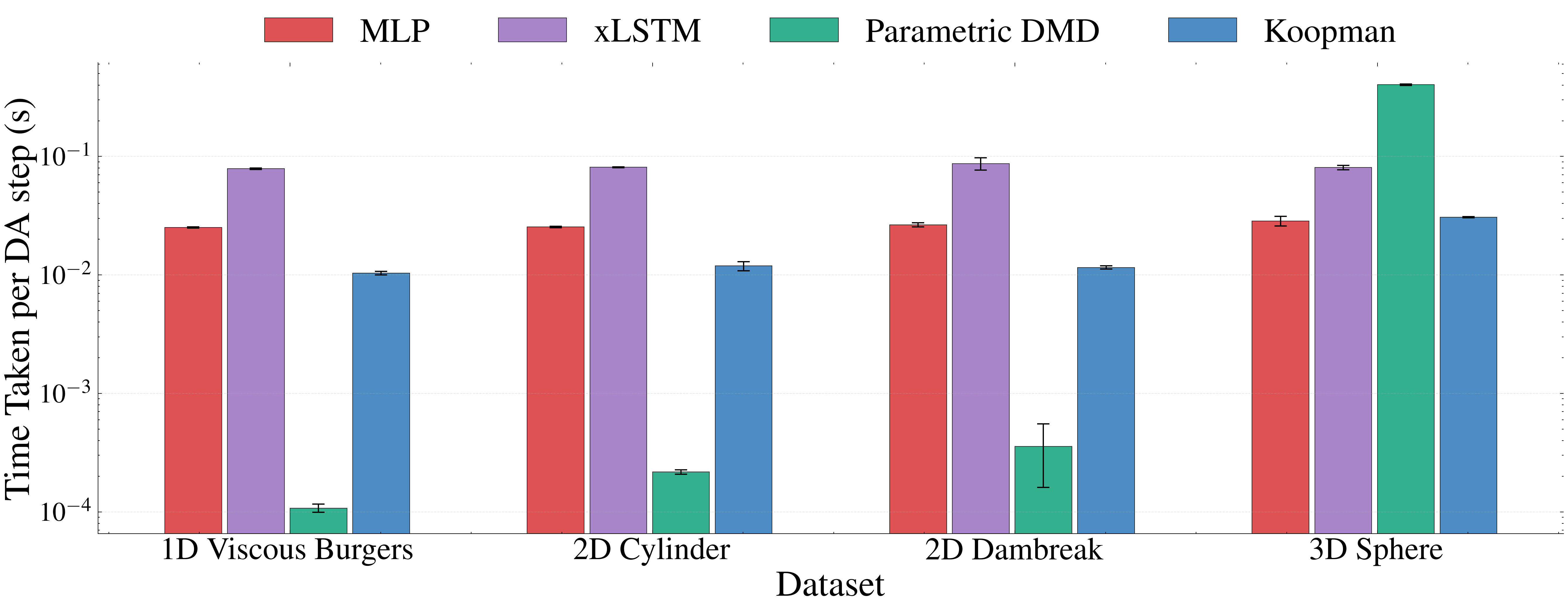}
    \caption{Comparison of the average wall-clock time per DA step for the exact-initialization setting, with latent dimension $r=32$ and measurement noise level $20\%$. }
    \label{fig:main_exact_time}
\end{figure}


The comparison between Koopman and parametric DMD is particularly revealing. Both baselines rely on linear time propagation in a reduced representation, yet their DA performance differs dramatically. The key distinction is that the Koopman model provides an explicit parameter-conditioned operator obtained from a continuous-time generator with skew-symmetric and dissipative components, so the discrete-time map $\mathbf{K}(\mu;\Delta t)$ remains spectrally controlled. The monolithic pDMD baseline instead combines a global POD basis, a global DMD forecast in stacked coefficient space, and interpolation across parameter space, without an explicit parameter-conditioned generator. This indicates that even among reduced models with linear time evolution, the particular operator parameterization produces substantial differences in assimilation performance; we do not attempt to apportion the effect between operator parameterization and the linear--nonlinear distinction, which this comparison does not separate. We quantify this in \cref{subsec:idealized_filtering_perspective}.

\subsection{Temporal error evolution and qualitative reconstruction}\label{subsec:temporal_results}

The final-time bar charts summarize the task, but the trajectory-wise error histories in \Cref{fig:temporal_1d_vb,fig:temporal_2d_cylinder,fig:temporal_2d_dambreak,fig:temporal_3d_sphere} show why the ranking changes under assimilation. In the forward-run setting, the nonlinear baselines can remain accurate for part of the horizon and then drift. When assimilation is activated, their per-trajectory error histories become more similar to one another but settle at a larger error than the corresponding forward runs, so the spread across test cases narrows while the error itself does not fall. These panels show error histories for different test trajectories rather than uncertainty bands around a single estimate, so the narrowing is a statement about consistency across cases and not about the filter's own uncertainty. The Koopman model behaves differently. Its error histories remain low and stable under assimilation, and the filtered trajectories do not show the same systematic upward shift.

This difference is visible in the representative qualitative reconstruction in \Cref{fig:qualitative_1d_vb,fig:cylinder_comp_main,fig:qualitative_2d_dambreak,fig:qualitative_3d_sphere}. In the 1D Burgers example, the Koopman reconstruction tracks the sharp transition and phase of the solution more closely than the competing models. For the 2D cylinder example, the Koopman estimate preserves the phase and amplitude of the wake structure across the trajectory, whereas the MLP and xLSTM either over-damp the wake or introduce distorted structures. In the 2D dambreak case, Koopman and xLSTM both recover the main interface location, but Koopman remains cleaner and more consistent over time. For the 3D sphere streamline slice, Koopman is uniformly the closest reconstruction to the ground truth.

\begin{figure}[!htbp]
    \centering
    \includegraphics[width=0.95\linewidth]{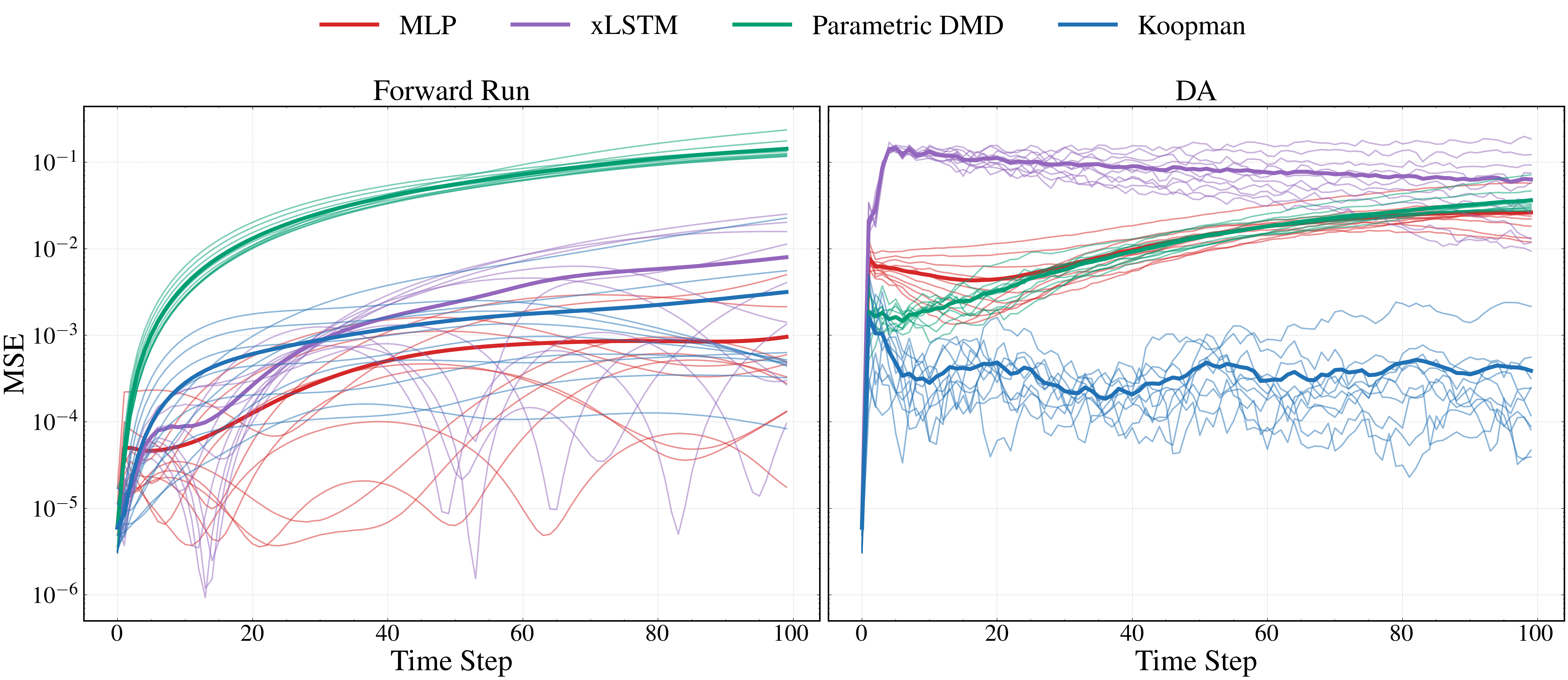}
    \caption{Temporal evolution of the forecast and assimilation errors for approximate initialization on 1D Viscous Burgers. Transparent lines show the error progression with time for each testing trajectory and the solid line represents the mean across all testing trajectories for a given forward model.}
    \label{fig:temporal_1d_vb}
\end{figure}

\begin{figure}[!htbp]
    \centering
    \includegraphics[width=0.95\linewidth]{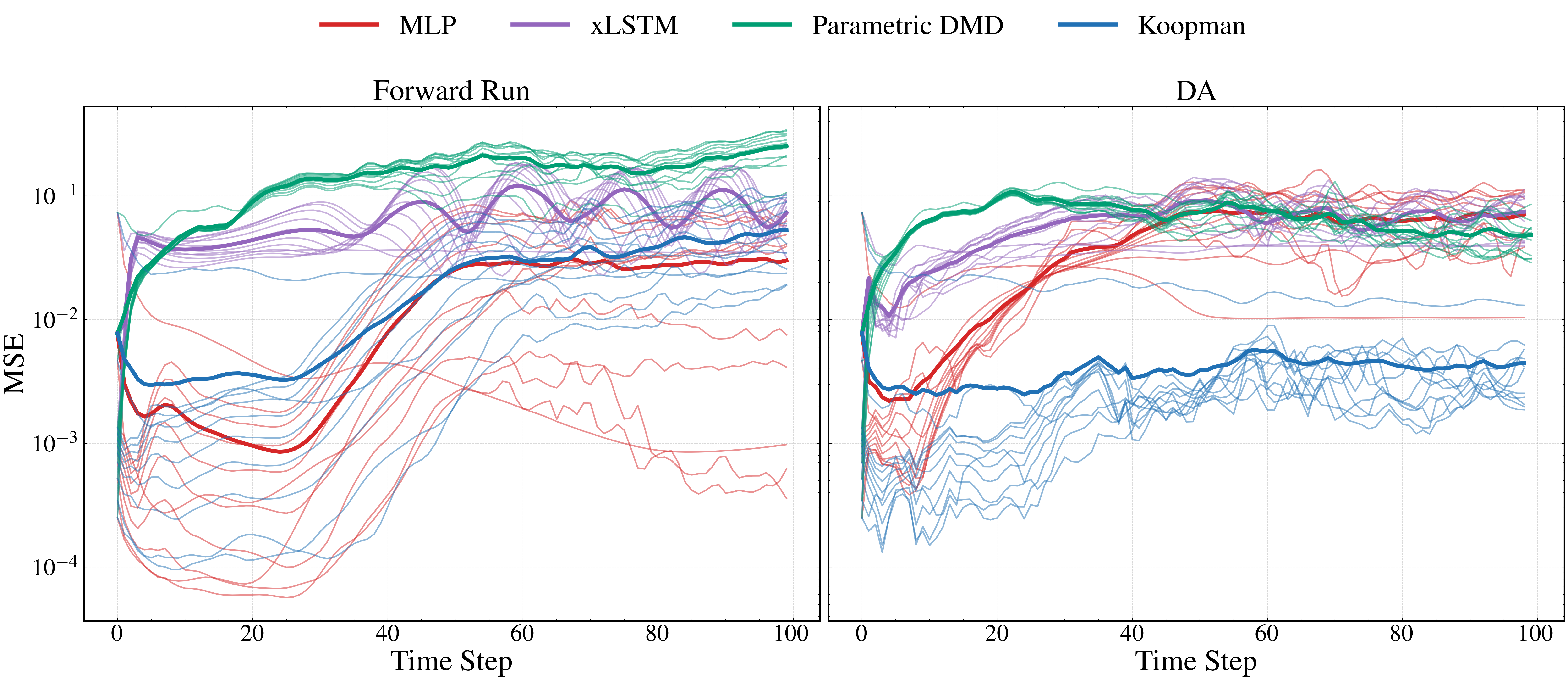}
    \caption{Temporal evolution of the forecast and assimilation errors for approximate initialization on the 2D cylinder. Transparent lines show the error progression with time for each testing trajectory and the solid line represents the mean across all testing trajectories for a given forward model.}
    \label{fig:temporal_2d_cylinder}
\end{figure}

\begin{figure}[!htbp]
    \centering
    \includegraphics[width=0.95\linewidth]{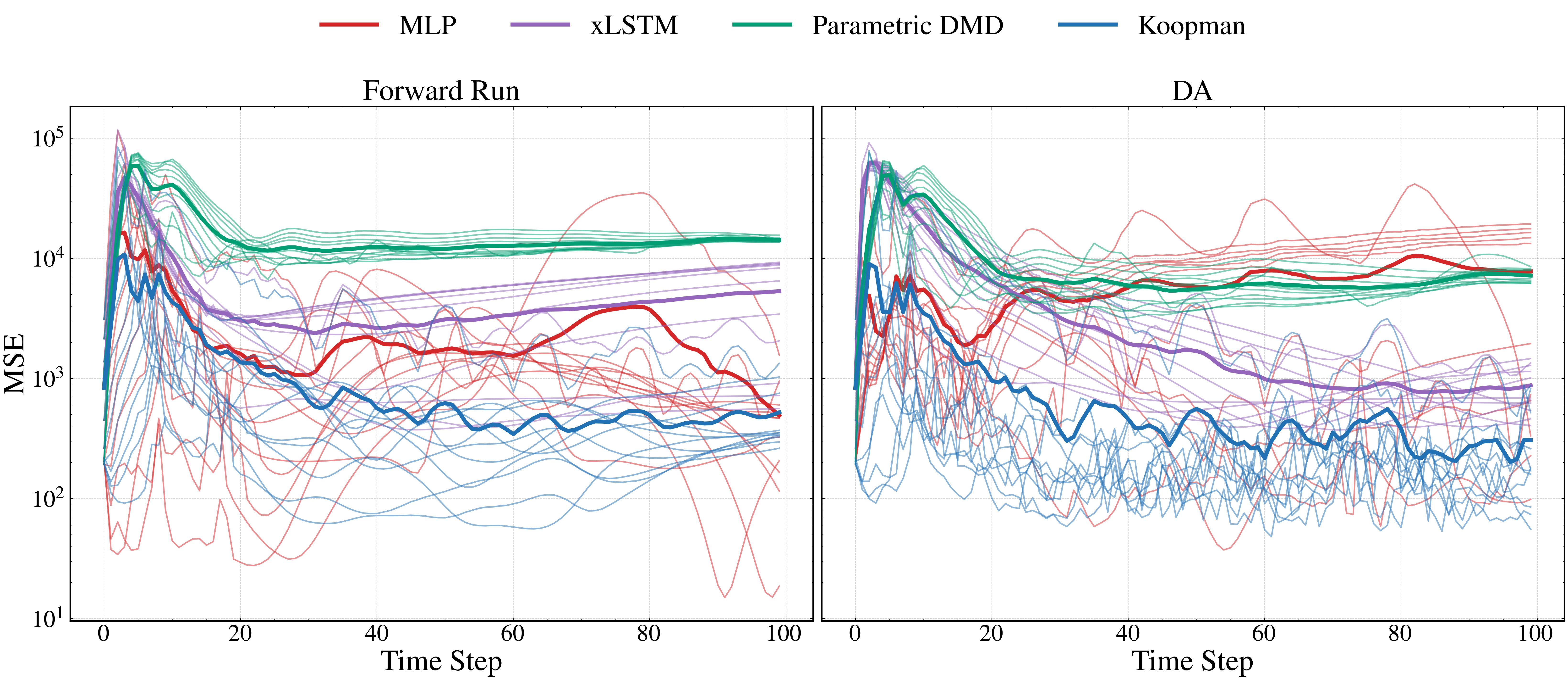}
    \caption{Temporal evolution of the forecast and assimilation errors for approximate initialization on 2D dambreak. Transparent lines show the error progression with time for each testing trajectory and the solid line represents the mean across all testing trajectories for a given forward model.}
    \label{fig:temporal_2d_dambreak}
\end{figure}

\begin{figure}[!htbp]
    \centering
    \includegraphics[width=0.95\linewidth]{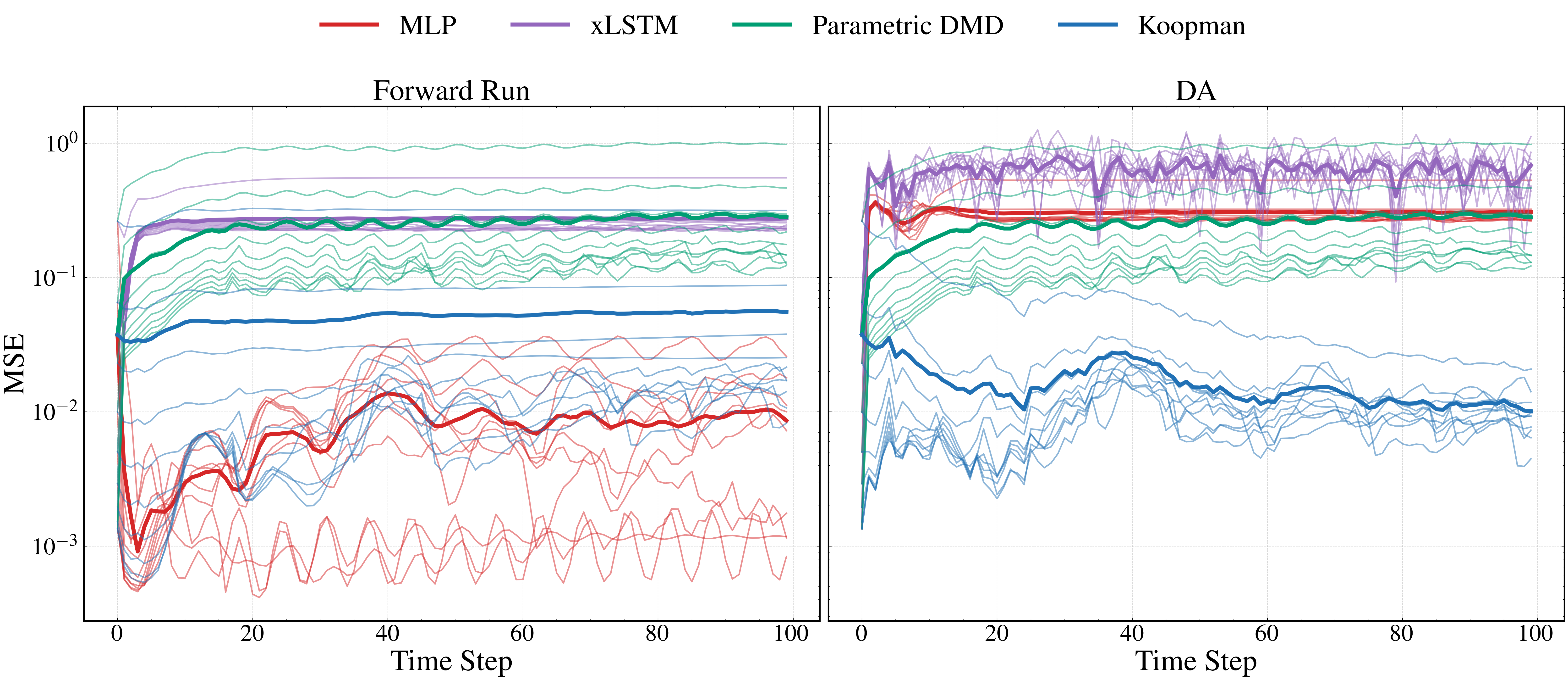}
    \caption{Temporal evolution of the forecast and assimilation errors for approximate initialization on the 3D sphere. Transparent lines show the error progression with time for each testing trajectory and the solid line represents the mean across all testing trajectories for a given forward model.}
    \label{fig:temporal_3d_sphere}
\end{figure}

\begin{figure}[!htbp]
    \centering
    \includegraphics[width=0.90\linewidth]{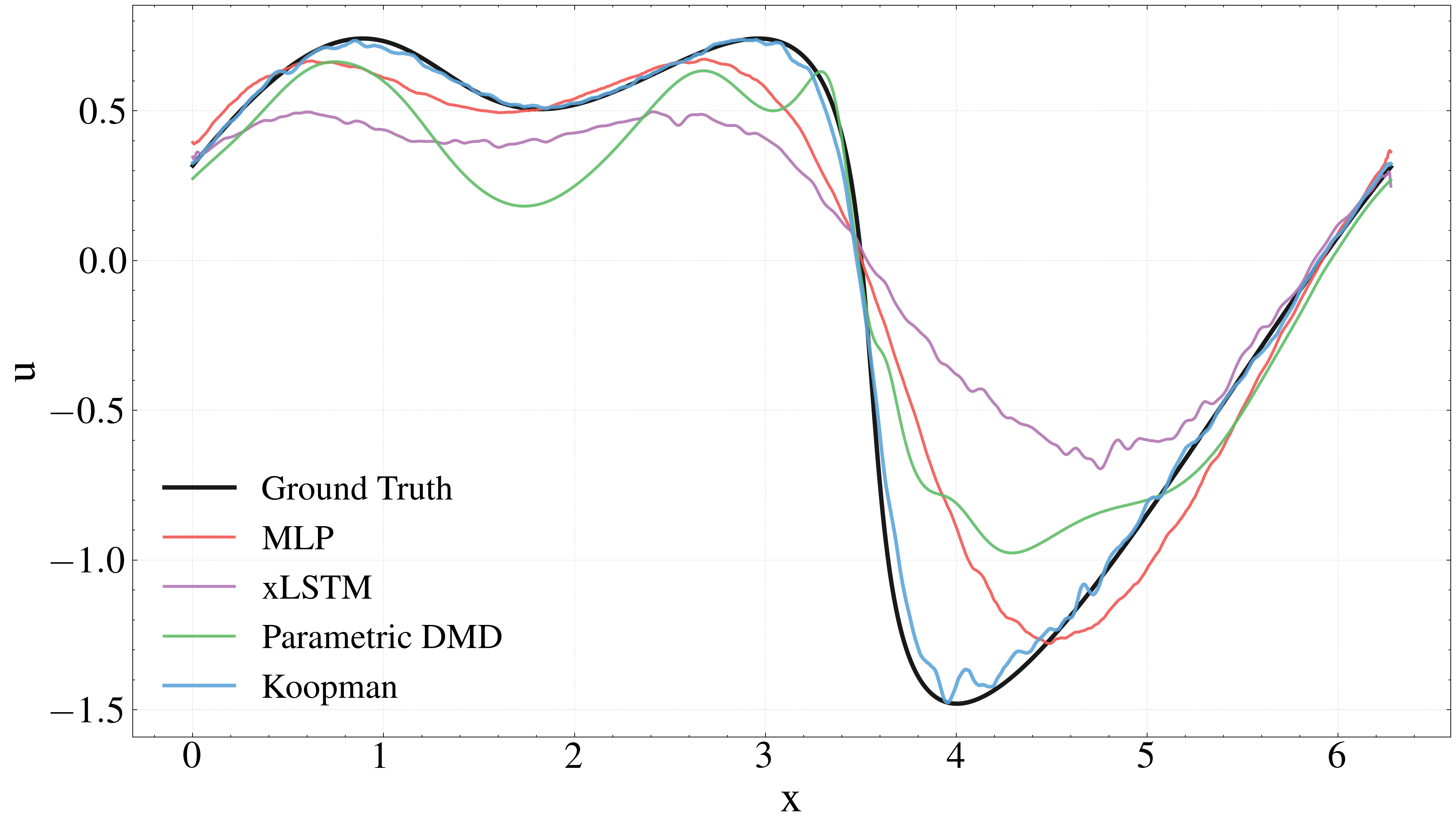}
    \caption{State values at the final timestep for 1D Viscous Burgers equation for a testing trajectory with $\nu=0.01$ under approximate initialization, latent dimension $r=32$, and $20\%$ measurement noise.}
    \label{fig:qualitative_1d_vb}
\end{figure}

\begin{figure}[!htbp]
    \centering
    \includegraphics[width=\linewidth]{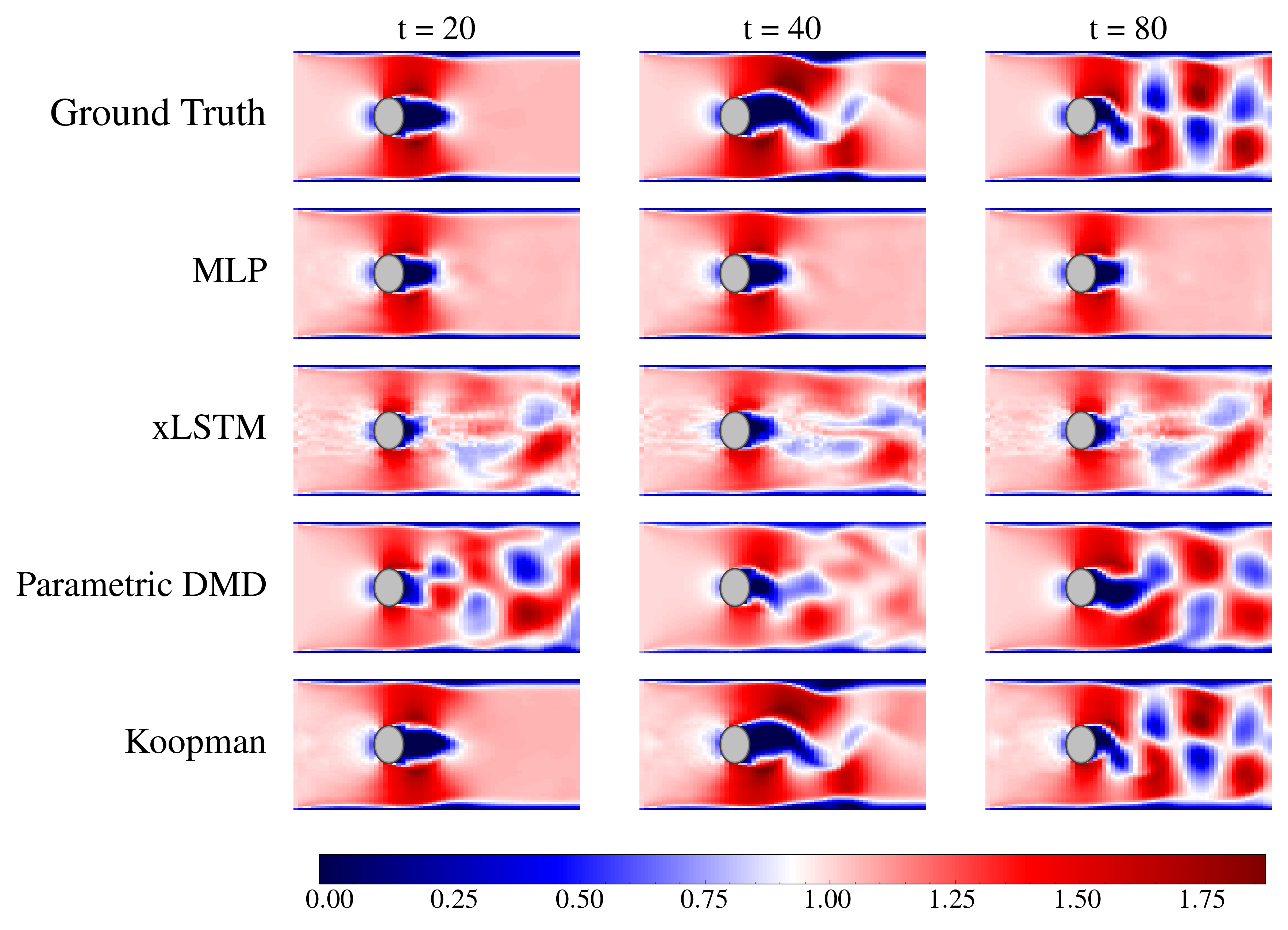}
    \caption{Reconstruction of 2D Cylinder velocity magnitude for a testing trajectory of $Re=715$ under data assimilation with approximate initialization, latent dimension $r=32$, and $20\%$ of measurement noise.}
    \label{fig:cylinder_comp_main}
\end{figure}

\begin{figure}[!htbp]
    \centering
    \includegraphics[width=\linewidth]{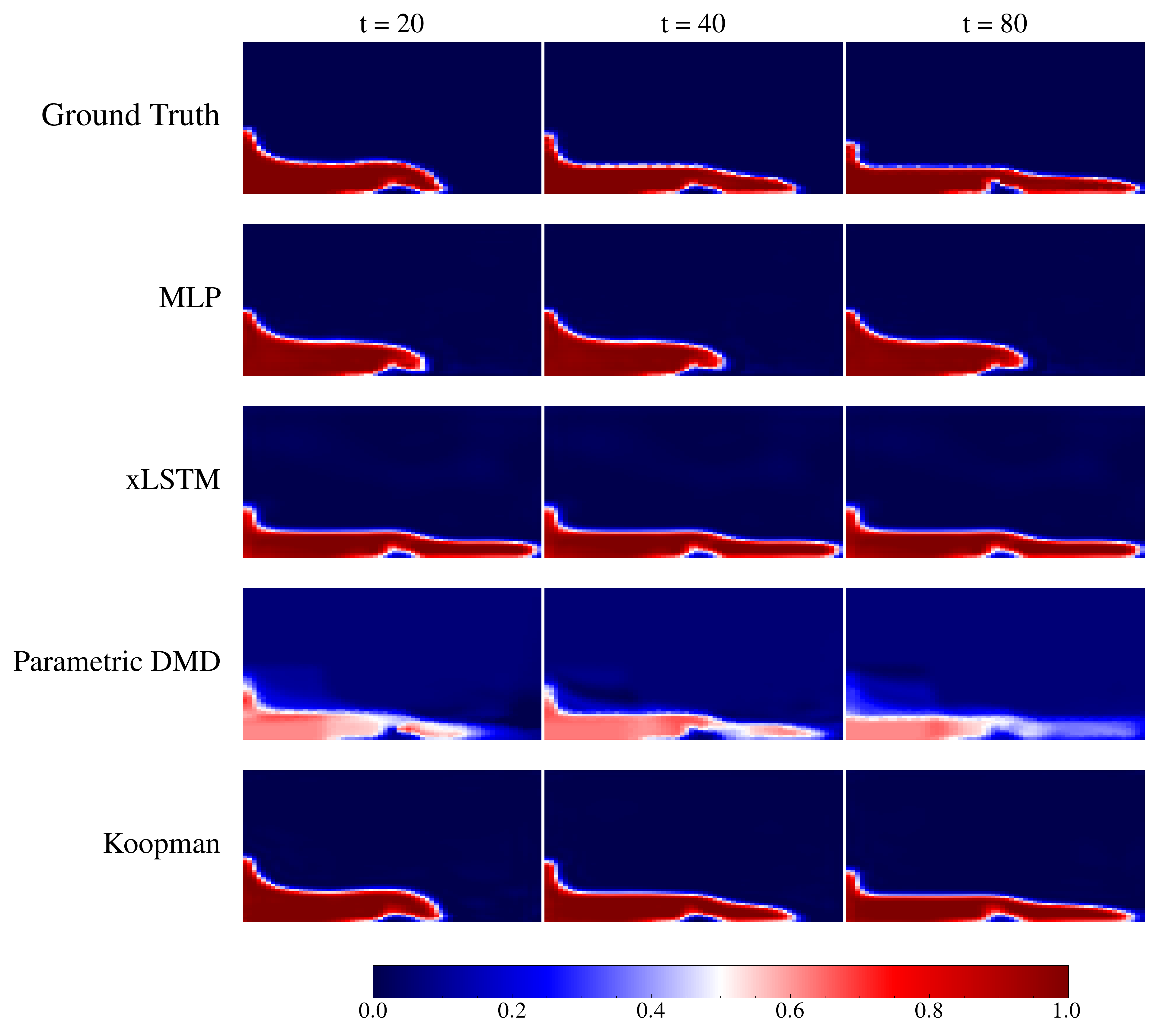}
    \caption{Reconstruction of 2D Dambreak liquid volume fraction for a testing trajectory of $\nu_{w}=0.08\,\mathrm{m}^{2}\mathrm{s}^{-1}$ under data assimilation with approximate initialization, latent dimension $r=32$, and $20\%$ of measurement noise.}
    \label{fig:qualitative_2d_dambreak}
\end{figure}

\begin{figure}[!htbp]
    \centering
    \includegraphics[width=\linewidth]{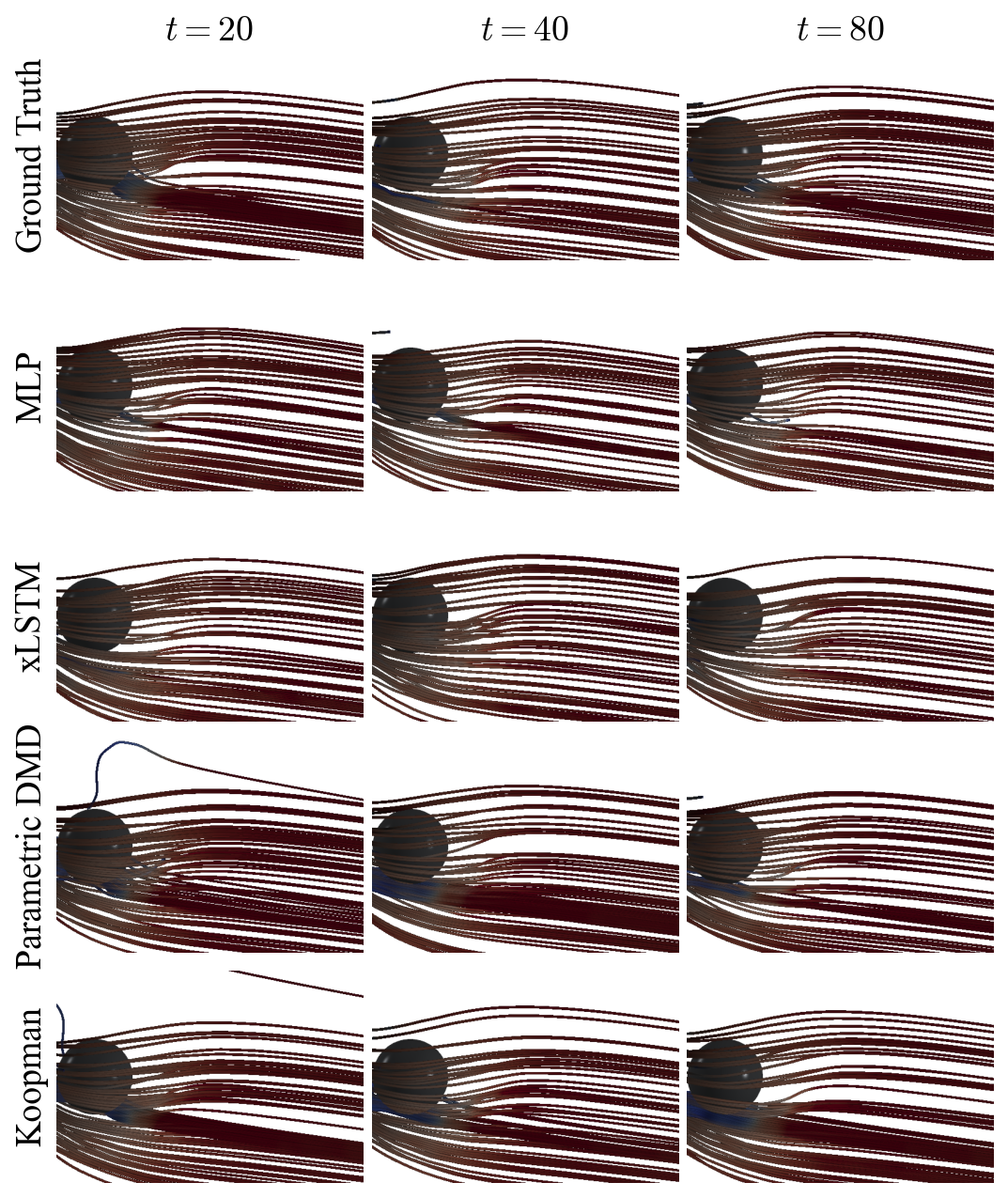}
    \caption{Reconstruction of streamlines (colored by velocity magnitude) over 3D Sphere for a testing trajectory of $Re=202$ under data assimilation with approximate initialization, latent dimension $r=32$, and $20\%$ of measurement noise.}
    \label{fig:qualitative_3d_sphere}
\end{figure}
\subsection{Sensitivity to measurement noise and latent dimension}\label{app:ablation}

We study the sensitivity of sensing performance to measurement noise. We present the DA results here under approximate initialization. The corresponding forward-run comparison is already shown in panel (a) of \Cref{fig:main_inexact}.

\paragraph*{Measurement-noise sensitivity.}
\Cref{fig:noise_ablation} varies the observation noise while keeping the latent dimension fixed at $r=32$. Koopman remains the strongest DA model on all four datasets at every tested observation-noise level.

\begin{figure}[!htbp]
    \centering
    \begin{subfigure}[t]{0.92\linewidth}
        \centering
        \includegraphics[width=\linewidth]{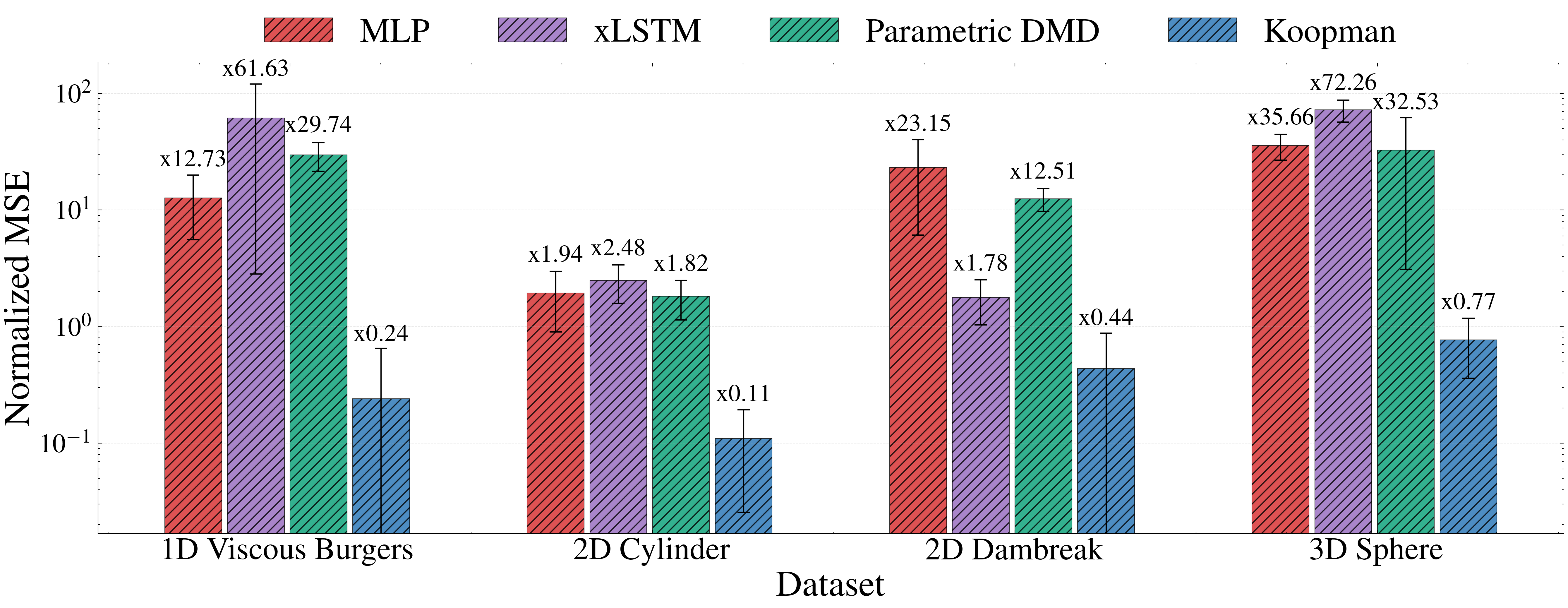}
        \caption{DA with $10\%$ noise.}
    \end{subfigure}

    \vspace{0.5em}

    \begin{subfigure}[t]{0.92\linewidth}
        \centering
        \includegraphics[width=\linewidth]{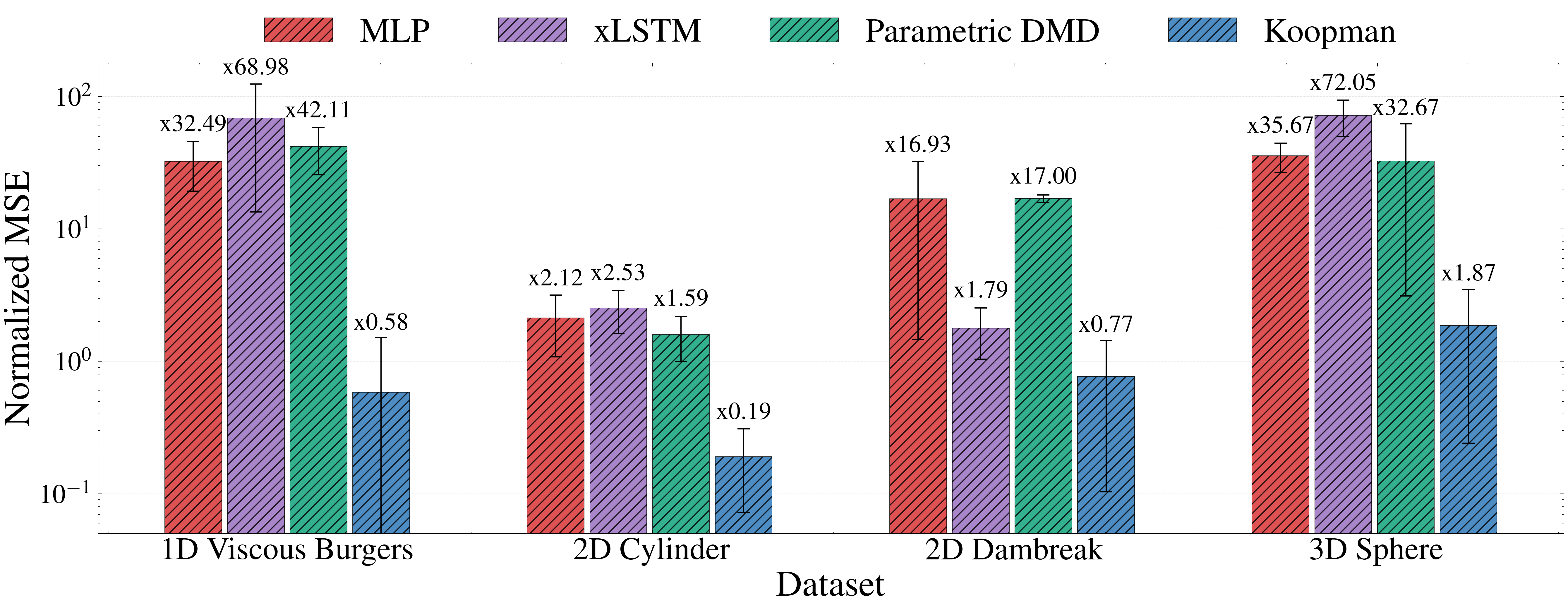}
        \caption{DA with $30\%$ noise.}
    \end{subfigure}

    \caption{Measurement-noise ablation for approximate initialization and latent dimension $r=32$. Across both noise levels, Koopman remains the strongest DA model. }
    \label{fig:noise_ablation}
\end{figure}

\paragraph*{Latent-dimension sensitivity.}
\Cref{fig:latent_ablation} compares smaller and larger latent spaces against the $r=32$. At $r=16$, Koopman already gives the lowest DA error on all four datasets. At $r=64$, the gap widens on some problems, especially 1D Burgers and the 3D sphere, where the nonlinear baselines degrade more noticeably under DA (on the sphere the MLP-to-Koopman DA error ratio grows from about $30$ at $r=32$ to about $55$ at $r=64$). The conclusion is therefore not tied to a single latent dimension.

\begin{figure}[!htbp]
    \centering
    \begin{subfigure}[t]{0.92\linewidth}
        \centering
        \includegraphics[width=\linewidth]{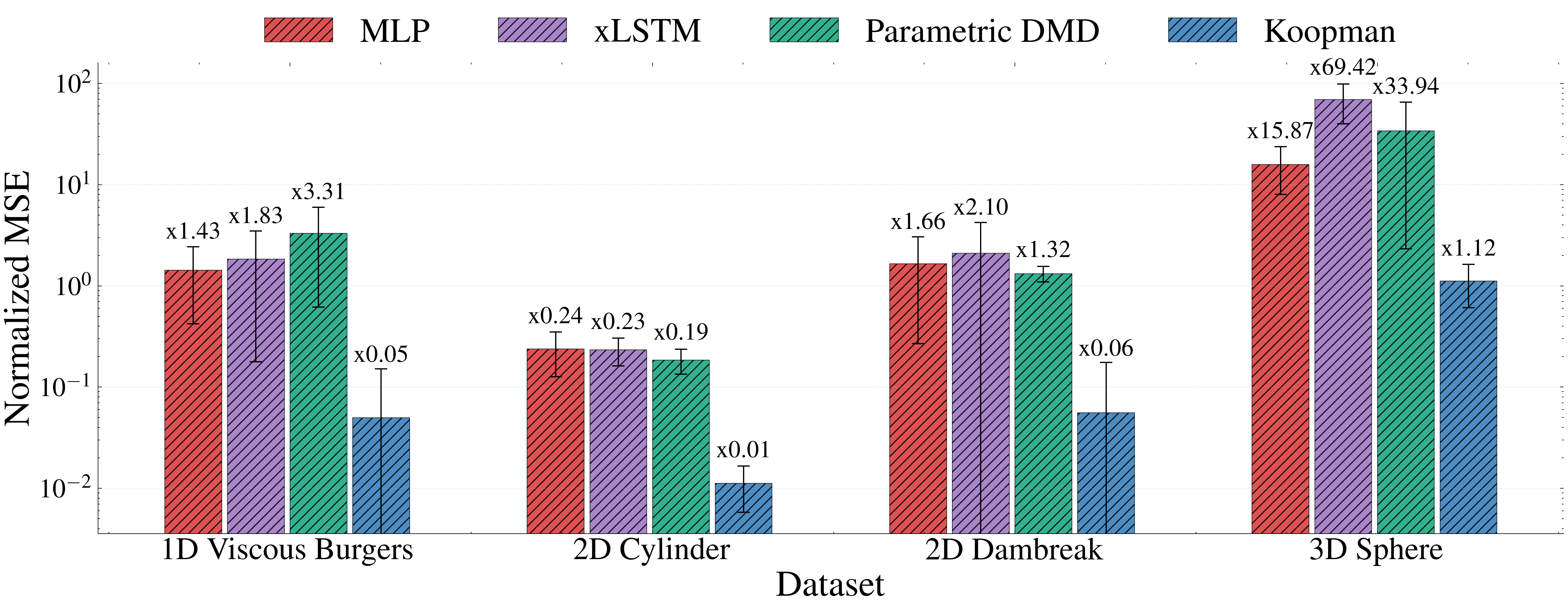}
        \caption{DA with $r=16$.}
    \end{subfigure}

    \vspace{0.5em}

    \begin{subfigure}[t]{0.92\linewidth}
        \centering
        \includegraphics[width=\linewidth]{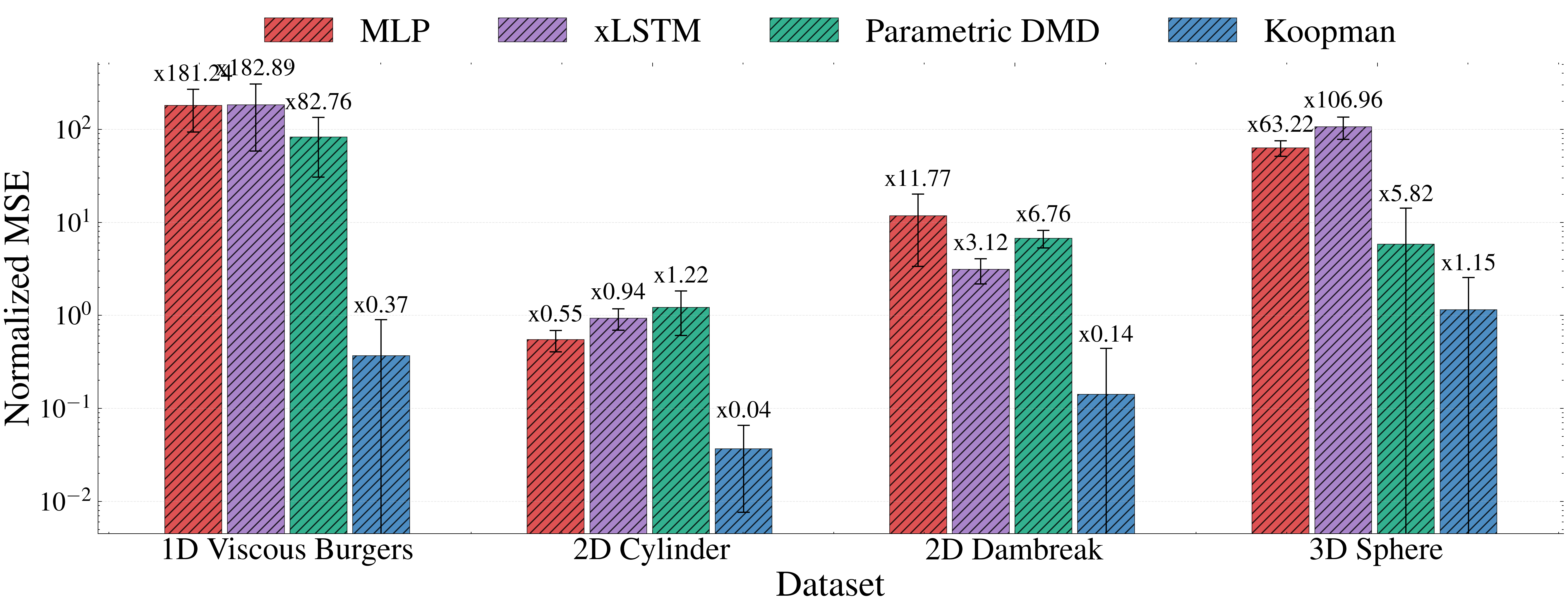}
        \caption{DA with $r=64$.}
    \end{subfigure}

    \caption{Latent-dimension ablation for approximate initialization and $20\%$ measurement noise. Relative to the $r=32$ main-text setting, Koopman remains the strongest DA model at both smaller and larger latent dimensions. Each panel is normalized by the forward-run MLP error at the latent dimension of that panel, so values are comparable within a panel but not across panels.}
    \label{fig:latent_ablation}
\end{figure}

\subsection{Ensemble Kalman Filter}\label{sec:enkf}

To test whether the Koopman advantage is tied specifically to Jacobian-based filtering, we repeat the inexact-initialization experiment with an Ensemble Kalman Filter (EnKF). We use 20 ensemble members and initialize the filter with the same numerical covariance settings as the EKF, so the two filters are compared under the same nominal configuration. These values are maintained across all models and experiments. The qualitative conclusion remains largely unchanged. As shown in \cref{fig:enkf_results}, Koopman attains the lowest EnKF error on 1D Burgers, the 2D cylinder and the 2D dambreak; only on the 3D sphere is it beaten, by the xLSTM ($6.5\times10^{-3}$ against $1.2\times10^{-2}$). The Burgers margin over the MLP narrows sharply relative to the EKF, from a factor of about $67$ to about $1.2$. The change is systematic and it is worth stating precisely, because it is the sharpest
evidence in the paper for the mechanism proposed in \cref{subsec:idealized_filtering_perspective}.
\Cref{tab:enkf_accuracy} compares each model's open-loop rollout with its EnKF estimate under the
same sensors, noise and initialization.

\begin{table}[!htbp]
\centering
\small
\setlength{\tabcolsep}{6pt}
\renewcommand{\arraystretch}{1.15}
\begin{adjustbox}{max width=\linewidth}
\begin{tabular}{llcccc}
\toprule
Model & Estimate & 1D Burgers & 2D cylinder & 2D dambreak & 3D sphere \\
\midrule
Koopman & rollout & $3.17\!\times\!10^{-3}$ & $5.33\!\times\!10^{-2}$ & $5.23\!\times\!10^{2}$ & $5.56\!\times\!10^{-2}$ \\
 & EnKF & $7.62\!\times\!10^{-4}$ & $8.51\!\times\!10^{-3}$ & $3.80\!\times\!10^{2}$ & $1.23\!\times\!10^{-2}$ \\
pDMD & rollout & $1.43\!\times\!10^{-1}$ & $2.54\!\times\!10^{-1}$ & $1.43\!\times\!10^{4}$ & $2.81\!\times\!10^{-1}$ \\
 & EnKF & $6.22\!\times\!10^{-2}$ & $5.86\!\times\!10^{-2}$ & $7.74\!\times\!10^{3}$ & $1.15\!\times\!10^{-1}$ \\
MLP & rollout & $9.60\!\times\!10^{-4}$ & $3.01\!\times\!10^{-2}$ & $4.91\!\times\!10^{2}$ & $8.60\!\times\!10^{-3}$ \\
 & EnKF & $9.27\!\times\!10^{-4}$ & $2.31\!\times\!10^{-2}$ & $5.13\!\times\!10^{2}$ & $9.13\!\times\!10^{-3}$ \\
xLSTM & rollout & $8.00\!\times\!10^{-3}$ & $7.31\!\times\!10^{-2}$ & $5.34\!\times\!10^{3}$ & $2.76\!\times\!10^{-1}$ \\
 & EnKF & $1.42\!\times\!10^{-3}$ & $7.31\!\times\!10^{-2}$ & $5.34\!\times\!10^{3}$ & $\mathbf{6.45\!\times\!10^{-3}}$ \\
\midrule
Koopman & EKF & $\mathbf{3.90\!\times\!10^{-4}}$ & $\mathbf{4.45\!\times\!10^{-3}}$ & $\mathbf{3.06\!\times\!10^{2}}$ & $1.01\!\times\!10^{-2}$ \\
\bottomrule
\end{tabular}
\end{adjustbox}
\caption{Final-time MSE under the ensemble Kalman filter with $n=20$ members, compared with each
model's own open-loop rollout, for the main experimental setting ($r=32$, $m=16$ sensors, $20\%$
measurement noise, approximate initialization). Under the EnKF the MLP improves on two of the four
benchmarks instead of none and the xLSTM improves or is inert on all four, so the
linear--nonlinear split of \cref{sec:results} is specific to the EKF. The last column repeats the
Koopman EKF result of the main experiment for reference.}
\label{tab:enkf_accuracy}
\end{table}

Under the EnKF the MLP improves on two of the four benchmarks instead of none, and the xLSTM
improves or is inert on all four. The linear--nonlinear split of \cref{sec:results} is
therefore specific to the EKF and is not a property of nonlinear latent dynamics as such. This is
consistent with the decomposition of \cref{lem:local_ekf_error_decomposition} in
\cref{subsec:idealized_filtering_perspective}, in which the forecast-linearization residual
$\boldsymbol{\rho}^f$ enters only through the explicit Jacobian $\mathbf{F}_k$ and vanishes for a
linear transition; an ensemble filter forms no such Jacobian. 
\Cref{lem:local_ekf_error_decomposition} is an EKF error identity whose remainder is a signed vector
rather than a one-sided penalty, and an ensemble filter is not the same filter with one term deleted:
it propagates members through the nonlinear forecast and estimates the moments from that ensemble.
The comparison shows the nonlinear baselines to be sensitive to the EKF's local approximation, but it
does not isolate forecast linearization as the sole cause.

The last column of \cref{tab:enkf_accuracy} shows that Koopman under the
EKF is the most accurate configuration on 1D Burgers, the 2D cylinder and the 2D
dambreak, beating \emph{every} EnKF result including its own, and it is beaten only on the 3D sphere.
The margin over the best EnKF configuration is a factor of $2.0$ on Burgers
($3.90\times10^{-4}$ against $7.62\times10^{-4}$), $1.9$ on the cylinder ($4.45\times10^{-3}$ against
$8.51\times10^{-3}$) and $1.2$ on the dambreak ($3.06\times10^{2}$ against $3.80\times10^{2}$).


The cost of the ensemble filter lives in the ensemble size $n$, which multiplies the number of
forward passes at every step. \Cref{tab:ensemble_size} shows the Koopman EnKF error falling by a factor of $1.9$ between $n=20$ and $n=40$ and then changing little, to the level of the Koopman EKF, and the MLP improving by $34\%$ over the range, almost all of it by $n=40$. The xLSTM behaves differently: its error is the same at every ensemble size and equals its own open-loop error to three significant figures. Its forecast ensemble contracts strongly. Averaged over the eleven cylinder test trajectories, the ensemble spread falls from about $4.6$ at the first analysis step to $0.12$--$0.13$ within ten steps and stays there, and the spectral norm of the Kalman gain falls from about $4$ to about $10^{-2}$, against $0.15$--$1.4$ for the Koopman model and $6$--$8$ for the MLP over the same steps. The measurement updates therefore change the xLSTM estimate very little, and its final-time error is indistinguishable from its open-loop error at the reported precision. This does not show that the filtered and open-loop trajectories coincide, since updates made while the gain is still large can move the estimate. Enlarging the ensemble to $n=64$ leaves both the gain and the error essentially unchanged, which is consistent with the contraction coming from the learned latent map rather than from the sample size. Each entry is a single run, so these are trends rather than converged values. Recovering the
MLP under an ensemble filter requires an ensemble large enough to resolve the latent
covariance, and the cost scales linearly with it. Under the EKF, Koopman reaches the same cylinder accuracy with no ensemble at all.
The main advantage of Koopman is not simply that its Jacobian is easy to compute; it is that the underlying latent dynamics remain more compatible with sequential state estimation.

\begin{table}[!htbp]
\centering
\setlength{\tabcolsep}{6pt}
\renewcommand{\arraystretch}{1.15}
\begin{adjustbox}{max width=\linewidth}
\begin{tabular}{lccc}
\toprule
EnKF final-time MSE (2D cylinder) & $n=20$ & $n=40$ & $n=64$ \\
\midrule
Koopman & $8.51\!\times\!10^{-3}$ & $4.46\!\times\!10^{-3}$ & $4.42\!\times\!10^{-3}$ \\
MLP     & $2.31\!\times\!10^{-2}$ & $1.56\!\times\!10^{-2}$ & $1.52\!\times\!10^{-2}$ \\
xLSTM   & $7.31\!\times\!10^{-2}$ & $7.31\!\times\!10^{-2}$ & $7.31\!\times\!10^{-2}$ \\
\bottomrule
\end{tabular}
\end{adjustbox}
\caption{Effect of the ensemble size on the 2D cylinder, with everything but $n$ held fixed at the
main experimental setting, computed with the same filter implementation as \cref{tab:enkf_accuracy}, whose $n=20$ column it reproduces. Each additional member is an additional forward pass of the latent model at every
step, so accuracy that has to be bought with a larger ensemble is bought at a proportionate cost.}
\label{tab:ensemble_size}
\end{table}

\begin{figure}[!htbp]
    \centering
    \includegraphics[width=0.92\linewidth]{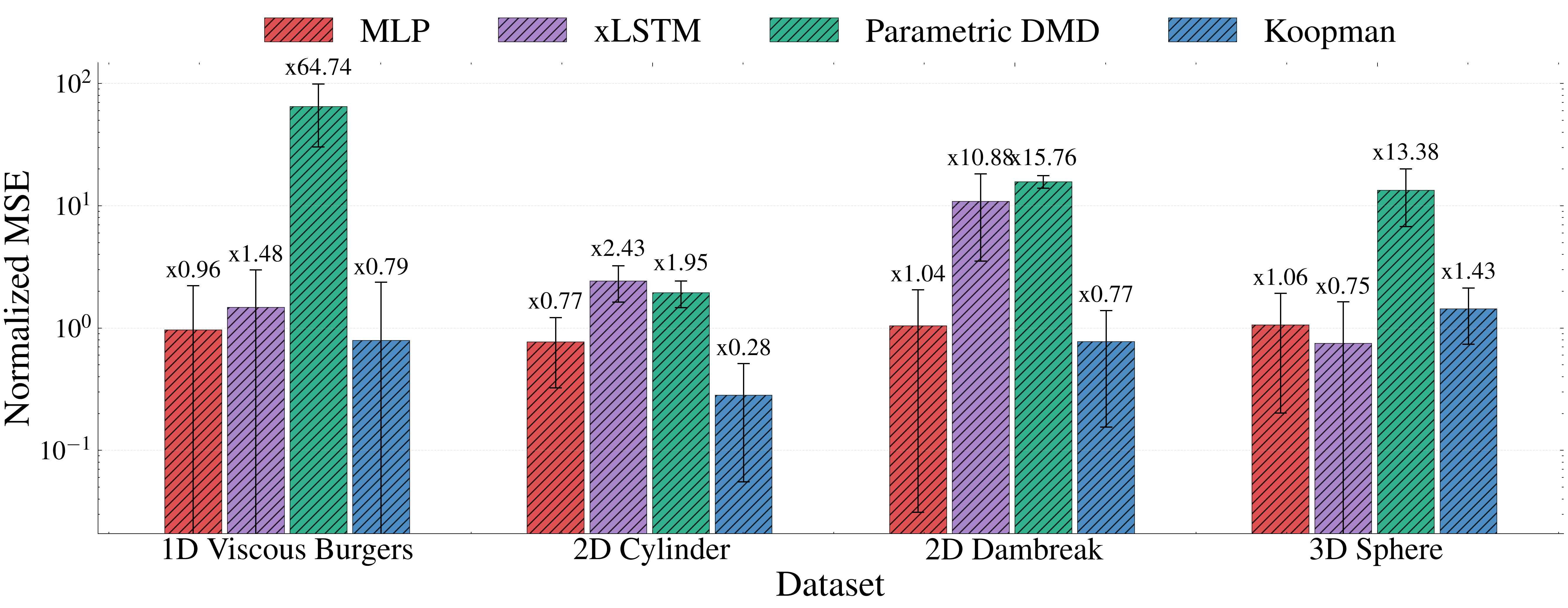}
    \caption{EnKF assimilation results for approximate initialization, latent dimension $r=32$, and $20\%$ measurement noise. Koopman attains the lowest EnKF error on the 2D cylinder and 2D dambreak and is close to the best on 1D Burgers. Unlike under the EKF, the nonlinear models are no longer systematically degraded by assimilation.}
    \label{fig:enkf_results}
\end{figure}



\subsection{Performance under multi-step training}

The main results are reported for one-step training. To assess whether the observed advantage of Koopman latent dynamics is an artifact of short-horizon optimization, we additionally train MLP and Koopman baselines (since they are the two most competitive models in the study) with multi-step objectives using 2-, 4-, and 8-step losses. Each panel of \cref{fig:multistep_comparison_main} is normalized by the MLP rollout error at
that same training horizon, so the panels compare models within a horizon and are not on a common
scale across horizons; the figure therefore supports comparisons between models at a fixed horizon
rather than statements about how the absolute error moves as the horizon grows. In absolute terms the
rollout error does move with the horizon, and not monotonically: the MLP cylinder rollout error is
$3.01\times10^{-2}$, $1.42\times10^{-1}$, $1.28\times10^{-2}$ and $2.12\times10^{-2}$ at horizons
$1$, $2$, $4$ and $8$. The assimilation ordering is nonetheless unchanged at every horizon; on the
cylinder the Koopman DA error is $4.45\times10^{-3}$, $2.60\times10^{-3}$, $3.14\times10^{-3}$ and
$2.91\times10^{-3}$ against $7.07\times10^{-2}$, $4.53\times10^{-2}$, $6.79\times10^{-2}$ and
$8.24\times10^{-2}$ for the MLP, a margin of between $15$ and $28$ times. Across all training horizons considered, the Koopman model yields consistently lower assimilation error than the MLP baseline. Extending the training horizon therefore does not close the assimilation gap between the models.
\begin{figure}[!htbp]
    \centering

    \begin{subfigure}[!htbp]{\linewidth}
        \centering
        \includegraphics[width=0.70\linewidth]{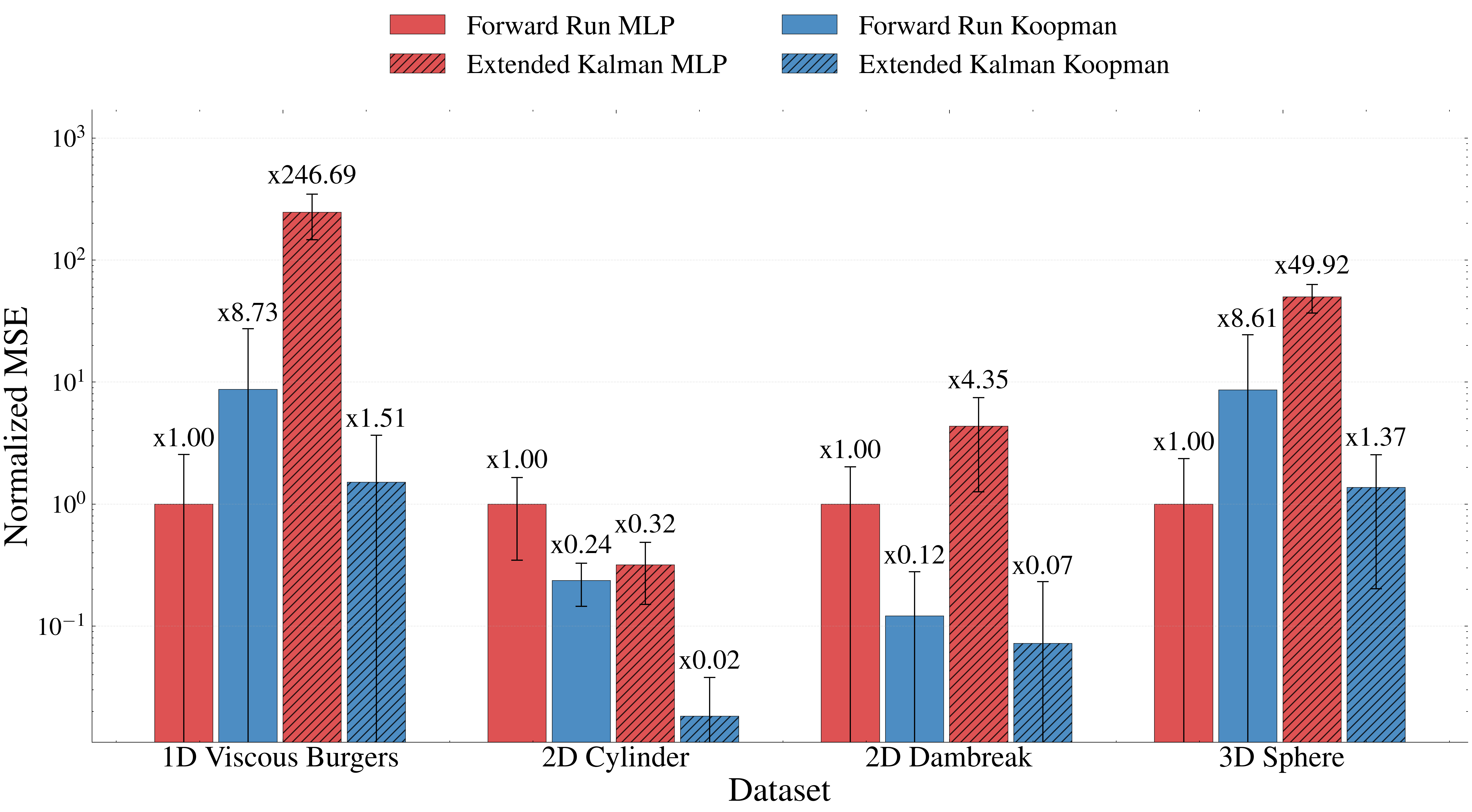}
        \caption{2-step training loss.}
    \end{subfigure}

    \medskip
    \begin{subfigure}[!htbp]{\linewidth}
        \centering
        \includegraphics[width=0.70\linewidth]{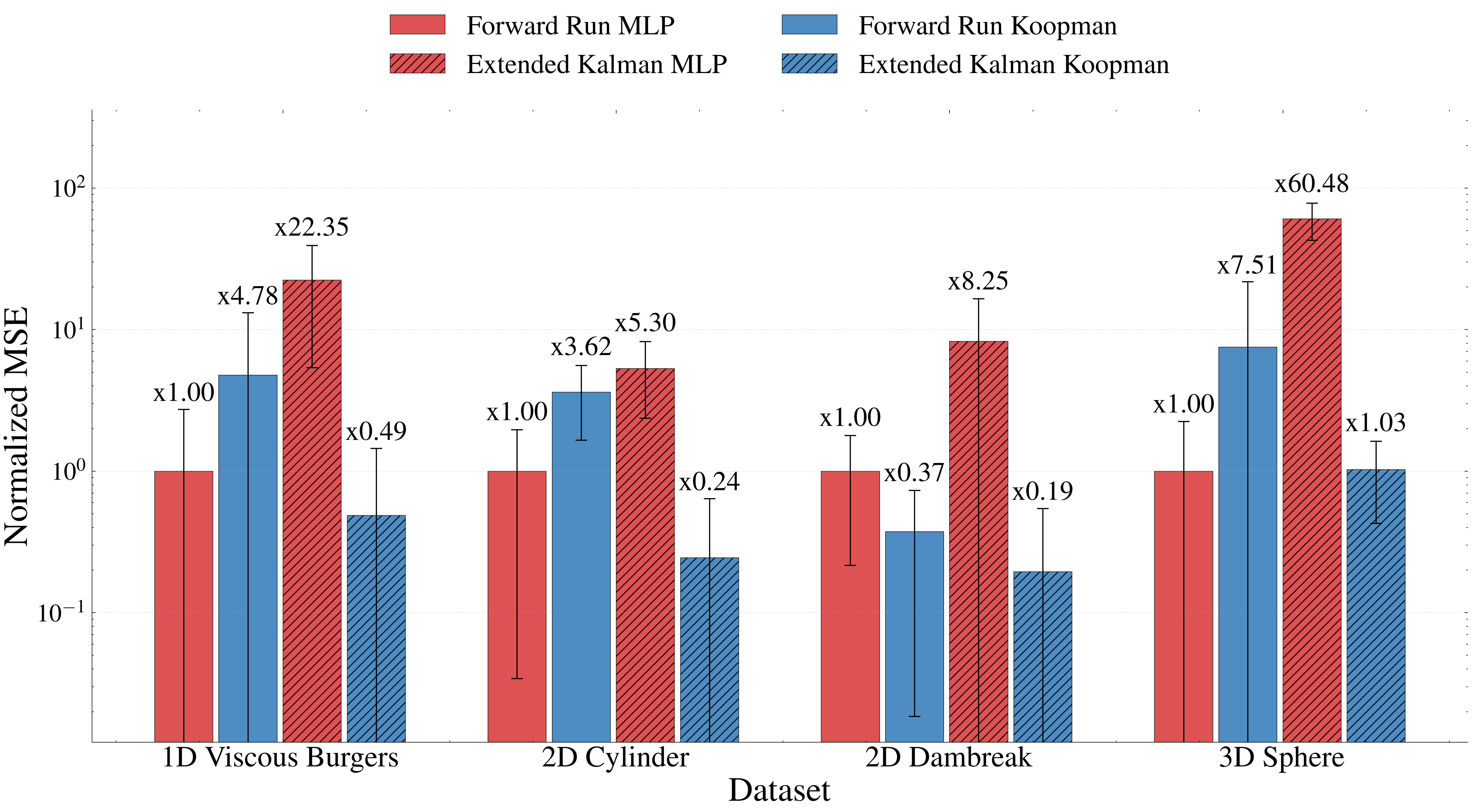}
        \caption{4-step training loss.}
    \end{subfigure}

    \medskip
    \begin{subfigure}[!htbp]{\linewidth}
        \centering
        \includegraphics[width=0.70\linewidth]{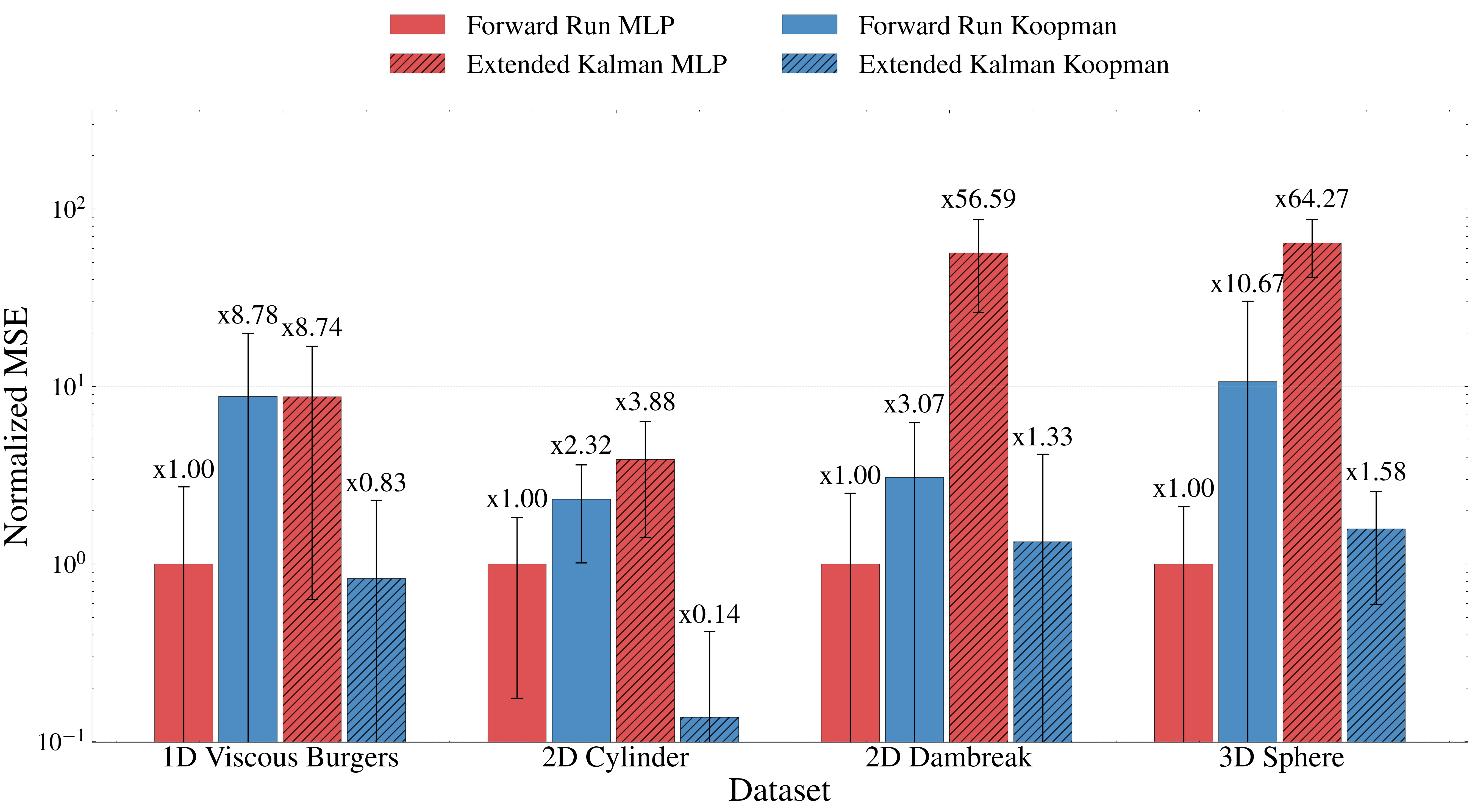}
        \caption{8-step training loss.}
    \end{subfigure}

    \caption{Comparison of MLP and Koopman under multi-step training losses.}
    \label{fig:multistep_comparison_main}
\end{figure}

\subsection{Structural advantages of the proposed latent dynamics for EKF assimilation}
\label{subsec:idealized_filtering_perspective}

The experiments of \cref{sec:results} raise two questions. First, why does the linear Koopman latent model couple with Kalman-type assimilation better than nonlinear latent models built on the same architecture family and parameter budget? Second, why does open-loop rollout accuracy fail to predict assimilation accuracy in the first place? This section states the elementary results that address these questions and discusses what each one explains for the proposed method; the proofs, together with supplementary results, are collected in \cref{app:idealized_filtering_lemmas}. The analysis is local and idealized throughout, and after each result we state what it does not imply. Throughout, $\|\cdot\|$ denotes the Euclidean norm on vectors and the induced spectral norm on matrices, and for a twice continuously differentiable map $f$ with components $f_i$, $\nabla^2 f_i(\mathbf{z})$ denotes the Hessian matrix of the $i$-th component. In particular, $f(\cdot):=\mathcal{F}_{\theta}(\cdot,\mu)$ denotes the learned latent transition at the operating parameter $\mu$, and $h=\mathcal S\circ G_{\psi}$ the decoded sensor map.

\subsubsection{Exact linearization of the learned forecast step}

The EKF advances its covariance through the Jacobian of the latent transition and computes its update through the Jacobian of the decoded sensor map, so the latent model shapes not only the forecast but also the uncertainty model used internally by the filter. Three distinct objects appear in the analysis and should not be conflated: the true field trajectory $\mathbf{x}_k$ of the flow map \cref{eq:flow_map}, whose evolution generates the data; a reference latent trajectory $\mathbf{z}_k$, against which filter errors are measured, with canonical choice the encoded true field, as introduced in \cref{subsec:representation},
\begin{equation}
    \mathbf{z}_k=E_{\phi}(\mathbf{x}_k,\mu);
    \label{eq:reference_latent}
\end{equation}
and the filter's own iterates $\hat{\mathbf{z}}_k^f$ and $\hat{\mathbf{z}}_k^a$. Model-generated quantities carry hats; data-anchored quantities do not. In contrast to a model rollout of \cref{eq:latent_markov}, $\hat{\mathbf{z}}_{k+1}=\mathcal{F}_{\theta}(\hat{\mathbf{z}}_k,\mu)$, which satisfies the learned dynamics exactly by construction, the encoded trajectory does not; its defect
\begin{equation}
    \boldsymbol{\delta}_k=\mathbf{z}_{k+1}-\mathcal{F}_{\theta}(\mathbf{z}_k,\mu)
    \label{eq:defect}
\end{equation}
is precisely the one-step model error analyzed in \cref{subsec:theory_decoupling}. The following decomposition, the standard first step of classical EKF stability analyses~\cite{song1995extended,boutayeb1997convergence,reif1999stochastic}, makes the role of the two curvature residuals explicit.

\begin{lemma}[Local EKF error decomposition]
\label{lem:local_ekf_error_decomposition}
Let $\{\mathbf{z}_k\}$ be any reference sequence with one-step defect $\boldsymbol{\delta}_k$ as in \cref{eq:defect}, and define the observation disturbance as the residual
\begin{equation}
    \tilde{\boldsymbol{\eta}}_{k+1} = \mathbf{y}_{k+1} - h(\mathbf{z}_{k+1}),
\end{equation}
where $f:\mathbb R^r\to\mathbb R^r$ and $h:\mathbb R^r\to\mathbb R^m$ are twice continuously differentiable on a convex neighborhood $\Omega$ containing the line segments used below. Neither relation is an assumption: $\mathbf{z}_{k+1}=f(\mathbf{z}_k)+\boldsymbol{\delta}_k$ and $\mathbf{y}_{k+1}=h(\mathbf{z}_{k+1})+\tilde{\boldsymbol{\eta}}_{k+1}$ hold by definition, whether or not the encoded latent state possesses autonomous dynamics of its own and whether or not the decoded reconstruction matches the true field at the sensor locations; the composition of $\tilde{\boldsymbol{\eta}}$ is made explicit in \cref{rem:dks_instantiation}. Assume
\begin{equation}
    \sup_{\mathbf{z}\in\Omega}\Big(\sum_{i=1}^{r}\big\|\nabla^{2}f_{i}(\mathbf{z})\big\|^{2}\Big)^{1/2}\le M_f,
    \qquad
    \sup_{\mathbf{z}\in\Omega}\Big(\sum_{i=1}^{m}\big\|\nabla^{2}h_{i}(\mathbf{z})\big\|^{2}\Big)^{1/2}\le M_h.
\end{equation}
The EKF forecast and analysis steps are
\begin{align}
    \hat{\mathbf{z}}_{k+1}^f &= f(\hat{\mathbf{z}}_k^a), \\
    \mathbf{F}_k &= Df(\hat{\mathbf{z}}_k^a), \\
    \mathbf{J}_{k+1} &= Dh(\hat{\mathbf{z}}_{k+1}^f), \\
    \hat{\mathbf{z}}_{k+1}^a
    &=
    \hat{\mathbf{z}}_{k+1}^f
    +\mathcal{K}_{k+1}\bigl(\mathbf{y}_{k+1}-h(\hat{\mathbf{z}}_{k+1}^f)\bigr),
\end{align}
where $\mathcal{K}_{k+1}$ is the EKF gain. Define
\begin{equation}
    \mathbf{e}_k^a=\mathbf{z}_k-\hat{\mathbf{z}}_k^a,
    \qquad
    \mathbf{e}_{k+1}^f=\mathbf{z}_{k+1}-\hat{\mathbf{z}}_{k+1}^f.
\end{equation}
Then there exist remainders $\boldsymbol{\rho}_k^f$ and $\boldsymbol{\rho}_{k+1}^h$ such that
\begin{align}
    \mathbf{e}_{k+1}^f
    &=\mathbf{F}_k\mathbf{e}_k^a+\boldsymbol{\delta}_k+\boldsymbol{\rho}_k^f, \\
    \mathbf{e}_{k+1}^a
    &=(\mathbf{I}-\mathcal{K}_{k+1}\mathbf{J}_{k+1})\mathbf{e}_{k+1}^f
      -\mathcal{K}_{k+1}\boldsymbol{\rho}_{k+1}^h
      -\mathcal{K}_{k+1}\tilde{\boldsymbol{\eta}}_{k+1},
\end{align}
and
\begin{align}
    \|\boldsymbol{\rho}_k^f\|
    &\le
    \frac{M_f}{2}\|\mathbf{e}_k^a\|^2, \\
    \|\boldsymbol{\rho}_{k+1}^h\|
    &\le
    \frac{M_h}{2}\|\mathbf{e}_{k+1}^f\|^2.
\end{align}
In particular, for a Koopman latent transition
\begin{equation}
    f_\mu(\mathbf{z})=\mathbf{K}(\mu;\Delta t)\mathbf{z},
\end{equation}
every component Hessian vanishes, $\nabla^{2}(f_{\mu})_{i}\equiv \mathbf{0}$, so $\mathbf{F}_k=\mathbf{K}(\mu;\Delta t)$ and
\begin{equation}
    \boldsymbol{\rho}_k^f\equiv 0.
\end{equation}
However, for the neural sensing map
\begin{equation}
    h(\mathbf{z})=\mathcal S(G_\psi(\mathbf{z})),
\end{equation}
the observation residual $\boldsymbol{\rho}_{k+1}^h$ is generally nonzero unless $\mathcal S\circ G_\psi$ is affine on the relevant local region.
\end{lemma}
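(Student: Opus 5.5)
The plan is to define the two remainders directly as the Taylor residuals, verify the two error identities by substitution, and then bound each residual with the integral form of Taylor's theorem. First I define
\[
\boldsymbol{\rho}_k^f := f(\mathbf{z}_k)-f(\hat{\mathbf{z}}_k^a)-\mathbf{F}_k\mathbf{e}_k^a,
\qquad
\boldsymbol{\rho}_{k+1}^h := h(\mathbf{z}_{k+1})-h(\hat{\mathbf{z}}_{k+1}^f)-\mathbf{J}_{k+1}\mathbf{e}_{k+1}^f .
\]
Using $\mathbf{z}_{k+1}=f(\mathbf{z}_k)+\boldsymbol{\delta}_k$ and $\hat{\mathbf{z}}_{k+1}^f=f(\hat{\mathbf{z}}_k^a)$ gives
\[
\mathbf{e}_{k+1}^f=f(\mathbf{z}_k)-f(\hat{\mathbf{z}}_k^a)+\boldsymbol{\delta}_k=\mathbf{F}_k\mathbf{e}_k^a+\boldsymbol{\delta}_k+\boldsymbol{\rho}_k^f .
\]

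For the analysis identity, I subtract the update formula from $\mathbf{z}_{k+1}$:
\[
\mathbf{e}_{k+1}^a=\mathbf{e}_{k+1}^f-\mathcal{K}_{k+1}\bigl(\mathbf{y}_{k+1}-h(\hat{\mathbf{z}}_{k+1}^f)\bigr).
\]
I then substitute $\mathbf{y}_{k+1}=h(\mathbf{z}_{k+1})+\tilde{\boldsymbol{\eta}}_{k+1}$ and write $h(\mathbf{z}_{k+1})-h(\hat{\mathbf{z}}_{k+1}^f)=\mathbf{J}_{k+1}\mathbf{e}_{k+1}^f+\boldsymbol{\rho}_{k+1}^h$. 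Collecting terms gives exactly $(\mathbf{I}-\mathcal{K}_{k+1}\mathbf{J}_{k+1})\mathbf{e}_{k+1}^f-\mathcal{K}_{k+1}\boldsymbol{\rho}_{k+1}^h-\mathcal{K}_{k+1}\tilde{\boldsymbol{\eta}}_{k+1}$. No property of the gain is used, so the identity holds for any $\mathcal{K}_{k+1}$.

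The main work is the quadratic bounds. For a $C^2$ map $g$ with components $g_i$, base point $\mathbf{a}$ and increment $\mathbf{e}$, with the segment lying in the convex set $\Omega$, Taylor's theorem with integral remainder applied componentwise gives
\[
g_i(\mathbf{a}+\mathbf{e})-g_i(\mathbf{a})-\nabla g_i(\mathbf{a})^\top\mathbf{e}=\int_0^1(1-t)\,\mathbf{e}^\top\nabla^2 g_i(\mathbf{a}+t\mathbf{e})\,\mathbf{e}\,dt .
\]
Each component is bounded by $\tfrac12\sup_\Omega\|\nabla^2 g_i\|\,\|\mathbf{e}\|^2$. The point needing care is the Euclidean norm of the whole vector, since the maximizing $t$ differs across components, so the supremum cannot be pulled out naively. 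I handle this by Minkowski's integral inequality: the vector remainder is $\int_0^1(1-t)\mathbf{v}(t)\,dt$ with $v_i(t)=\mathbf{e}^\top\nabla^2 g_i\,\mathbf{e}$. Its norm is at most
\[
\int_0^1(1-t)\Bigl(\sum_i\|\nabla^2 g_i\|^2\Bigr)^{1/2}dt\;\|\mathbf{e}\|^2\le \tfrac{M}{2}\|\mathbf{e}\|^2 .
\]
This is where the square-summed form of $M_f$ and $M_h$ in the hypothesis is used. I apply it with $g=f$, $\mathbf{a}=\hat{\mathbf{z}}_k^a$, $\mathbf{e}=\mathbf{e}_k^a$, and with $g=h$, $\mathbf{a}=\hat{\mathbf{z}}_{k+1}^f$, $\mathbf{e}=\mathbf{e}_{k+1}^f$. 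Since $\mathbf{F}_k$ and $\mathbf{J}_{k+1}$ are the Jacobians at those base points, the two claimed bounds follow.

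The special cases are then immediate. For $f_\mu(\mathbf{z})=\mathbf{K}\mathbf{z}$ we have $Df_\mu\equiv\mathbf{K}$, so $\mathbf{F}_k=\mathbf{K}$, and $f$ is exactly affine, so $\boldsymbol{\rho}_k^f=\mathbf{K}\mathbf{e}_k^a-\mathbf{K}\mathbf{e}_k^a=0$ identically, with no smallness assumption. For $h=\mathcal S\circ G_\psi$, I argue the converse direction. If $\boldsymbol{\rho}^h$ vanished for all nearby base points and increments in a region, then the first-order expansion would be exact there, so $h$ would be affine on that region. Hence $\boldsymbol{\rho}^h_{k+1}$ is generically nonzero unless $h$ is locally affine, which is the hedged form the lemma states.
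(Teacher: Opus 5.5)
Your proposal is correct and follows the paper's proof essentially step for step. Both define the remainders as the first-order Taylor residuals at $\hat{\mathbf{z}}_k^a$ and $\hat{\mathbf{z}}_{k+1}^f$, obtain the two error identities by substituting the defect and observation-residual definitions into the forecast and EKF update, and bound the remainders via the integral form of Taylor's theorem together with the triangle (Minkowski) inequality for vector-valued integrals, which is where the root-sum-square curvature constants $M_f$, $M_h$ enter. Your explicit remark that the supremum cannot be pulled inside componentwise, and your contrapositive argument for the affine caveat on $h=\mathcal S\circ G_\psi$, only make precise steps the paper states more tersely.
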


For the Koopman transition, the forecast Jacobian is thus the same global matrix at every analysis state and the forecast residual vanishes identically, whereas for the MLP and xLSTM transitions the Jacobian is state dependent and the EKF sees only a local first-order approximation. This mechanism underlies the comparison between the Koopman model and the nonlinear latent baselines inside the EKF. What it does not imply: the decoded sensor map remains nonlinear, so the observation residual $\boldsymbol{\rho}^h$ survives and the filtering problem does not become linear--Gaussian; the exactly linear--Gaussian case, in which the Kalman filter is the minimum mean-square error estimator~\cite{kalman1960new,jazwinski1970stochastic}, serves only as an idealized reference point. 

The exactness statement $\boldsymbol{\rho}^f\equiv 0$ concerns the EKF's internal linearization of its own forecast model and makes no claim about the fidelity of the learned model to the true dynamics; that fidelity enters through the defect $\boldsymbol{\delta}_k$ and the observation disturbance $\tilde{\boldsymbol{\eta}}_{k+1}$, both treated in \cref{subsec:theory_decoupling}.

\subsubsection{Non-amplification of the covariance recursion}

The second mechanism concerns the covariance recursion. The lemma below records classical properties of the Kalman covariance recursion~\cite{jazwinski1970stochastic,anderson1979optimal}, proved for completeness in \cref{app:idealized_filtering_lemmas}.

\begin{lemma}[Non-amplification of the propagated covariance for a non-expansive constant linear transition]
\label{lem:ekf_covariance_nonamplification}
Consider the EKF covariance prediction and update
\begin{align}
    \mathbf{P}_{k+1}^f &= \mathbf{K} \mathbf{P}_k^a \mathbf{K}^\top+\boldsymbol{\Sigma}_{\xi}, \\
    \mathcal{K}_{k+1}
    &=\mathbf{P}_{k+1}^f\mathbf{J}_{k+1}^\top
      \left(\mathbf{J}_{k+1}\mathbf{P}_{k+1}^f\mathbf{J}_{k+1}^\top+\boldsymbol{\Sigma}_{\eta}\right)^{-1}, \\
    \mathbf{P}_{k+1}^a
    &=
    \left(\mathbf{I}-\mathcal{K}_{k+1}\mathbf{J}_{k+1}\right)\mathbf{P}_{k+1}^f,
\end{align}
where $\mathbf{P}_k^a\succeq 0$, $\boldsymbol{\Sigma}_{\eta}\succ 0$ and $\boldsymbol{\Sigma}_{\xi}\succeq 0$ is the process covariance. If
\begin{equation}
    \|\mathbf{K}\|\le 1,
\end{equation}
then the covariance matrix propagated by the EKF satisfies
\begin{equation}
    \|\mathbf{P}_{k+1}^f\|\le \|\mathbf{P}_k^a\|+\|\boldsymbol{\Sigma}_{\xi}\|.
\end{equation}
Moreover, the measurement update satisfies
\begin{equation}
    0\preceq \mathbf{P}_{k+1}^a\preceq \mathbf{P}_{k+1}^f,
\end{equation}
and consequently
\begin{equation}
    \|\mathbf{P}_{k+1}^a\|\le \|\mathbf{P}_{k+1}^f\|.
\end{equation}
\end{lemma}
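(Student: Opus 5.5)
The proof splits into the forecast bound and the update bound, each a short argument about symmetric positive semidefinite matrices.

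\emph{Forecast bound.} I would use submultiplicativity of the spectral norm and the triangle inequality:
\[
\|\mathbf{P}_{k+1}^f\|\le\|\mathbf{K}\|\,\|\mathbf{P}_k^a\|\,\|\mathbf{K}^\top\|+\|\boldsymbol{\Sigma}_{\xi}\|\le\|\mathbf{P}_k^a\|+\|\boldsymbol{\Sigma}_{\xi}\|.
\]
This uses $\|\mathbf{K}^\top\|=\|\mathbf{K}\|\le 1$. I would also record that $\mathbf{P}_{k+1}^f\succeq 0$, since it is a congruence of a positive semidefinite matrix plus a positive semidefinite matrix. This is needed in the update step.

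\emph{Update bound.} Write $\mathbf{P}=\mathbf{P}_{k+1}^f$, $\mathbf{J}=\mathbf{J}_{k+1}$ and $\mathbf{S}=\mathbf{J}\mathbf{P}\mathbf{J}^\top+\boldsymbol{\Sigma}_\eta$. Then $\mathbf{S}\succ 0$, because it is a positive semidefinite matrix plus $\boldsymbol{\Sigma}_\eta\succ0$, so $\mathbf{S}$ is invertible and the gain is well defined. Substituting the gain gives
\[
\mathbf{P}_{k+1}^a=\mathbf{P}-\mathbf{P}\mathbf{J}^\top\mathbf{S}^{-1}\mathbf{J}\mathbf{P}.
\]
This form is symmetric, and the subtracted term is of the form $\mathbf{B}^\top\mathbf{S}^{-1}\mathbf{B}\succeq0$ with $\mathbf{B}=\mathbf{J}\mathbf{P}$. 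That gives $\mathbf{P}_{k+1}^a\preceq\mathbf{P}$ immediately.

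The main obstacle is the lower bound $\mathbf{P}_{k+1}^a\succeq0$ when $\mathbf{P}$ may be singular, so that a direct information-form argument with $\mathbf{P}^{-1}$ is unavailable. I would first try the Joseph form. With $\mathcal{K}=\mathcal{K}_{k+1}$, a direct expansion using $\mathcal{K}\mathbf{S}=\mathbf{P}\mathbf{J}^\top$ shows that the standard form equals
\[
(\mathbf{I}-\mathcal{K}\mathbf{J})\mathbf{P}(\mathbf{I}-\mathcal{K}\mathbf{J})^\top+\mathcal{K}\boldsymbol{\Sigma}_\eta\mathcal{K}^\top.
\]
This is a sum of two positive semidefinite terms, so it is positive semidefinite. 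As a fallback, I would observe that $\mathbf{P}_{k+1}^a$ is the Schur complement of $\mathbf{S}$ in the block matrix
\[
\begin{pmatrix}\mathbf{S}&\mathbf{J}\mathbf{P}\\ \mathbf{P}\mathbf{J}^\top&\mathbf{P}\end{pmatrix}
=\begin{pmatrix}\mathbf{J}\\ \mathbf{I}\end{pmatrix}\mathbf{P}\begin{pmatrix}\mathbf{J}\\ \mathbf{I}\end{pmatrix}^{\!\top}+\diag(\boldsymbol{\Sigma}_\eta,\mathbf{0})\succeq 0.
\]
Since $\mathbf{S}\succ0$, this Schur complement is positive semidefinite.

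\emph{Norm comparison.} For symmetric matrices with $0\preceq\mathbf{A}\preceq\mathbf{B}$, the spectral norm equals the largest eigenvalue, and by the Rayleigh quotient $\lambda_{\max}(\mathbf{A})\le\lambda_{\max}(\mathbf{B})$. This gives $\|\mathbf{P}_{k+1}^a\|\le\|\mathbf{P}_{k+1}^f\|$.

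Finally, I would note that the hypothesis $\|\mathbf{K}\|\le1$ is used only in the forecast step. The update inequalities hold for any $\mathbf{J}$, including the state-dependent Jacobian of the decoded sensor map.
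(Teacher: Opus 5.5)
Your proposal is correct and follows the paper's proof. It uses submultiplicativity for the forecast bound, the rewriting $\mathbf{P}_{k+1}^a=\mathbf{P}-\mathbf{P}\mathbf{J}^\top\mathbf{S}^{-1}\mathbf{J}\mathbf{P}$ with a positive semidefinite subtracted term for $\mathbf{P}_{k+1}^a\preceq\mathbf{P}_{k+1}^f$, and Loewner monotonicity of the spectral norm for the last inequality. The only deviation is that you lead with the Joseph form to get $\mathbf{P}_{k+1}^a\succeq 0$, whereas the paper uses exactly your Schur-complement fallback; your explicit factorization of the block matrix also justifies its positive semidefiniteness, which the paper only asserts.
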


Two scope remarks are needed before this is applied. The prediction step above is written with the constant Koopman transition $\mathbf{K}$; for the MLP and xLSTM baselines the corresponding step is $\mathbf{P}_{k+1}^f=\mathbf{F}_k\mathbf{P}_k^a\mathbf{F}_k^{\top}{}+\boldsymbol{\Sigma}_{\xi}$ with a state-dependent $\mathbf{F}_k$, and the same conclusion holds only under the stronger hypothesis $\sup_k\|\mathbf{F}_k\|\le 1$, which none of the baseline architectures enforces.

The non-expansiveness hypothesis is not an additional assumption for the proposed model: it is built into the architecture.

\begin{corollary}[Non-expansiveness of the generator parameterization]
\label{cor:generator_nonexpansive}
For each fixed $\mu$, the generator of \cref{subsec:koopman_model} satisfies $\mathbf{A}(\mu)+\mathbf{A}(\mu)^{\top}=-2\diag\!\big(\mathbf{d}(\mathbf{s})\circ\mathbf{d}(\mathbf{s})\big)\preceq 0$, and consequently
\begin{equation}
    \|\mathbf{K}(\mu;\Delta t)\|=\big\|e^{\mathbf{A}(\mu)\Delta t}\big\|\le 1
    \qquad\text{for all }\Delta t\ge 0.
\end{equation}
\end{corollary}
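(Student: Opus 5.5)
The plan has two parts. First we verify the symmetric-part identity by direct algebra. Then we convert the resulting negative semidefiniteness into a contraction bound for the matrix exponential, using an energy (Lyapunov) argument.

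\emph{Step 1: the symmetric part.} Write $\mathbf{D}=\diag(\mathbf{d}(\mathbf{s})\circ\mathbf{d}(\mathbf{s}))$. By \cref{eq:generator}, $\mathbf{A}(\mu)=\mathbf{Q}-\mathbf{Q}^\top-\mathbf{D}$, so
\[
\mathbf{A}^\top=\mathbf{Q}^\top-\mathbf{Q}-\mathbf{D},
\]
since $\mathbf{D}$ is diagonal and hence symmetric. Adding the two expressions, the skew parts cancel and $\mathbf{A}+\mathbf{A}^\top=-2\mathbf{D}$. The diagonal entries of $\mathbf{D}$ are $d_i(\mathbf{s})^2\ge 0$, so $\mathbf{D}\succeq 0$ and therefore $\mathbf{A}+\mathbf{A}^\top\preceq 0$. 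This step is routine.

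\emph{Step 2: contraction of the exponential.} Fix an arbitrary $\mathbf{v}\in\mathbb{R}^r$ and set $\mathbf{w}(t)=e^{\mathbf{A}t}\mathbf{v}$, which solves $\dot{\mathbf{w}}=\mathbf{A}\mathbf{w}$ with $\mathbf{w}(0)=\mathbf{v}$. Differentiating the squared norm gives
\[
\frac{d}{dt}\|\mathbf{w}\|^2=\mathbf{w}^\top(\mathbf{A}+\mathbf{A}^\top)\mathbf{w}=-2\,\mathbf{w}^\top\mathbf{D}\mathbf{w}\le 0.
\]
Hence $\|\mathbf{w}(t)\|$ is non-increasing, and $\|e^{\mathbf{A}t}\mathbf{v}\|\le\|\mathbf{v}\|$ for every $t\ge0$. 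Taking the supremum over unit vectors $\mathbf{v}$ yields $\|e^{\mathbf{A}\Delta t}\|\le 1$ in the induced spectral norm, for every $\Delta t\ge 0$.

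An alternative route is through the logarithmic norm: $\|e^{\mathbf{A}t}\|\le e^{t\,\lambda_{\max}((\mathbf{A}+\mathbf{A}^\top)/2)}\le 1$. I would present the direct energy argument, because it is self-contained.

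There is no real obstacle in this argument. The only points requiring care are to note explicitly that the bound holds for each fixed $\mu$ (since $\mathbf{s}=e_\omega(\mu)$ is fixed), and that only $\Delta t\ge 0$ is covered, because the monotonicity is forward in time. Combining this corollary with \cref{lem:ekf_covariance_nonamplification} then gives the covariance non-amplification for the Koopman filter directly.
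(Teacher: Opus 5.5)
Your proposal is correct and matches the paper's proof step for step. It uses the same algebraic cancellation giving $\mathbf{A}+\mathbf{A}^\top=-2\mathbf{D}\preceq 0$, the same energy argument $\frac{d}{dt}\|\mathbf{z}\|^2=-2\,\mathbf{z}^\top\mathbf{D}\mathbf{z}\le 0$ along $\dot{\mathbf{z}}=\mathbf{A}\mathbf{z}$, and the same closing observation that this is an instance of the logarithmic-norm bound; your explicit supremum over unit vectors, which passes from the vector bound to the induced norm, is a small step the paper leaves implicit.
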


Unlike \cref{lem:local_ekf_error_decomposition}, which is stated relative to a reference trajectory, \cref{lem:ekf_covariance_nonamplification} and \cref{cor:generator_nonexpansive} are exact statements about the computation the implemented filter performs: the forecast update is precisely $\mathbf{P}_{k+1}^f=\mathbf{K}\mathbf{P}_k^a\mathbf{K}^{\top}+\boldsymbol{\Sigma}_{\xi}$, and the generator parameterization enforces $\|\mathbf{K}\|\le 1$ by construction, up to the numerical error of the matrix exponential; \cref{tab:opnorm} measures $\max_k\|\mathbf{F}_k\|_2$ directly to close this gap empirically. Together, the two results bound the growth of the covariance recursion by the process covariance per step, for the $\boldsymbol{\Sigma}_{\xi}=q\mathbf{I}$ setting used here.

Two qualifications apply. First, $\mathbf{P}_k$ is an internal quantity of the algorithm; it coincides with the covariance of the estimation error only when the model is linear and correctly specified at second order, Gaussianity being needed only for conditional-mean optimality, so non-amplification of $\mathbf{P}_k$ is a statement about the filter's uncertainty model rather than about the second moment of the realized error. Second, a strict contraction $\|\mathbf{K}\|\le a<1$, which occurs whenever all entries of the
learned dissipation $\mathbf{d}(\mathbf{s})$ are nonzero, does not drive the covariance to zero at the
process covariance actually used. Iterating the bound of
\cref{lem:ekf_covariance_nonamplification} with $\boldsymbol{\Sigma}_{\xi}=q\mathbf{I}$ gives
\begin{equation}
    \|\mathbf{P}_k^a\|\le a^{2k}\|\mathbf{P}_0^a\|+q\,\frac{1-a^{2k}}{1-a^{2}},
\end{equation}
a uniform \emph{upper} bound on the analysis covariance, with limit $q/(1-a^{2})$. Separately, the
positive process covariance gives the forecast lower bound
$\mathbf{P}_{k+1}^{f}=\mathbf{K}\mathbf{P}_k^a\mathbf{K}^{\top}+q\mathbf{I}\succeq q\mathbf{I}$.
Neither statement bounds the analysis covariance below, and neither prevents the gain from vanishing:
the gain also depends on the observation Jacobian and is zero wherever $\mathbf{J}_{k+1}$ is. These results do not establish global convergence of the EKF or universal optimality of Koopman models.

\subsubsection{One-step model error}
\label{subsec:theory_decoupling}

The two results above take the reference sequence and its defect as given. Open-loop rollout and
filtered estimation propagate error through different operators: the rollout applies the learned
forecast repeatedly with no correction, so its error accumulates the defect $\boldsymbol{\delta}_k$
over the window, whereas the filter passes it through a different operator at each analysis step and
injects measurement information, so the two errors need not evolve alike; the analysis step is not
guaranteed to reduce the realized error pointwise. This is why
rollout accuracy and assimilation accuracy need not rank models the same way, and it is the
qualitative point the experiments of \cref{sec:results} test directly.

\begin{remark}[Instantiation on the encoder-induced latent trajectory]
\label{rem:dks_instantiation}
For the learned system, take the canonical reference trajectory \cref{eq:reference_latent}, the encoded true field; because the flow map \cref{eq:flow_map} is deterministic, the defect \cref{eq:defect} is the only forecast-side disturbance. Substituting the physical measurements \cref{eq:sensing} into the definition of the observation disturbance gives the exact decomposition
\begin{equation}
    \tilde{\boldsymbol{\eta}}_{k+1}
    =\boldsymbol{\varepsilon}_{k+1}^{h}+\boldsymbol{\eta}_{k+1},
    \qquad
    \boldsymbol{\varepsilon}_{k+1}^{h}
    =\mathcal S\big(\mathbf{x}_{k+1}-G_{\psi}(\mathbf{z}_{k+1})\big),
\end{equation}
where $\boldsymbol{\eta}$ is the physical measurement noise of \cref{eq:sensing} and $\boldsymbol{\varepsilon}^{h}$ is the reconstruction error of the autoencoder sampled at the sensor locations. The learned model error entering the filter is therefore $\bar\delta=\sup_k\|\boldsymbol{\delta}_k\|$, and the observation disturbance is $b_{\eta}=b_{\mathrm{ns}}+\bar\varepsilon_h$, where $b_{\mathrm{ns}}$ bounds the physical noise and $\bar\varepsilon_h=\sup_k\|\boldsymbol{\varepsilon}_k^{h}\|$: the sensor-sampled reconstruction error acts as additional measurement noise. All learned-error channels entering the filter, namely $\bar\delta$, $\bar\varepsilon_h$, and the curvature constants, are one-step quantities, and the first two are test-time analogues of the latent-prediction and reconstruction terms of the training objective in \cref{subsec:training_objective}. Two provisos attach to this instantiation. The containment assumption of \cref{lem:local_ekf_error_decomposition} must hold for the realized run: the encoder-induced trajectory together with the rollout and filter iterates must remain in a convex region $\Omega$ on which the curvature bounds of \cref{lem:local_ekf_error_decomposition} hold. The bounded-noise assumption is a deterministic idealization of the Gaussian noise used in the experiments; it holds per realization with $b_{\mathrm{ns}}$ equal to the largest realized noise magnitude over the assimilation window, or, for Gaussian noise with per-component standard deviation $\sigma_{\mathrm{ns}}$, with probability at least $1-\alpha$ under the choice $b_{\mathrm{ns}}=\sigma_{\mathrm{ns}}\big(\sqrt{m}+\sqrt{2\log(T/\alpha)}\big)$ over a window of length $T$.
\end{remark}

The preceding error identities concern the latent state; the decoded full-field error obeys
\begin{equation}
    \big\|\mathbf{x}_k-G_{\psi}(\hat{\mathbf{z}}_k)\big\|
    \le
    L_{G}\,\big\|\mathbf{z}_k-\hat{\mathbf{z}}_k\big\|
    +
    \big\|\mathbf{x}_k-G_{\psi}(\mathbf{z}_k)\big\|,
\end{equation}
where $L_{G}$ is a Lipschitz constant of the decoder on the region visited, so the latent bounds are multiplied by $L_G$ and the reconstruction error of the reference encoding adds an offset in this bound. The genuinely irreducible part of that offset is the decoder-range error $\inf_{\mathbf{z}}\|\mathbf{x}_k-G_{\psi}(\mathbf{z})\|$, which no filter operating in the latent space can remove; for linear-subspace models it is the representation error of the subspace.

\paragraph*{Empirical diagnostics.}
The results above rest on a model-dependent quantity that is measurable after training by
post-processing alone: the forecast Jacobian norm $\max_k\|\mathbf{F}_k\|_2$ that
\cref{lem:ekf_covariance_nonamplification} requires to be at most one. \Cref{tab:opnorm} reports
$\max_k\|\mathbf{F}_k\|_2$ measured on the test runs. For the baselines these sampled values are
empirical diagnostics rather than uniform bounds over the relevant state region: a Jacobian norm
sampled at the visited states does not bound the curvature constants of
\cref{lem:local_ekf_error_decomposition} over the segments joining the true and estimated states.

\begin{table}[!htbp]
\centering
\setlength{\tabcolsep}{7pt}
\renewcommand{\arraystretch}{1.15}
\begin{adjustbox}{max width=\linewidth}
\begin{tabular}{lcccc}
\toprule
$\max_k\|\mathbf{F}_k\|_2$ & Koopman & MLP & xLSTM & pDMD \\
\midrule
1D viscous Burgers & $1.000$ & $1.465$ & $5.928$ & $2.18$ \\
2D cylinder        & $1.000$ & $2.004$ & $9.096$ & $1.37$ \\
2D dambreak        & $1.000$ & $2.322$ & $2.428$ & $1.99$ \\
3D sphere          & $1.000$ & $4.439$ & $4.144$ & $1.27$ \\
\bottomrule
\end{tabular}
\end{adjustbox}
\caption{Largest spectral norm of the forecast Jacobian actually used by the filter, over the reported
test runs. For the Koopman model the analytical statement is
$\|\mathbf{K}(\mu;\Delta t)\|_2\le 1$ by \cref{cor:generator_nonexpansive}; the tabulated $1.000$ are
the observed values at this precision. The proposed architecture enforces the unit-norm bound, whereas
the evaluated baseline transitions exceed it at some sampled states, so the hypothesis of
\cref{lem:ekf_covariance_nonamplification} does not apply to them. Failure of a sufficient condition
is not proof of the converse: it does not establish that a baseline amplifies its realized covariance
or necessarily filters poorly. The pDMD entries are the norms of the fitted reduced operator
$\tilde{\mathbf{A}}$.}
\label{tab:opnorm}
\end{table}


The two structural properties established in this section, the exact linearization of the forecast step and the non-amplification of the covariance recursion, do not eliminate learned-model error, observation-map nonlinearity or covariance misspecification, and they do not by themselves guarantee a smaller reconstruction error. They follow from linearity together with the dissipative parameterization of \cref{eq:generator}; their empirical benefit is assessed by the experiments reported earlier in this section.

\section{Conclusion}\label{sec:conclusion}
This work studies latent-model selection for sparse flow sensing through the lens of data assimilation rather than forward rollout alone. Across the main experiments, a consistent conclusion emerges: while nonlinear latent models can be competitive as open-loop predictors, the parameter-conditioned Koopman model is the most effective latent-space model when the task requires repeated forecast--update cycles with sparse and noisy observations. Assimilation under the EKF improves the estimate for the two linear latent models on every benchmark and degrades it for the two nonlinear ones, which separates the models along the linearity of the transition rather than along forecast accuracy. In the EKF experiments, Koopman attains the lowest sensing error under both exact and inexact initialization on all four benchmarks. The ordering is preserved when the measurement noise level is varied, when the latent dimension is changed, and when the training objective is extended to multi-step prediction. Under the EnKF the nonlinear models are no longer systematically degraded by assimilation, which is consistent with their sensitivity to the local approximations the EKF makes, without isolating forecast linearization as the sole cause. Across the four benchmarks, the noise levels and latent dimensions tested, sparse sensors and approximate initialization, forecast accuracy and assimilation accuracy ranked the latent models differently, and under the EKF the model that assimilated best was the one whose latent transition is linear and non-expansive. Latent dynamics intended for sensing should therefore be chosen by their closed-loop assimilation performance, not by open-loop forecast accuracy alone.

\FloatBarrier
\begin{acknowledgments}
This work was supported by the U.S. Department of Energy under Award No. DE-SC0025425 and by the National Science Foundation (NSF) under Award No. 2616260. Shaowu Pan acknowledges support from the Google Research Scholar Program. This research used resources of the National Energy Research Scientific Computing Center (NERSC), a U.S. Department of Energy Office of Science User Facility supported by the Office of Science under Contract No. DE-AC02-05CH11231, through NERSC Award DDR-ERCAP0035696. Computational resources were also provided by the Alpha HPC cluster operated by the Empire AI Consortium, Inc., with support from Empire State Development from New York State, the Simons Foundation, and the Secunda Family Foundation. This work further used Purdue Anvil, Texas A\&M University FASTER, and Texas A\&M University ACES through allocation PHY240112 and MCH260003 from the Advanced Cyberinfrastructure Coordination Ecosystem: Services and Support (ACCESS) program, which is supported by U.S. National Science Foundation Grants Nos. 2138259, 2138286, 2138307, 2137603, and 2138296. Additional computing hardware support was provided by the NVIDIA Academic Grant Program. Yadi Cao acknowledges support from start-up funding from the Department of Computer Science and
  the Institute of Artificial Intelligence at the University of Central Florida.
\end{acknowledgments}
\section*{Data availability}
The 2D cylinder, 2D dambreak and 3D sphere datasets were generated with OpenFOAM from parameterized case setups and converted to HDF5; each comprises 90 training and 11 test trajectories of 100 snapshots resampled onto a uniform grid ($64\times 64$ in 2D, $64^{3}$ in 3D), with the cylinder and sphere parameterized by $\mathrm{Re}\in[59.5,990.5]$ and the dambreak by the water-phase kinematic viscosity $\nu_{w}\in[1.0\times10^{-3},9.9\times10^{-2}]\,\mathrm{m}^{2}\mathrm{s}^{-1}$ over the training set. The 1D viscous Burgers data are synthetic: 50 training and 10 test trajectories of 100 snapshots on $N_x=1024$ points with $\nu\in[0.01,1.0]$. The datasets, the OpenFOAM case files, the trained model weights, the filtering and analysis scripts, and the per-trajectory diagnostics behind \cref{tab:opnorm} are available from the authors on reasonable request. The code and data used in the study will be made public after publication.
\appendix
\section{Baseline latent-dynamics models}\label{app:baselines}
This appendix collects the full formulations and hyperparameters of the three baseline latent-dynamics models compared against the proposed Koopman model of \cref{subsec:koopman_model}: parametric DMD (\cref{app:pdmd}), the multilayer perceptron (\cref{app:mlp}), and the xLSTM (\cref{app:xlstm}). The MLP and xLSTM share the encoder--decoder backbone of \cref{subsec:representation} and differ from the Koopman model only in their latent dynamics; pDMD is fitted separately on a global POD basis and so differs in the representation as well.
\subsection{Parametric DMD}\label{app:pdmd}

Our parametric DMD (pDMD) baseline follows the monolithic formulation
implemented in \texttt{PyDMD}\cite{ichinaga2024pydmd,demo2018pydmd}. Unlike our Koopman model, it does not learn a
parameter-dependent latent evolution operator \(\mathbf{A}(\mu)\). Instead, it first
builds a global reduced basis for all training trajectories, then fits a
single DMD model to the stacked reduced trajectories associated with the
training parameters, and finally interpolates the predicted reduced
coefficients across parameter space.

Let \(\{\mu^{(i)}\}_{i=1}^{N_\mu}\) denote the training parameters, and let
\[
\mathbf{X}^{(i)}=
\begin{bmatrix}
\mathbf{x}^{(i)}_1 & \mathbf{x}^{(i)}_2 & \cdots & \mathbf{x}^{(i)}_{N_t}
\end{bmatrix}
\in \mathbb{R}^{N\times N_t}
\]
be the snapshot matrix for parameter \(\mu^{(i)}\), where \(N\) is the number of spatial degrees of freedom and \(N_t\) is the number of time steps. A
global POD basis \(\mathbf{U}_r \in \mathbb{R}^{N\times r}\) is computed from the
concatenated training snapshots, and each trajectory is projected to the
reduced space as
\begin{equation}
    \mathbf{Z}^{(i)} = \mathbf{U}_r^\top \mathbf{X}^{(i)} \in \mathbb{R}^{r\times N_t}.
\end{equation}

In the monolithic variant, the reduced trajectories are stacked into the
augmented matrix
\begin{equation}
    \mathcal{Z} =
    \begin{bmatrix}
        \mathbf{Z}^{(1)} \\
        \mathbf{Z}^{(2)} \\
        \vdots \\
        \mathbf{Z}^{(N_\mu)}
    \end{bmatrix}
    \in \mathbb{R}^{(N_\mu r)\times N_t}.
\end{equation}
A single DMD model is then fit to \(\mathcal{Z}\). Denoting by
\(\widehat{\mathcal{Z}}_k\) the DMD-predicted reduced coefficients at time
step \(k\), we write
\begin{equation}
    \widehat{\mathcal{Z}}_k =
    \begin{bmatrix}
        \widehat{\mathbf z}^{(1)}_k \\
        \widehat{\mathbf z}^{(2)}_k \\
        \vdots \\
        \widehat{\mathbf z}^{(N_\mu)}_k
    \end{bmatrix},
    \qquad
    \widehat{\mathbf z}^{(i)}_k \in \mathbb{R}^r.
\end{equation}

To evaluate the model at an unseen parameter \(\mu\), pDMD interpolates the
predicted reduced coefficients over parameter space. For each time step
\(k\), an interpolant \(\mathcal{I}_k\) is fit to the pairs
\((\mu^{(i)}, \widehat{\mathbf z}^{(i)}_k)\), and the reduced coefficients at
the new parameter are approximated as
\begin{equation}
    \widehat{\mathbf z}_k(\mu) = \mathcal{I}_k(\mu).
    \label{eq:pdmd_interp}
\end{equation}
The full state is then reconstructed through the global POD basis,
\begin{equation}
    \widehat{\mathbf x}_k(\mu) = \mathbf{U}_r \widehat{\mathbf z}_k(\mu).
\end{equation}

This baseline therefore combines linear time forecasting in a global reduced
space with interpolation across parameter space. Compared with our Koopman
model, it retains linear reduced dynamics but does not impose a
parameter-dependent generator structure or learn an explicit operator
\(\mathbf{A}(\mu)\) for each parameter value.

We state the online map explicitly, since it is not the monolithic operator introduced above.
At a test parameter $\mu$ the interpolant of \cref{eq:pdmd_interp} supplies the predicted reduced
coefficients $\{\hat{\mathbf z}_k(\mu)\}_{k=0}^{N_t-1}$, and the one-step operator advanced online is
the least-squares map that best propagates that parameter's own coefficient sequence,
\begin{equation}
    \tilde{\mathbf{A}}(\mu)
    =\arg\min_{\mathbf{A}}\ \sum_{k=0}^{N_t-2}
      \big\|\hat{\mathbf z}_{k+1}(\mu)-\mathbf{A}\,\hat{\mathbf z}_k(\mu)\big\|_2^{2}
    =\hat{\mathbf{Z}}_{1:}(\mu)\,\hat{\mathbf{Z}}_{:-1}(\mu)^{+},
    \label{eq:pdmd_online_operator}
\end{equation}
where $\hat{\mathbf{Z}}_{:-1}(\mu)$ and $\hat{\mathbf{Z}}_{1:}(\mu)$ collect the predicted
coefficients at the first and last $N_t-1$ times and $(\cdot)^{+}$ is the pseudoinverse. The filter of
\cref{subsec:sensing} then advances
\begin{equation}
    \hat{\mathbf z}_{k+1}=\tilde{\mathbf{A}}(\mu)\,\hat{\mathbf z}_k,
    \qquad
    h(\hat{\mathbf z}_k)=\mathcal S\,\mathbf{U}_r\,\hat{\mathbf z}_k ,
\end{equation}
so the test parameter enters the online phase through the forecast operator
$\mathbf{F}_k\equiv\tilde{\mathbf{A}}(\mu)$, which is constant in $k$ but depends on $\mu$, while the
reduced basis $\mathbf{U}_r$ is the global one shared across parameters. The observation map is
therefore affine in $\hat{\mathbf z}$ and the curvature constants of
\cref{subsec:idealized_filtering_perspective} vanish for this baseline, $M_f=M_h=0$. The monolithic
operator on $\mathbb{R}^{N_\mu r}$ belongs to the offline fit and does not enter this recursion. The
operator norms reported for pDMD in \cref{tab:opnorm} are the spectral norms of
$\tilde{\mathbf{A}}(\mu)$.
The open-loop pDMD results reported in \cref{sec:results} apply this same
$\tilde{\mathbf{A}}(\mu)$ recursively from the initial reduced state, exactly as the filter forecast
does; the interpolated sequence $\{\mathcal{I}_k(\mu)\}$ is not replayed directly. Forward run and
assimilation therefore differ only in whether the measurement update is applied, which is the
comparison made for every other model. The remaining fit choices are as follows: the monolithic DMD
and the global POD basis are both truncated at $r$, matching the latent dimension of the learned
models; the parameter interpolant $\mathcal{I}_k$ is a radial basis function interpolant with the
default multiquadric kernel; and the pseudoinverse in \cref{eq:pdmd_online_operator} is the standard
least-squares pseudoinverse at default tolerance, computed on an $r\times(N_t-1)$ coefficient matrix
that is well conditioned at the values of $r$ and $N_t$ used here.
\subsection{Multilayer perceptron}\label{app:mlp}

The MLP baseline uses a fully connected nonlinear transition map,
\begin{equation}
    \hat{\mathbf{z}}_{k+1}=f_{\theta}^{\mathrm{MLP}}(\hat{\mathbf{z}}_k,\mu).
\end{equation}
It is implemented as a fully connected network acting on $[\,\hat{\mathbf{z}}_k;\mathbf{s}(\mu)\,]$, the concatenation of the latent state with the learned parameter embedding of \cref{eq:re_embed}. This is the simplest nonlinear Markovian baseline and is included to test whether a more expressive local transition model yields better sensing performance once filtering is applied. The hyperparameters of this latent model are summarized in \cref{tab:mlp_latent_hparams}.

\begin{table}[!htbp]
    \centering
    \small
    \caption{Hyperparameters of the MLP latent dynamics model.}
    \label{tab:mlp_latent_hparams}
    \renewcommand{\arraystretch}{1.2}
    \setlength{\tabcolsep}{4pt}
    \begin{tabular}{p{3.2cm} p{3.1cm} p{5.0cm}}
        \toprule
        \textbf{Component} & \textbf{Architecture / output} & \textbf{Hyperparameters} \\
        \midrule
        Latent dynamics $f_{\theta}^{\mathrm{MLP}}(\mathbf{z}_t, \mu)$
        & MLP: $\mathbb{R}^{r}\times\mathbb{R}^{r}\to\mathbb{R}^{r}$, input $[\mathbf{z}_t;\mathbf{s}(\mu)]$
        & Hidden layers: $L_f = 4$; hidden width: $105$; activation: GELU. \\
        \bottomrule
    \end{tabular}
\end{table}
\subsection{xLSTM latent dynamics}\label{app:xlstm}

To include a nonlinear baseline with a different inductive bias from the MLP, we
also consider an xLSTM latent model
\cite{beck2024xlstm}.The implementation uses the reference \texttt{xLSTMBlockStack} configured with mLSTM blocks
only (no sLSTM blocks) and a context length of one. The recurrent state is therefore re-initialized at
every forecast step and discarded afterwards, so no memory is carried between successive latent steps
and the model acts as a gated one-step transition rather than a sequence model. Writing the cell input
as the concatenation of the current latent code and the learned parameter embedding of \cref{eq:re_embed},
$\mathbf{u}_k=[\,\mathbf{z}_k;\mathbf{s}(\mu)\,]$, the realized transition is a map on $\mathbb{R}^{r}$, and it is
this map, not an augmented recursion over $(\mathbf{h},\mathbf{c},\mathbf{n},\mathbf{m})$, that is
trained, rolled out and filtered. The signature recorded in \cref{tab:xlstm_latent_hparams} is the
operative one, and the baseline should be read as a gated one-step transition rather than a recurrent
model.

The implemented cell is the mLSTM variant of \cite{beck2024xlstm}, whose
matrix memory and covariance-style update differ from the sLSTM scalar-memory recursion; we therefore
do not reproduce the sLSTM equations here. The configuration is the reference
\texttt{xLSTMBlockStack} with \texttt{mlstm\_block} only and \texttt{slstm\_at}$=[\,]$, one block,
embedding dimension $32$, four heads, causal-convolution kernel size $4$ and query--key--value
projection block size $4$, at \texttt{context\_length}$=1$. Because the context length is one and the
recurrent state is re-initialized at every call, the memory contribution from any previous step is
absent and the realized transition is the one-step map
\begin{equation}
    \mathbf{z}_{k+1}=f_{\theta}^{\mathrm{xLSTM}}(\mathbf{z}_k,\mu)
    =\mathbf{W}_{\mathrm{out}}\,\mathrm{mLSTM}_{\theta}\!\big({\mathbf{W}_{\mathrm{in}}[\,\mathbf{z}_k;\mathbf{s}(\mu)\,]+\mathbf{b}_{\mathrm{in}}}\big)
    +\mathbf{b}_{\mathrm{out}},
\end{equation}
on $\mathbb{R}^{r}$, where $\mathbf{W}_{\mathrm{in}}$ and $\mathbf{b}_{\mathrm{in}}$ project the concatenated input to the embedding dimension, which is what is trained, rolled out and filtered. This is the map whose Jacobian
enters the EKF and whose norm is reported in \cref{tab:opnorm}. The mLSTM normalization involves a
thresholded maximum, so the $C^{2}$ hypothesis of
\cref{lem:local_ekf_error_decomposition} should be read as holding away from the switching surfaces
rather than globally.

This model is more expressive than the Koopman and pDMD baselines because the
update map is nonlinear and state dependent. The hyperparameters of this latent model are summarized in \cref{tab:xlstm_latent_hparams}.

\begin{table}[!htbp]
    \centering
    \small
    \caption{Hyperparameters of the xLSTM latent dynamics model. The signature $\mathbb{R}^{r}\times\mathbb{R}^{r}\to\mathbb{R}^{r}$ is the one used: the cell is evaluated over a single timestep, so no recurrent state persists between latent steps.}
    \label{tab:xlstm_latent_hparams}
    \renewcommand{\arraystretch}{1.2}
    \setlength{\tabcolsep}{4pt}
    \begin{tabular}{p{3.2cm} p{3.1cm} p{5.0cm}}
        \toprule
        \textbf{Component} & \textbf{Architecture / output} & \textbf{Hyperparameters} \\
        \midrule
        Latent dynamics $f_{\theta}^{\mathrm{xLSTM}}(\tilde{\mathbf{z}}_t, \mathbf{u}_t)$
        & xLSTM: $\mathbb{R}^{r}\times\mathbb{R}^{r}\to\mathbb{R}^{r}$
        & Embedding dim: $32$; number of heads: $H = 4$; number of xLSTM blocks: $B = 1$. \\
        \bottomrule
    \end{tabular}
\end{table}
\section{Proofs of the theoretical results}
\label{app:idealized_filtering_lemmas}

This appendix collects the proofs of the results stated in \cref{subsec:idealized_filtering_perspective}. The notation and norm conventions of \cref{subsec:idealized_filtering_perspective} are used throughout, and all results remain local and idealized in the sense discussed there.

\begin{proof}[Proof of \cref{lem:local_ekf_error_decomposition}]
Since
\begin{equation}
    \mathbf{z}_k=\hat{\mathbf{z}}_k^a+\mathbf{e}_k^a,
\end{equation}
Taylor's theorem at $\hat{\mathbf{z}}_k^a$ gives
\begin{equation}
    f(\mathbf{z}_k)
    =
    f(\hat{\mathbf{z}}_k^a)+Df(\hat{\mathbf{z}}_k^a)\mathbf{e}_k^a+\boldsymbol{\rho}_k^f.
\end{equation}
The integral form of the remainder is, componentwise,
\begin{equation}
    \big(\boldsymbol{\rho}_k^f\big)_i
    =
    \int_0^1(1-s)\,
    (\mathbf{e}_k^a)^{\top}\,\nabla^{2}f_{i}\big(\hat{\mathbf{z}}_k^a+s\,\mathbf{e}_k^a\big)\,\mathbf{e}_k^a \, ds,
    \qquad i=1,\ldots,r.
\end{equation}
Collecting components, applying the triangle inequality for integrals, and using the curvature bound on the segment $\hat{\mathbf{z}}_k^a+s\,\mathbf{e}_k^a\in\Omega$,
\begin{equation}
    \|\boldsymbol{\rho}_k^f\|
    \le
    \int_0^1(1-s)M_f\|\mathbf{e}_k^a\|^2\,ds
    =
    \frac{M_f}{2}\|\mathbf{e}_k^a\|^2.
\end{equation}
Using the defect identity $\mathbf{z}_{k+1}=f(\mathbf{z}_k)+\boldsymbol{\delta}_k$ and $\hat{\mathbf{z}}_{k+1}^f=f(\hat{\mathbf{z}}_k^a)$, we obtain
\begin{align}
    \mathbf{e}_{k+1}^f
    &=\mathbf{z}_{k+1}-\hat{\mathbf{z}}_{k+1}^f \\
    &=f(\mathbf{z}_k)+\boldsymbol{\delta}_k-f(\hat{\mathbf{z}}_k^a) \\
    &=Df(\hat{\mathbf{z}}_k^a)\mathbf{e}_k^a+\boldsymbol{\delta}_k+\boldsymbol{\rho}_k^f \\
    &=\mathbf{F}_k\mathbf{e}_k^a+\boldsymbol{\delta}_k+\boldsymbol{\rho}_k^f.
\end{align}

Next, since
\begin{equation}
    \mathbf{z}_{k+1}=\hat{\mathbf{z}}_{k+1}^f+\mathbf{e}_{k+1}^f,
\end{equation}
Taylor's theorem at $\hat{\mathbf{z}}_{k+1}^f$ yields
\begin{equation}
    h(\mathbf{z}_{k+1})
    =
    h(\hat{\mathbf{z}}_{k+1}^f)
    +Dh(\hat{\mathbf{z}}_{k+1}^f)\mathbf{e}_{k+1}^f
    +\boldsymbol{\rho}_{k+1}^h,
\end{equation}
with
\begin{equation}
    \|\boldsymbol{\rho}_{k+1}^h\|
    \le
    \frac{M_h}{2}\|\mathbf{e}_{k+1}^f\|^2.
\end{equation}
Therefore the innovation is
\begin{align}
    \mathbf{y}_{k+1}-h(\hat{\mathbf{z}}_{k+1}^f)
    &=h(\mathbf{z}_{k+1})+\tilde{\boldsymbol{\eta}}_{k+1}-h(\hat{\mathbf{z}}_{k+1}^f) \\
    &=\mathbf{J}_{k+1}\mathbf{e}_{k+1}^f+\boldsymbol{\rho}_{k+1}^h+\tilde{\boldsymbol{\eta}}_{k+1}.
\end{align}
Substituting this identity into the EKF update gives
\begin{align}
    \hat{\mathbf{z}}_{k+1}^a
    &=
    \hat{\mathbf{z}}_{k+1}^f
    +\mathcal{K}_{k+1}
    \bigl(\mathbf{J}_{k+1}\mathbf{e}_{k+1}^f+\boldsymbol{\rho}_{k+1}^h+\tilde{\boldsymbol{\eta}}_{k+1}\bigr).
\end{align}
Thus
\begin{align}
    \mathbf{e}_{k+1}^a
    &=\mathbf{z}_{k+1}-\hat{\mathbf{z}}_{k+1}^a \\
    &=\mathbf{e}_{k+1}^f
      -\mathcal{K}_{k+1}
       \bigl(\mathbf{J}_{k+1}\mathbf{e}_{k+1}^f+\boldsymbol{\rho}_{k+1}^h+\tilde{\boldsymbol{\eta}}_{k+1}\bigr) \\
    &=(\mathbf{I}-\mathcal{K}_{k+1}\mathbf{J}_{k+1})\mathbf{e}_{k+1}^f
      -\mathcal{K}_{k+1}\boldsymbol{\rho}_{k+1}^h
      -\mathcal{K}_{k+1}\tilde{\boldsymbol{\eta}}_{k+1}.
\end{align}
For $f_\mu(\mathbf{z})=\mathbf{K}(\mu;\Delta t)\mathbf{z}$, the second derivative is zero, so the forecast remainder vanishes identically and $\mathbf{F}_k=\mathbf{K}(\mu;\Delta t)$ for all $k$. The observation statement follows because a nonlinear decoder generally has nonzero second derivative. This proves the lemma.
\end{proof}

\begin{proof}[Proof of \cref{lem:ekf_covariance_nonamplification}]
Both inequalities are standard properties of the Kalman covariance recursion~\cite{jazwinski1970stochastic,anderson1979optimal}; we record the short argument for completeness. The forecast inequality follows directly from submultiplicativity of the spectral norm:
\begin{align}
    \|\mathbf{P}_{k+1}^f\|
    &=\|\mathbf{K}\mathbf{P}_k^a\mathbf{K}^\top+\boldsymbol{\Sigma}_{\xi}\| \\
    &\le \|\mathbf{K}\|^2\|\mathbf{P}_k^a\|+\|\boldsymbol{\Sigma}_{\xi}\| \\
    &\le \|\mathbf{P}_k^a\|+\|\boldsymbol{\Sigma}_{\xi}\|.
\end{align}

For the update step, define
\begin{equation}
    \mathbf{S}_{k+1}=\mathbf{J}_{k+1}\mathbf{P}_{k+1}^f\mathbf{J}_{k+1}^\top+\boldsymbol{\Sigma}_{\eta}.
\end{equation}
Since $\boldsymbol{\Sigma}_{\eta}\succ 0$ and $\mathbf{P}_{k+1}^f\succeq 0$, we have $\mathbf{S}_{k+1}\succ 0$. The standard Kalman covariance update can be written as
\begin{equation}
    \mathbf{P}_{k+1}^a
    =
    \mathbf{P}_{k+1}^f
    -\mathbf{P}_{k+1}^f\mathbf{J}_{k+1}^\top \mathbf{S}_{k+1}^{-1}\mathbf{J}_{k+1}\mathbf{P}_{k+1}^f.
\end{equation}
The subtracted term is positive semidefinite because
\begin{equation}
    \mathbf{P}_{k+1}^f\mathbf{J}_{k+1}^\top \mathbf{S}_{k+1}^{-1}\mathbf{J}_{k+1}\mathbf{P}_{k+1}^f
    =
    \left(\mathbf{S}_{k+1}^{-1/2}\mathbf{J}_{k+1}\mathbf{P}_{k+1}^f\right)^\top
    \left(\mathbf{S}_{k+1}^{-1/2}\mathbf{J}_{k+1}\mathbf{P}_{k+1}^f\right).
\end{equation}
Thus
\begin{equation}
    \mathbf{P}_{k+1}^a\preceq \mathbf{P}_{k+1}^f.
\end{equation}
The positive semidefiniteness $\mathbf{P}_{k+1}^a\succeq 0$ follows from the Schur complement of the joint covariance matrix
\begin{equation}
    \begin{bmatrix}
        \mathbf{P}_{k+1}^f & \mathbf{P}_{k+1}^f\mathbf{J}_{k+1}^\top \\
        \mathbf{J}_{k+1}\mathbf{P}_{k+1}^f & \mathbf{J}_{k+1}\mathbf{P}_{k+1}^f\mathbf{J}_{k+1}^\top+\boldsymbol{\Sigma}_{\eta}
    \end{bmatrix}\succeq 0,
\end{equation}
whose Schur complement with respect to $\mathbf{S}_{k+1}$ is precisely $\mathbf{P}_{k+1}^a$. Therefore
\begin{equation}
    0\preceq \mathbf{P}_{k+1}^a\preceq \mathbf{P}_{k+1}^f.
\end{equation}
For positive semidefinite matrices, Loewner monotonicity implies monotonicity of the spectral norm, and hence
\begin{equation}
    \|\mathbf{P}_{k+1}^a\|\le \|\mathbf{P}_{k+1}^f\|.
\end{equation}
This proves the lemma.
\end{proof}

\begin{proof}[Proof of \cref{cor:generator_nonexpansive}]
For each fixed parameter $\mu$, write
\begin{equation}
    \mathbf{A}(\mu)=\mathbf{B}(\mu)-\mathbf{B}(\mu)^\top-\mathbf{D}(\mu),
    \qquad
    \mathbf{D}(\mu)=\operatorname{diag}(d_1(\mu)^2,\ldots,d_r(\mu)^2)\succeq 0,
\end{equation}
where $\mathbf{B}(\mu)=\mathbf{Q}(e_\omega(\mu))$ is the reshaped output of the conservative map in \cref{subsec:koopman_model}. Then
\begin{equation}
    \mathbf{A}(\mu)+\mathbf{A}(\mu)^\top=-2\mathbf{D}(\mu)\preceq 0.
\end{equation}
Let $\mathbf{z}(t)$ solve the continuous-time linear system
\begin{equation}
    \dot{\mathbf{z}}(t)=\mathbf{A}(\mu)\mathbf{z}(t).
\end{equation}
Then
\begin{align}
    \frac{d}{dt}\|\mathbf{z}(t)\|^2
    &=\mathbf{z}(t)^\top\bigl(\mathbf{A}(\mu)+\mathbf{A}(\mu)^\top\bigr)\mathbf{z}(t) \\
    &=-2\mathbf{z}(t)^\top \mathbf{D}(\mu)\mathbf{z}(t) \\
    &\le 0.
\end{align}
Therefore $\|\mathbf{z}(t)\|\le \|\mathbf{z}(0)\|$ for all $t\ge 0$. Since $\mathbf{z}(t)=e^{\mathbf{A}(\mu)t}\mathbf{z}(0)$, it follows that
\begin{equation}
    \|e^{\mathbf{A}(\mu)t}\|\le 1,
    \qquad t\ge 0.
\end{equation}
This is an instance of the classical logarithmic-norm bound $\|e^{\mathbf{A}t}\|\le e^{\mu_2(\mathbf{A})t}$ with $\mu_2(\mathbf{A})=\lambda_{\max}\big((\mathbf{A}+\mathbf{A}^{\top})/2\big)\le 0$~\cite{soderlind2006logarithmic}.
\end{proof}

The corollary should not be invoked if the discrete map is learned directly without the generator structure, if numerical approximations break the contraction property, or if the parameter varies during the forecast step in a way not represented by a fixed generator.

\bibliographystyle{apsrev4-2}
\bibliography{dks}
\end{document}